\title{\textbf{The Classification Permutation Test: A Nonparametric Test for Equality of Multivariate Distributions}
}
\author{ \\  \Large 
Johann Gagnon-Bartsch\thanks{Johann Gagnon-Bartsch: 
Department of Statistics, University of Michigan,  Ann Arbor, MI, USA; \texttt{johanngb@umich.edu}.}
 \qquad \quad \quad 
Yotam Shem-Tov\thanks{Yotam Shem-Tov: 
Department  of  Economics,
University  of  California Berkeley,  530  Evans  Hall,  Berkeley,  CA  94720,  USA; \texttt{shemtov@berkeley.edu}.
\newline
\indent
\hspace{1.5mm}
We thank Isaiah Andrews, David Card, Avi Feller, Brian Graham, Patrick Kline, Edward Miguel, Demian Pouzo, Jasjeet Sekhon, Juliana Londo\~{n}o V\'{e}lez, and participants at the UC Berkeley Development Lunch Seminar and the Political Economy Lunch Seminar for helpful discussions and comments. 
}}
\date{\today}
\newcommand{\indep}{\rotatebox[origin=c]{90}{$\models$}}
\newcommand{\thth}{^{\mathrm{th}}}
\newcommand{\given}{ \, \middle| \, }
\newtheorem{lemma}{Lemma}
\newcommand{\Rsymb}{{\bf{\textsf{R} }}}
\begin{document}
\doublespacing

\newtheorem{assumption}{Assumption}
\newtheorem{theorem}{Theorem}
\newtheorem{proposition}{Proposition}
\newtheorem{corollary}{Corollary}
\newtheorem{definition}{Definition}

\begin{titlepage}
 \maketitle
  \begin{abstract}
\onehalfspacing

The gold standard for identifying causal relationships is a randomized controlled experiment.
In many applications in the social sciences and medicine, the researcher does not control the assignment mechanism and instead may rely upon natural experiments, 
regression discontinuity designs, RCTs with attrition, or matching methods as a substitute to experimental  randomization. 
The standard testable implication of random assignment is covariate balance between 
the treated and control units.  Covariate balance is therefore commonly used to validate the claim of ``as-if'' random assignment.  We develop a new nonparametric test of covariate balance.  Our Classification Permutation Test (CPT) is based on a combination of classification methods (e.g. logistic regression or random forests) with Fisherian permutation inference.  The CPT is guaranteed to have correct coverage and is consistent under weak assumptions on the chosen classifier. To illustrate the gains of using the CPT, we revisit four real data examples: \cite{lyall2009}; \cite{green2010}; \cite{eggers2009}; and \cite{rouse1995}.  Monte Carlo power simulations are used to compare the CPT to two existing nonparametric tests of equality of multivariate distributions. \\~\\

\textbf{JEL codes:} C18; C12; P16; K14.\\
\textbf{Keywords:} Multivariate distributions; Observational studies; Natural experiments; Regression discontinuity designs.   
\end{abstract}
\end{titlepage}


\section{Introduction}  \label{introduction}

Many applications in the social sciences, economics, biostatistics, and medicine argue for ``as-if'' random assignment of units to treatment regimes.  Examples include natural experiments, regression discontinuity designs, matching designs, and RCTs with attrition.  To support a claim of ``as-if'' random assignment, researchers typically demonstrate that the observed covariates are balanced between treatment and control units. Typically it is required to show that pre-treatment characteristics cannot predict future treatment status.  

This paper develops a nonparametric test that formalizes the question of whether the covariates can predict treatment status.  The test makes use of classification methods and permutation inference, and we name it the Classification Permutation Test (CPT).  The CPT trains a classifier (e.g., logistic regression, random forests) to distinguish treated units from control units.  Then, using permutation inference, the CPT tests whether the classifier is in fact able to distinguish treated units from control units more accurately than would be expected by chance. 

The CPT may be viewed as a test for equality of multivariate distributions, as it tests whether the joint distribution of the covariates is the same in both the treatment and control groups.  Several other nonparametric tests for equality of multivariate distributions have been proposed in the past.  \cite{rosenbaum2005} developed the Cross-Match test which compares two multivariate distributions using a matching algorithm.  First, the observations are matched into pairs, using a distance metric computed from the covariates (treatment status is ignored). The Cross-Match test statistic is then the number of matched pairs containing one observation from the treatment group and one from the control group; high values of the test statistic imply covariate balance, and for low values the null hypothesis of random assignment is rejected.  Applications and extensions of the Cross-Match test are described in \cite{heller2010} and \cite{heller2010b}.  \cite{szekely2009, szekely2009b} developed the energy test, another nonparametric test for equality of multivariate distributions. \cite{aronow2012} suggested using the energy test to test for covariate imbalance between groups. \cite{cattaneo2014} proposed a permutation based method for optimal window selection in a regression discontinuity design based on covariate balance on both sides of the cut-point. The method uses only information about the marginal distributions of the covariates, and therefore may not detect imbalances in the joint distribution.  Still other methods include \citet{heller2013} and \citet{taskinen2005}. 

This paper contributes to the existing literature in four ways:
(1) We show that the CPT is a useful tool in practice.  Using both simulated and real data, we find that the CPT is often able to detect covariate imbalance where existing nonparametric methods do not.  
(2) The paper illustrates how ``black box'' algorithms from the machine learning literature such as random forests can be used for rigorous inference in the social sciences, without actually relying on any strong modeling assumptions.  Classification methods and permutation inference have been previously combined in the computational biology literature \citep{ojala2010}.  
(3) We apply the CPT to make a substantive contribution to the political economy and criminal justice literatures. We revisit \cite{eggers2009} and shed new light on the validity of their regression discontinuity design, and provide new evidence in support of the ``judges design'' identification strategy used by \cite{green2010}.         
(4) The CPT has a clear and intuitive interpretation.  The test statistic is a direct measure of the ability of the covariates to predict treatment assignment.  Moreover, the CPT relates equality of multivariate distributions to the propensity score \citep{rosenbaum1983}.  Rejection of the null hypothesis implies the covariates are predictive of treatment assignment, or in other words that the distribution of the propensity score is different across the treatment and control groups.

The paper is organized as follows. Section \ref{method} provides a brief overview of the method.  Section \ref{simulations} examines the performance of the CPT on simulated data, and Section \ref{applications} looks at real-life data examples.  Section \ref{technical} provides further theoretical discussion, including a proof that the CPT is consistent under weak assumptions on the chosen classifier.  


\section{Overview of the Method}  \label{method}

This section gives an informal description of the CPT and a more detailed description is given in Section \ref{technical}.
Suppose there are $n$ units, indexed by $i$.  For each unit there is a vector of observed covariates $Z_i \in \mathbb{R}^m$ and a treatment assignment indicator $T_i \in \{0,1\}$.  (Presumably there is an outcome variable as well, but it is irrelevant for our purposes.) We model the $(Z_i, T_i)$ pairs as being IID from some unknown distribution.  Let $T$ be the $n\times 1$ vector whose $i\thth$ entry is $T_i$ and let $Z$ be the $n\times p$ matrix whose $i\thth$ row is $Z_i$.
We wish to test whether 
\begin{equation}
T \indep Z \label{tindz}
\end{equation} 
or whether treatment assignment is independent of the observed covariates.  This is our notion of ``random treatment assignment.''  

The CPT proceeds as follows.  First, we train a classifier to predict $T$ from $Z$.  The classifier can be anything --- logistic regression, a random forest, K-nearest neighbors, etc.  We only require that the classifier provide us with a $n\times 1$ vector $\hat{T}$ of ``predicted'' treatment assignments, where $\hat{T}_i \in \{0,1\}$.  We then define the \textit{in-sample classification accuracy rate} $S$ as 
\begin{equation}
S \equiv \frac{1}{n}\sum_{i=1}^nI\{\hat{T}_i = T_i\}
\end{equation}
where $I\{\hat{T}_i = T_i\}$ is the indicator function for whether $\hat{T}_i = T_i$.  We use $S$ as our test statistic; intuitively, $S$ should be high only if $Z$ is predictive of $T$, implying that $Z$ and $T$ are not independent.  

To determine statistical significance, we use permutation inference.  We randomly permute the rows of $T$ (but not $Z$) $B$ times.  Each time we retrain the classifier and recalculate the classification accuracy rate, which we denote $S^{\star}_b$, where $1 \le b \le B$.  We then calculate our $P$-value as
\begin{equation}
\frac{1}{B}\sum_{b=1}^BI\{S \ge S^{\star}_b\}
\end{equation}
where $I\{S \ge S^{\star}_b\}$ is the indicator function for whether $S \ge S^{\star}_b$.

A few comments: (1) Because we use permutation inference, the CPT's $P$-value is valid, even in finite samples, no matter what classifier we use.\footnote{Strictly speaking, this is true only as $B \to \infty$; for finite $B$, the distribtuion of $\{S^\star_b\}$ is only an approximation to the true permutation null distribution, and thus our $P$-value is only an approximation to the true permutation test $P$-value.} (2) In particular, the CPT's $P$-value is valid despite the fact that we use the in-sample classification accuracy rate.  Overfitting may occur, causing $S$ to be quite high, perhaps misleadingly so.  However, overfitting would cause the $S^\star_b$ to be high as well; thus, any overfitting problem is also manifested in the null distribution, and thereby effectively accounted for.  (3) The choice of classifier does affect the power of the test; the CPT will only have power if the classifier is able to distinguish the distribution of the covariates in the treatment group from the distribution of the covariates in the control group.  In this paper we focus on logistic regression (with all pairwise interaction terms included in the design matrix) and also random forests.  We select these classifiers because they are able to detect differences in the joint distribution of the covariates, as opposed to merely differences in the marginal distributions.  

In addition to the CPT as it is described above, we also consider some variants.  In one variant we replace the in-sample classification accuracy rate by an out-of-sample accuracy rate estimated by cross-validation.  This makes the CPT very computationally demanding, but gives it nice theoretical properties; see Section \ref{technical}.  In Section \ref{rouse} we consider a scenario in which the experimental units are blocked.  We implement a variant of the CPT in which we permute treatment assignment only within blocks.

\section{Simulations}  \label{simulations}

We use Monte Carlo simulations to study the power of the CPT, the Cross-Match test \citep{rosenbaum2005}, and the energy test \citep{szekely2009, szekely2009b}.  In each simulation we generate $n = 200$ observations; 100 are in treatment and 100 in control.  For each observation $i$ we generate a vector $Z_i$ of $m = 3$ covariates.  In the treatment group, the covariates are drawn from a multivariate normal distribution with mean 0 and variance $\Sigma_{\rho}$, where
\begin{equation}
\Sigma_{\rho} \equiv 
\left(\begin{array}{ccc}
1 & \rho & \rho \\
\rho & 1 & \rho \\
\rho & \rho & 1 \\
\end{array} \right).
\end{equation}
In the control group, the covariates are also drawn from a multivariate normal distribution with mean 0, but with variance $\Sigma_0 = I_{3\times 3}$.  In other words, the only difference in the distribution of the covariates between the treatment and control groups is the correlation.  In particular, the marginal distributions of the covariates are identical between the treatment and control groups.  Differences between treatment and control units cannot be detected using a balance table or a main effects regression.    

We vary the value of $\rho$ from 0 to 0.75 in increments of 0.05.  For each value of $\rho$ we generate 1,000 datasets as described above, and then run the CPT, Cross-Match test, and energy test on each dataset.  From this, we are able to approximate the power (at significance levels $\alpha = 0.05$ and $\alpha = 0.01$) of each test as a function of $\rho$.  Results are shown in Figure \ref{fig: power-test-main}.  In addition, a receiver operating characteristic (ROC) plot\footnote{See \cite{fawcett2006} for a description of ROC curves.} for $\rho = 0.5$ is shown in Figure \ref{fig: roc-test}.

Figure \ref{fig: power-test-main} shows the CPT has higher power for every level of $\rho$.  Figure \ref{fig: roc-test} shows that the CPT has a higher true positive rejection rate for every level of false rejection rate.  Together, Figures \ref{fig: power-test-main} and \ref{fig: roc-test} suggest the CPT typically outperforms the Cross-Match test and the Energy test with respect to power in this simulation. 

\begin{figure}[ht]
\centering
\caption{Power of the Cross-Match test, the energy test, and the CPT on simulated data.}
\label{fig: power-test-main}
\includegraphics[scale=0.38]{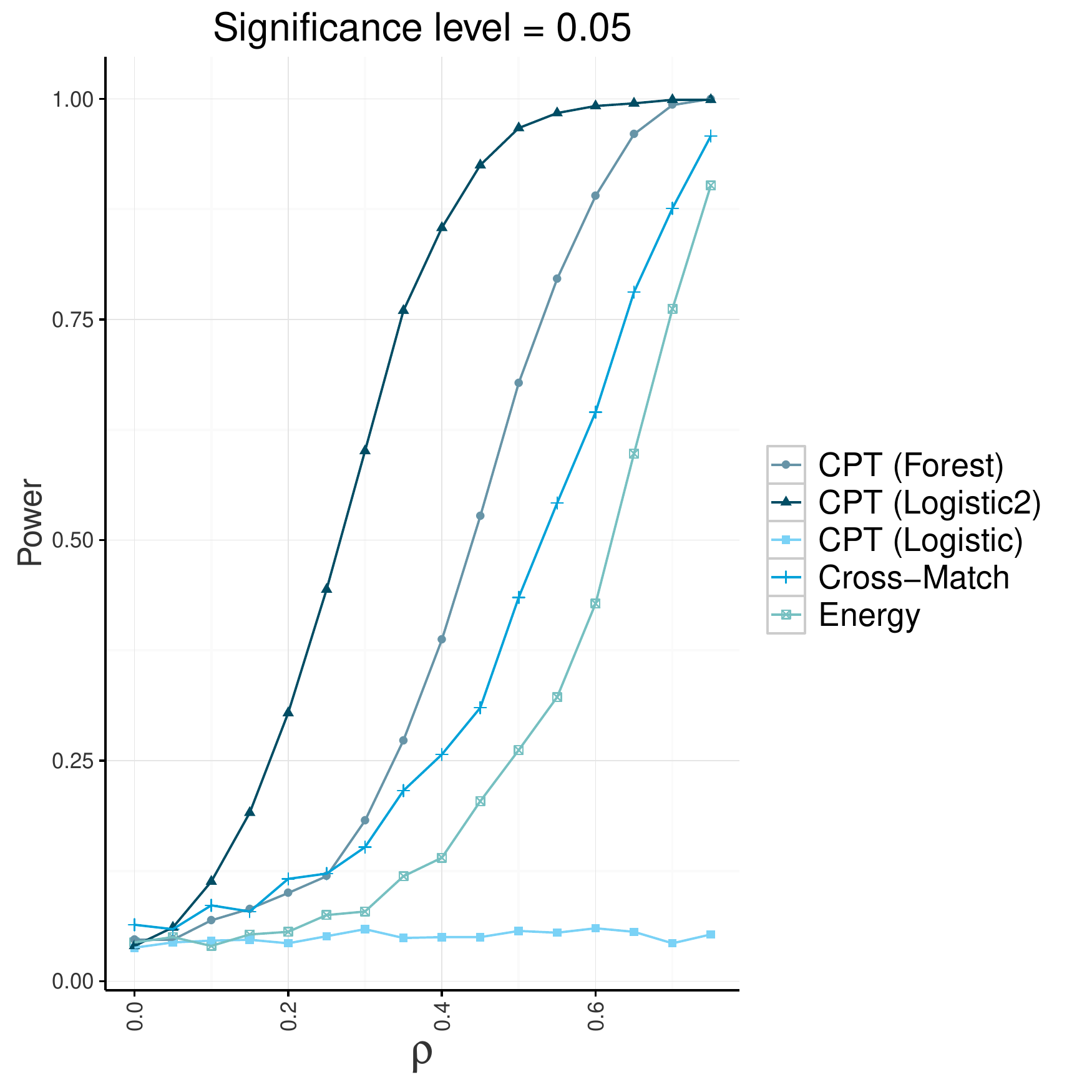} 
\includegraphics[scale=0.38]{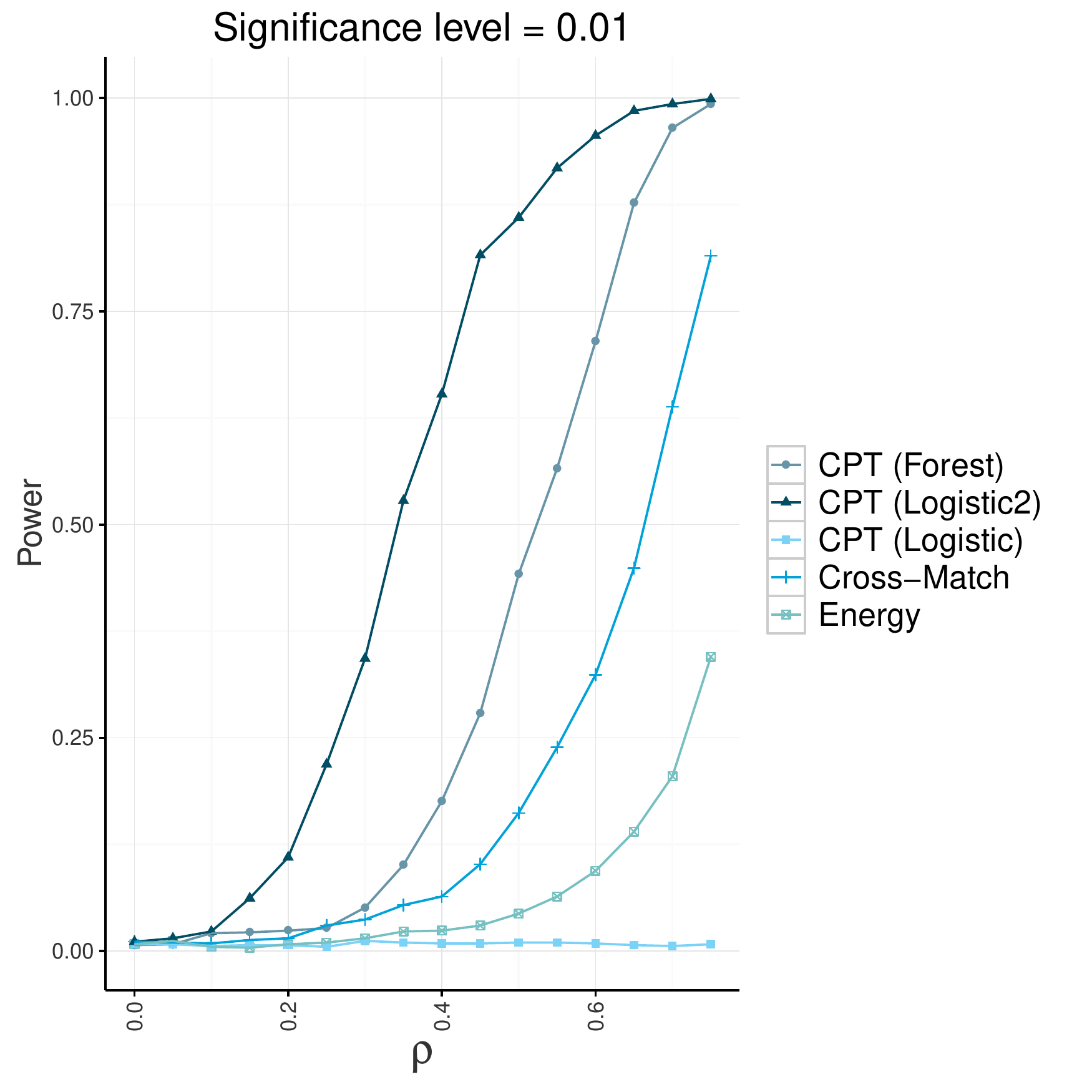}
\begin{minipage}{15cm}
\emph{Notes: Results for three variants of the CPT are shown; one variant (``logistic2'') uses a logistic regression classifier with all two-way interactions included in the model, and another (``forest'') uses random forests.  A third (``logistic'') uses logistic regression but does not include interaction terms; as expected, this version is unable to distinguish the two distributions.  We used $B = 500$ permutations in the calculation of the $P$-values.}
\end{minipage}
\end{figure} 

\begin{figure}[ht]
\centering
\caption{ROC curves (simulated data).}
\label{fig: roc-test}
\includegraphics[scale=0.38]{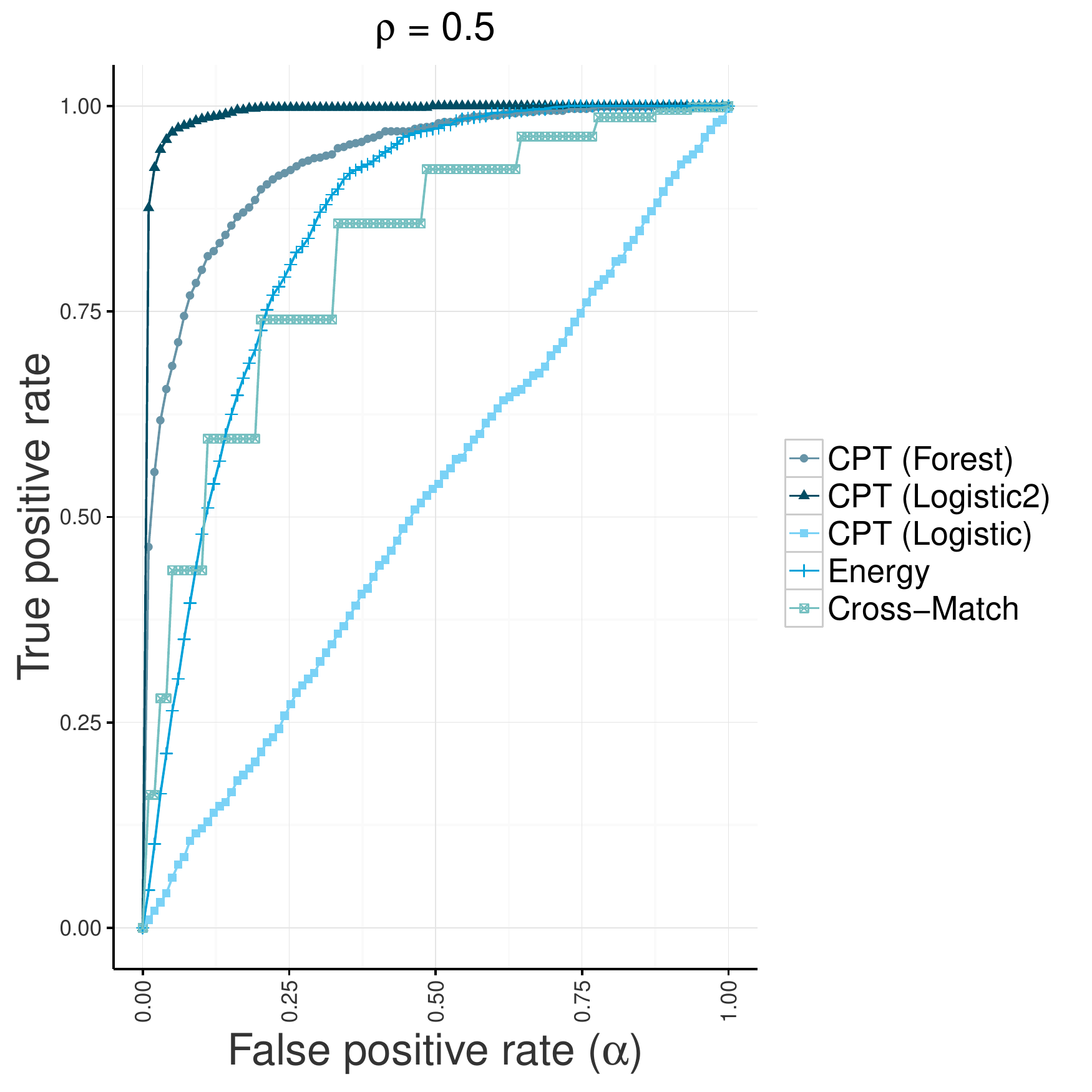}
\includegraphics[scale=0.38]{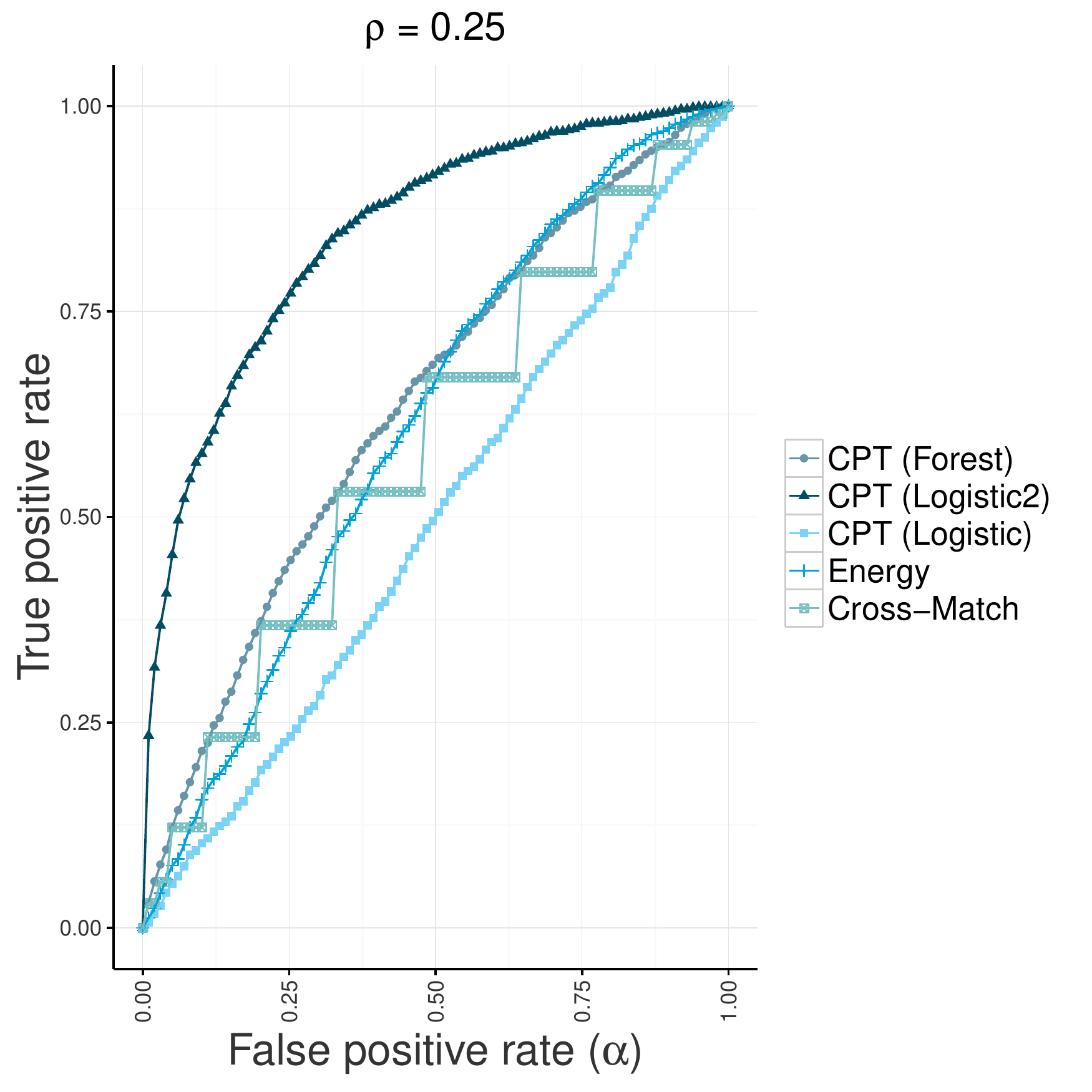}
\begin{minipage}{15cm}
\emph{Notes: See Figure \ref{fig: power-test-main}}.
\end{minipage}
\end{figure}


\clearpage
\section{Applications}  \label{applications}


\subsection{Indiscriminate Violence in Chechnya: \cite{lyall2009}}  \label{lyall}

\cite{lyall2009} investigates the effect of indiscriminate violence, specifically the bombing of villages in Chechnya.  Villages are the unit of analysis, and the outcome of interest is insurgent attacks. The identification strategy is a matching procedure that yields almost completely balanced treatment and control groups in all the marginal distributions.  See Figure \ref{fig: lyall-balance}.  Lyall also presents the results shown in Figure \ref{fig: lyall-balance}, and uses these results to support the claim of covariate balance.  

The CPT finds significant evidence of covariate imbalance between the treatment and the control groups. Figure \ref{fig: lyall-distro} shows the distribution of the CPT test statistic under the null and the observed test statistic. The null is clearly rejected. This example illustrates that looking only at the marginal distributions of the covariates is not sufficient.     

\begin{figure}[ht]
\caption{Covariate balance between treatment and control villages in \cite{lyall2009}.}
\label{fig: lyall-balance}
\centering
\includegraphics[scale=1]{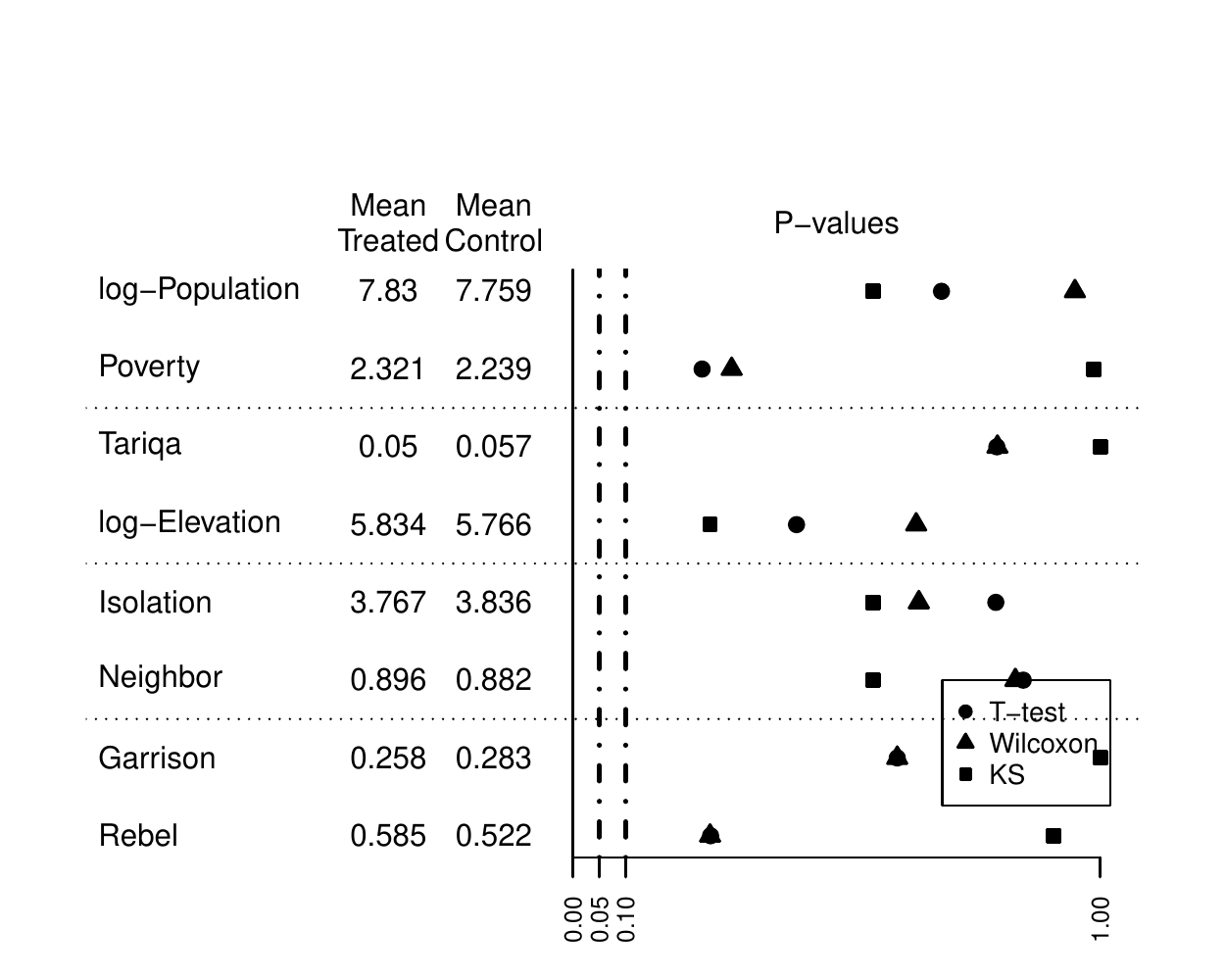}
\begin{minipage}{11cm}
\emph{Notes: Figure \ref{fig: lyall-balance} shows balance on each covariate separately. The points are P-values using t-test, Wilcoxon rank sum test and Kolmogorov-Smirnov (KS) tests.}    
\end{minipage}
\end{figure} 

\begin{figure}[ht]
\centering
\caption{Distribution of the CPT test statistic under the null hypothesis \newline} 
\label{fig: lyall-distro} 
\includegraphics[scale=0.48]{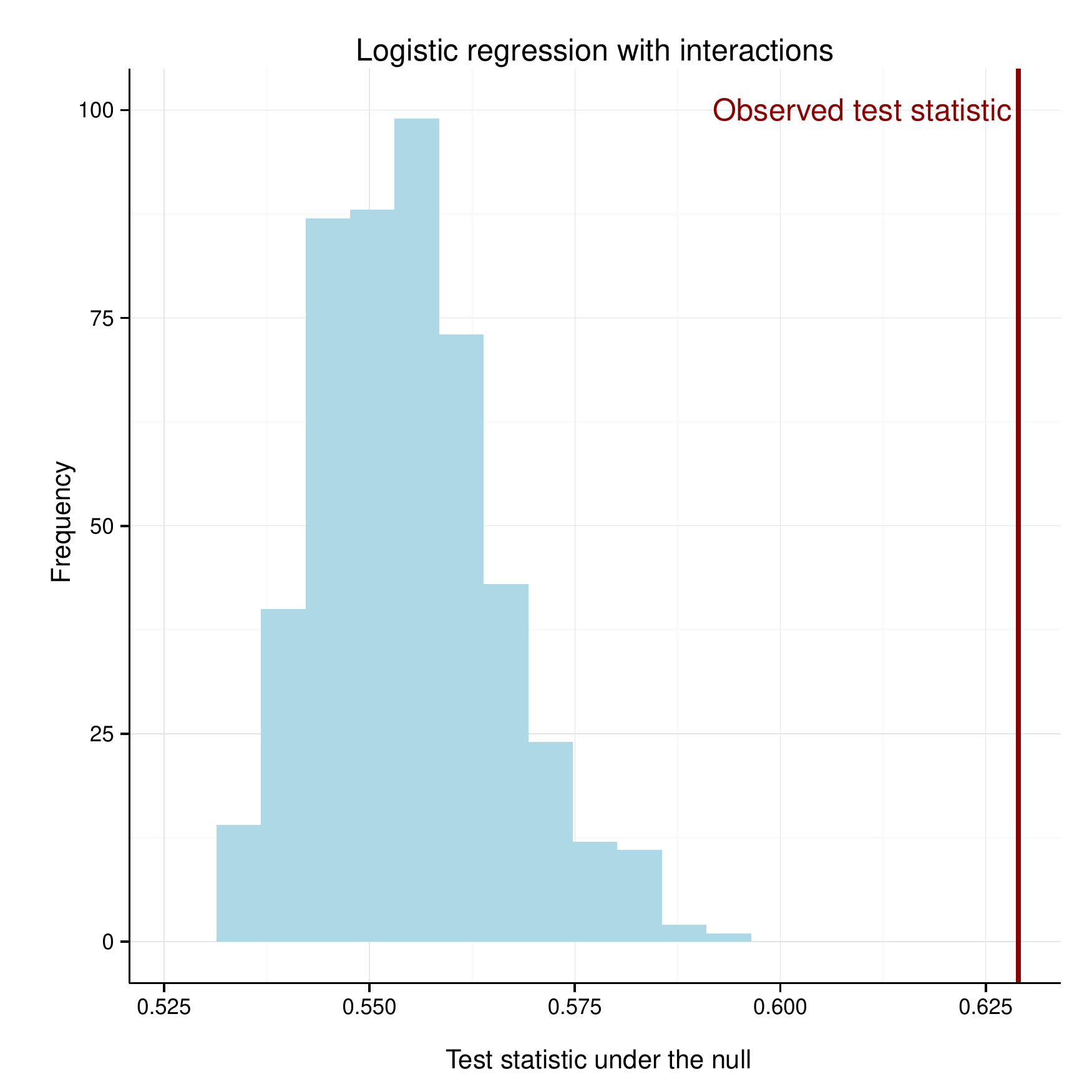} 
\includegraphics[scale=0.48]{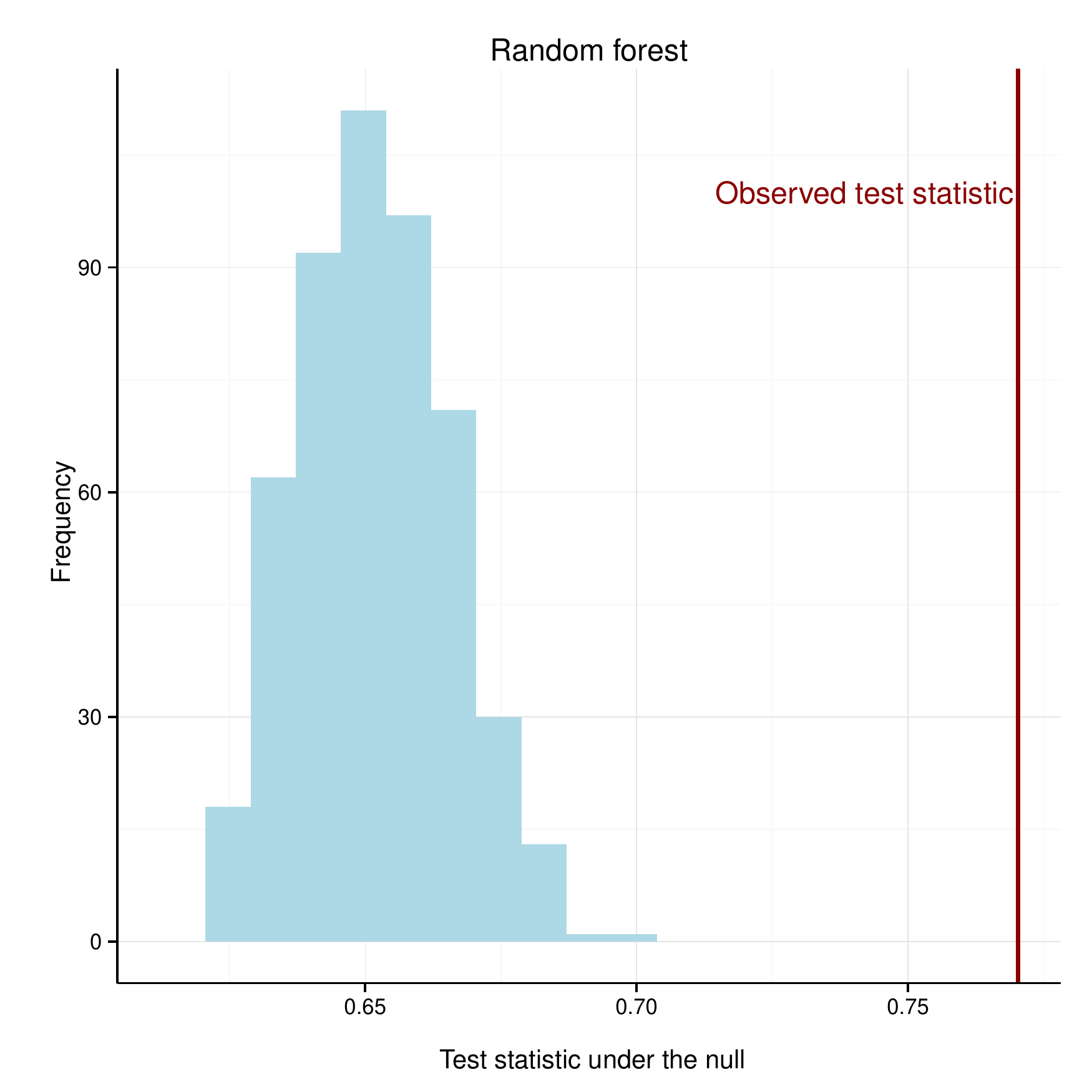}
\begin{minipage}{14cm}
\emph{\newline Notes: The figure shows the distribution of the CPT test statistic under the null hypothesis of random treatment assignment, and the observed test statistic. Results are shown for both logistic regression with all two-way interactions, and for random forests.}
\end{minipage}
\end{figure}


\subsection{Random assignment of defendants to judge calendars: \cite{green2010}} \label{judges}

\cite{green2010} studied the effect of incarceration length and probation length on recidivism. They argue that defendants are assigned as-if at random to different judge calendars, and that different judges have different punishment propensities. The data consists of a sample of $1,003$ felony drug defendants that are assumed to be randomly allocated between nine different judge calendars. The energy test ($P = 0.447 $) and the CPT both find no evidence of imbalance in the observed characteristics of the defendants across the nine judge calendars ($P = 0.115$). See Appendix \ref{appendix: data green and winik (2010)} for a list of all the observed covariates.  

\begin{figure}[ht]
\centering
\caption{The distribution of the estimated P-score using a main effects model and all two-way interactions model.   \newline}
\label{fig: distribution of P-score judge calendar 2}
\includegraphics[scale=0.45]{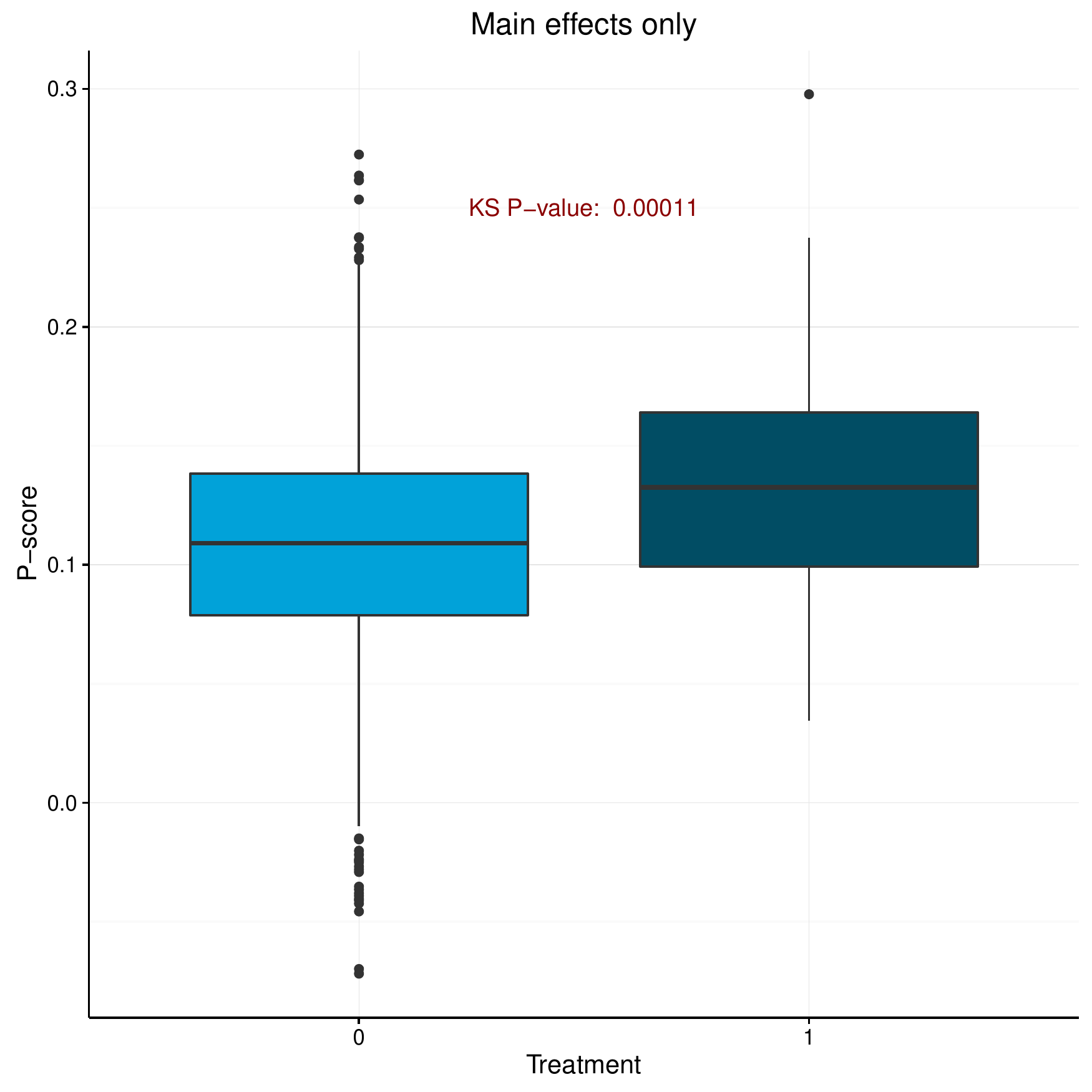} 
\includegraphics[scale=0.45]{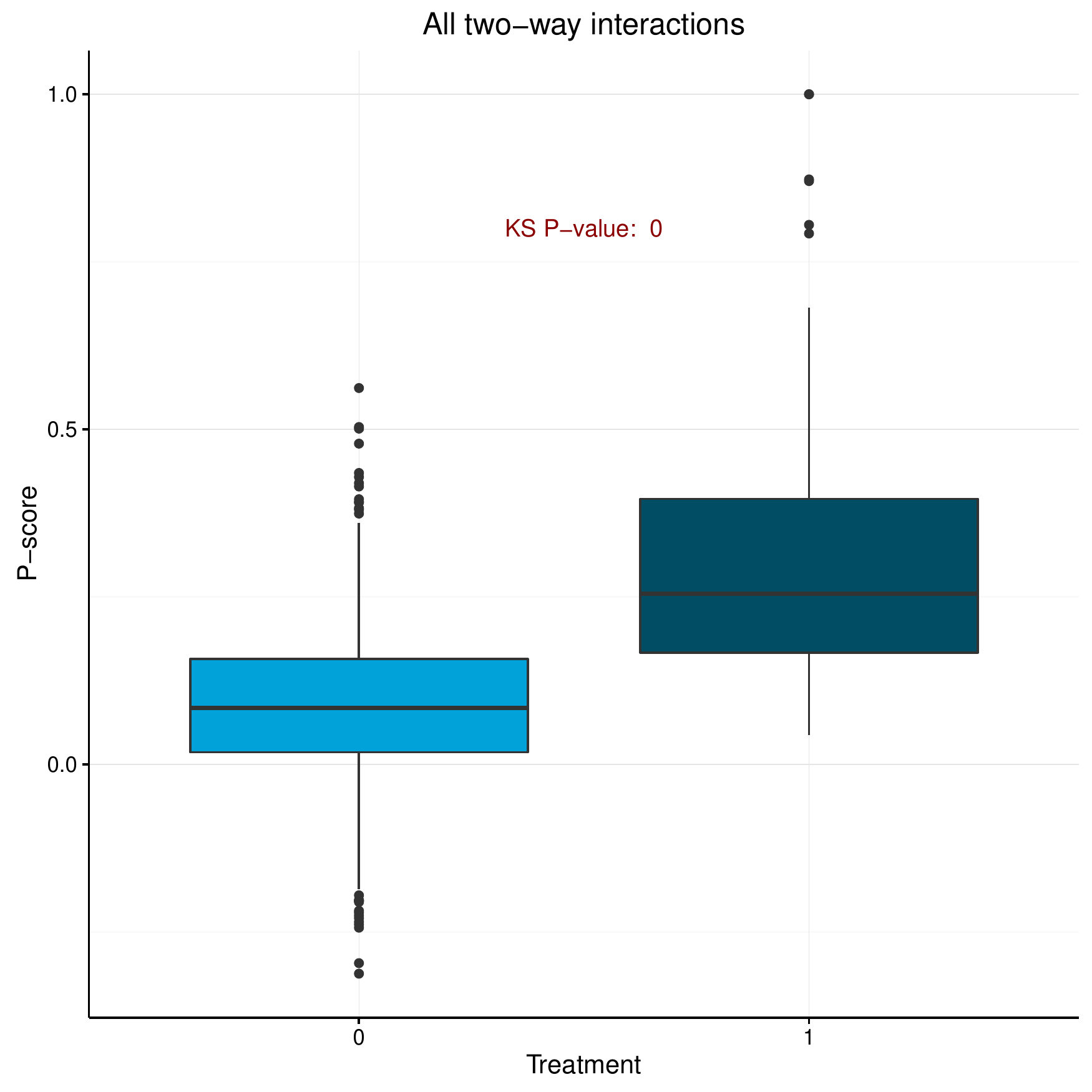} 
\end{figure}

An intuitive method for examining whether the observations in two groups are comparable in observable characteristics is to plot fitted propensity score ($e(Z)$) values, however this method can be sensitive to over-fitting issues.  Consider a binary indicator whether defendant $i$ was assigned to judge calendar $2$.\footnote{The choice of judge calendar $2$ is arbitrary and was motivated as an example that illustrates the issue of over-fitting a propensity score to the data. Other judge calendar choices are also possible, however our aim is not to deduce a statement on judge calendars, but rather to emphasize an estimation and testing issue.} We estimated $e(Z)$ using a logistic regression and plotted the fitted values, $\hat{e}(Z)$, among the treated (assigned to judge calendar $2$) and the control in Figure \ref{fig: distribution of P-score judge calendar 2}. The imbalance in the estimated propensity score could be the result of real differences in observable characteristics between the treated and control units or over-fitting of the logistic regression model to the observed data. The CPT does not find any difference in the observable characteristics between defendants assigned to judge calendar $2$ and the other defendants ($P = 0.241$). As the CPT re-estimates the logistic regression in each permutation it avoids any over-fitting issues and has finite sample exact coverage. 

The likelihood ratio test (LRT) from a logistic regression is a common alternative to the CPT or other permutation based tests. Table \ref{tab: type-I error rate judge calendar} shows the results of testing separately for each judge calendar whether defendants are randomly assigned or not using the LRT. The main effects logistic regression usually yields P-values that have correct coverage (i.e., Type-I error rate), however when including all two-way interactions the model over-fits the data and has incorrect coverage. This illustrates the over fitting problem of the LRT in finite samples. Next we investigate the finite sample performance of the LRT in this data application. 

Figure \ref{fig: judges Type-I error rate LRT} shows the distribution of the LRT P-values when the null hypothesis of random assignment is correct. We permuted the treatment at random and tested the null of random assignment. It is clear that the finite sample distribution of the over-fitted LRT P-value has incorrect Type-I error rates. The over-fitting problem of LRT in finite samples have been previously documented in the literature  \citep{hansen2008b}. 

\begin{table}[!htbp] \centering 
  \caption{The Likelihood Ratio Test P-values and Type-I error rates for each judge calendar dummy} 
  \label{tab: type-I error rate judge calendar} 
\begin{tabular}{@{\extracolsep{5pt}} ccccccc} 
\\[-1.8ex]\hline 
\hline \\[-1.8ex]
 & \multicolumn{3}{c}{\emph{Main effects only}} & 
\multicolumn{3}{c}{\emph{All two-way interactions}} \\ \\ 
 Judge calendar & P-value & Num. of coefficients & Type-I  & P-value & Num. of coefficients & Type-I  \\ 
\hline \\[-1.8ex] 
1 & $0.070$ & $22$ & $0.054$ & $0.001$ & $206$ & $0.354$ \\ 
2 & $0.210$ & $22$ & $0.069$ & $0.007$ & $206$ & $0.363$ \\ 
3 & $0.435$ & $22$ & $0.051$ & $0.123$ & $206$ & $0.343$ \\ 
4 & $0.852$ & $22$ & $0.063$ & $0.010$ & $206$ & $0.346$ \\ 
5 & $0.408$ & $22$ & $0.067$ & $0.129$ & $206$ & $0.339$ \\ 
6 & $0.767$ & $22$ & $0.067$ & $0.231$ & $206$ & $0.348$ \\ 
7 & $0.159$ & $22$ & $0.053$ & $0.017$ & $206$ & $0.354$ \\ 
8 & $0.917$ & $22$ & $0.066$ & $0.841$ & $206$ & $0.367$ \\ 
9 & $0.618$ & $22$ & $0.053$ & $0.090$ & $206$ & $0.363$ \\ 
\hline \\[-1.8ex] 
\end{tabular} 
\end{table}


\subsection{MPs for Sale: \cite{eggers2009}}

\cite{eggers2009} (henceforth EH) studied the effect of membership in the UK parliament on personal wealth. EH use a regression discontinuity design (RDD) in which candidates for parliament who just barely won an election are compared to candidates who just barely lost.  In a RDD the observations just above and just below the threshold are assumed to be comparable, with the same distribution of observed and unobserved characteristics \citep{caughy2011}.  Testable implications of a valid RDD include covariate balance and no manipulation around the winning threshold \citep{imbens2008, lee2010}.

The aim of this data application is to illustrate the performance of the CPT in a RDD setting.  To  begin, we cast doubt on the RDD used by EH.  We demonstrate manipulation of the running variable (vote share) around the cut-point.  We also find imbalance in the party identity close to the cut point. 
Second, we drop party identity from the covariate set, and examine how well the CPT succeeds at identifying an imbalance in observables using only the remaining covariates. This can be thought of as a power test of how well the CPT can identify that the RD design is not valid. 

One possible explanation for our findings is that the EH design breaks the RD pairs of barely winners and losers by comparing individuals who attempted to run a \emph{different} number of times across \emph{multiple} elections. For example, the barely winners (and losers) could have run several times before the first winning (or best losing) race, and those elections will be ignored in the EH design. If for example the design used only one election at time X, this issue would have not been a problem. The concern raises from a comparison across multiple elections of candidates that are not necessary comparable due to differences in the characteristics that motivate a candidate to continue trying to be elected after losing a race. 
If the two populations of candidates, barely winners and losers, are indeed different in observable (and non-observable) characteristics it can explain our findings.

Figure \ref{fig: Density histograms MPs} shows the distribution of the winning margin by party. There is clear evidence of manipulation around the winning threshold by the non-labour party candidates. The McCrary test \citep{mccrary2008} for manipulation around the cut-point finds significant evidence of manipulation. 

\begin{figure}[ht]
\centering
\caption{The distribution of the winning margin by party identity \newline}
\label{fig: Density histograms MPs}
\includegraphics[scale=.3]{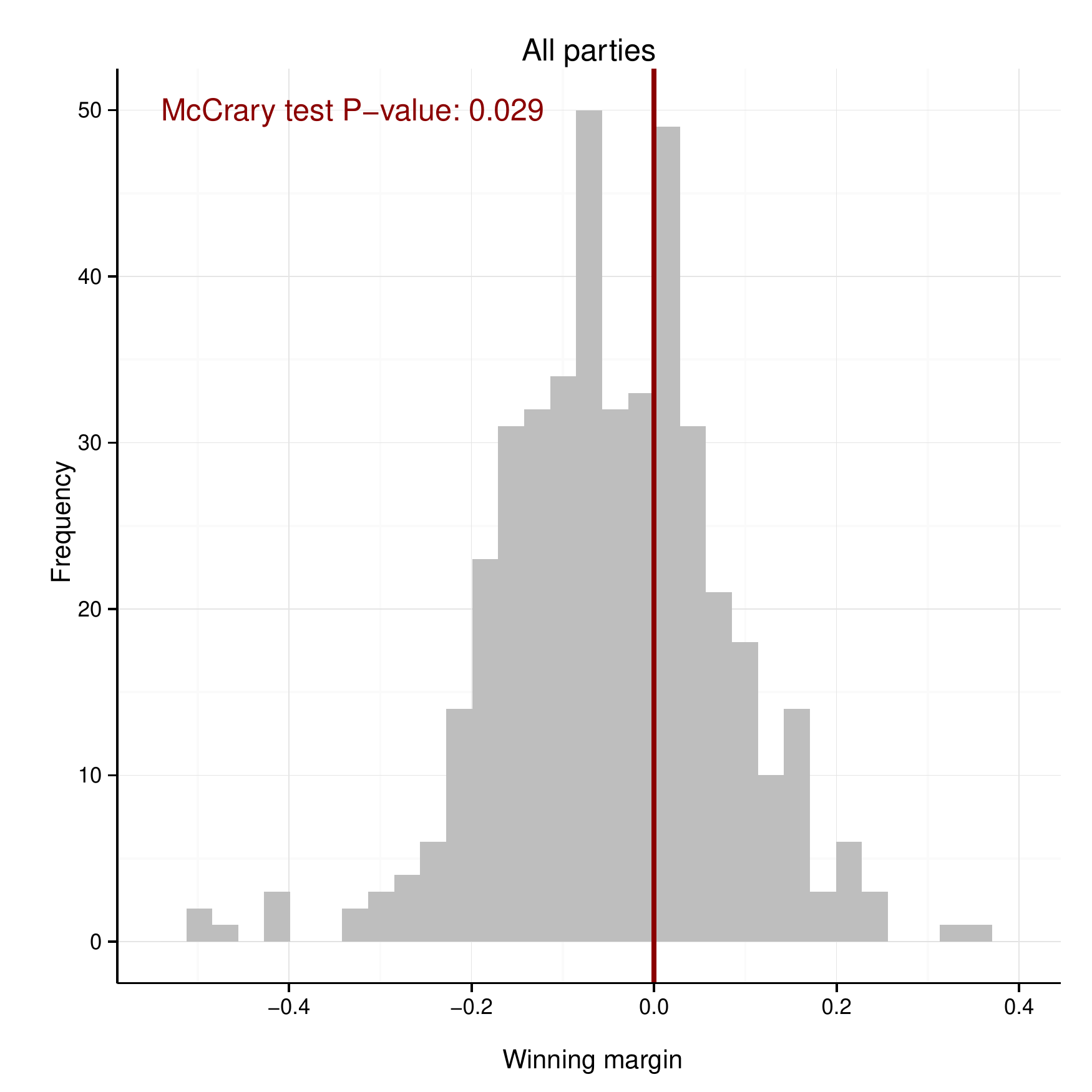}
\includegraphics[scale=.3]{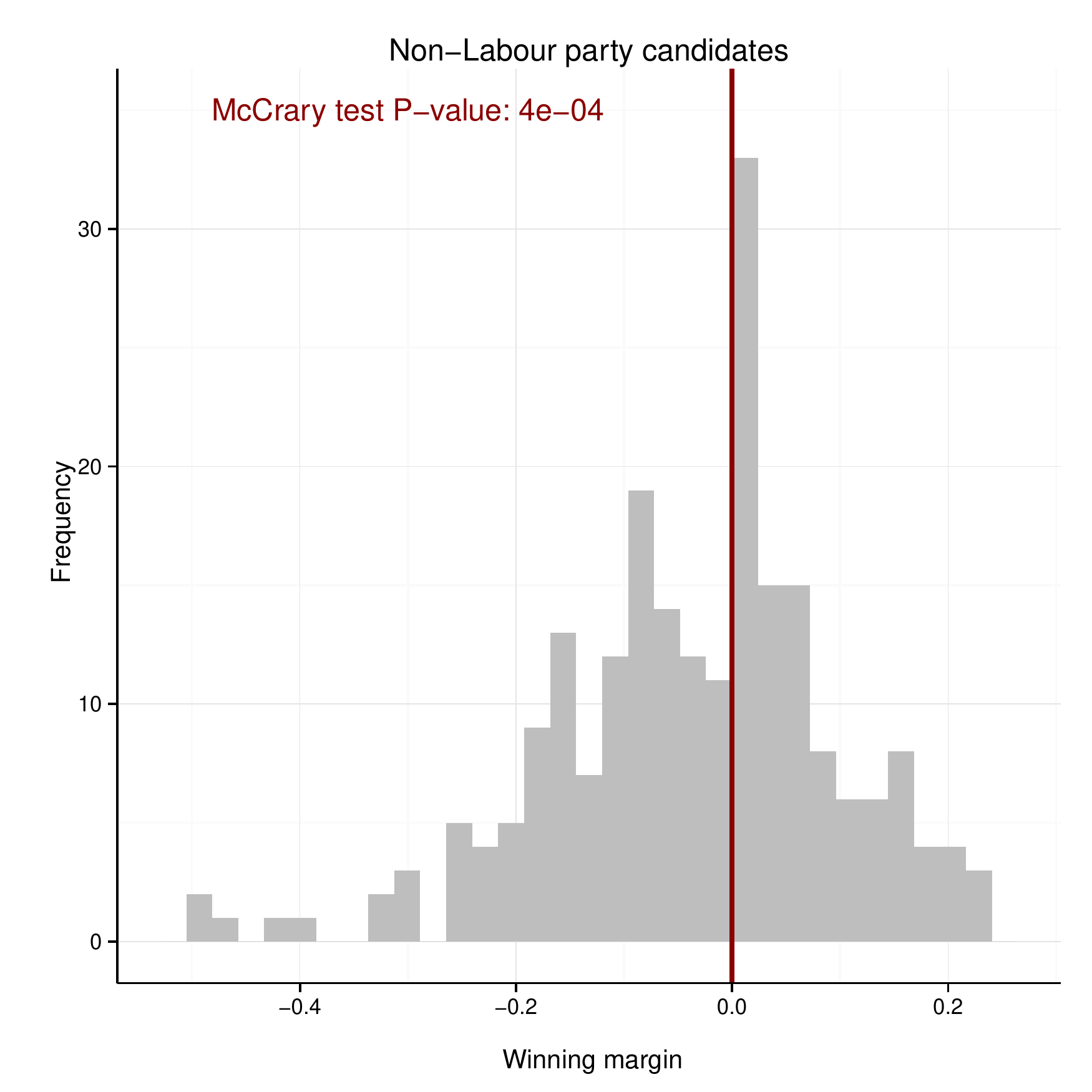}
\includegraphics[scale=.3]{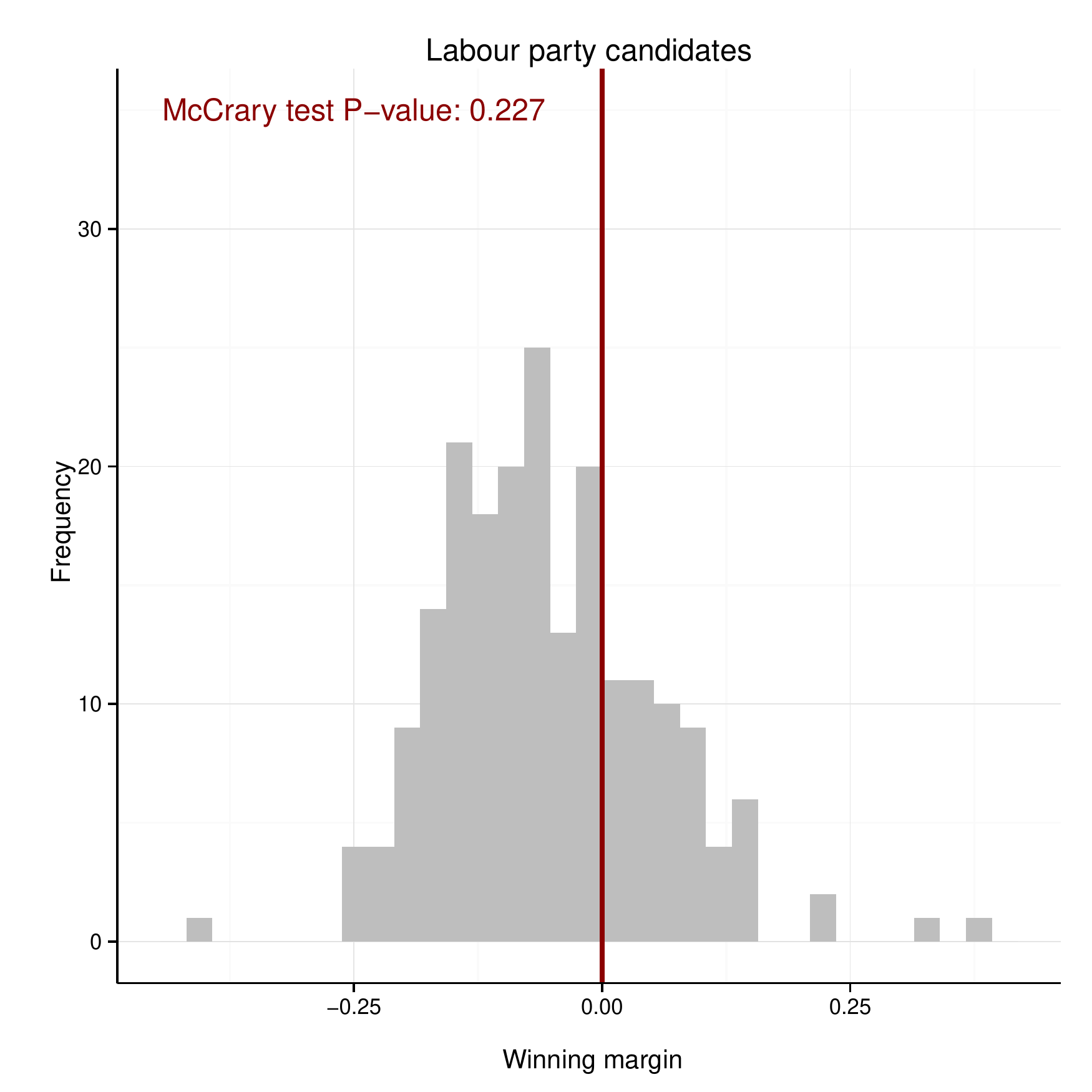}
\begin{minipage}{15cm}
\footnotesize
\emph{Notes: The bin size is at the default level in the \Rsymb package ``ggplot2''.}
\end{minipage}
\end{figure}

To demonstrate the added value of the CPT relative to a standard balance table, we will look at a specific window around the winning threshold. 
Table 4 in EH shows the main estimates of the treatment effect. The estimates use a window of 164 to 223 observations around the winning threshold. We restricted the sample to a window containing 164 observations, and examined the covariate balance within that window. Table \ref{tab: MPs balance window} and Figure \ref{fig: MPs balance table } in the Appendix suggest the covariate balance is not bad, and except for imbalance on the party identity, most other covariates seem to be balanced. Furthermore, a joint F-test of the null hypothesis that the covariates have no predictive power rejects only at a 10\% significance level ($P = 0.075$) and without including the party indicator the joint F-test does not reject the null of no predictive power ($P = 0.412$) and finds no evidence of imbalance when part identity is not included.      

We remove the party indicator from the covariate set and check whether the multivariate balance tests can detect a difference between the winners and the losers based on the remaining covariates.  
The Energy test and the Cross-Match test do not detect a covariate imbalance ($P=0.88$ and $P=0.13$ respectively), however the CPT finds significant imbalance between the two groups, see Figure \ref{fig: Null distribution CPT MPs} in the Appendix. 

In Figure \ref{fig: P-values for different window sizes} we compare the Energy test, Cross-Match test and the CPT over a grid of different window sizes. The results suggest that the CPT has higher power than the Energy and Cross-Natch tests. The CPT detects significant covariate imbalance at window sizes that are half of the one used by EH. We used a random forest as the classifier, because logistic regression with all two-way interactions had more parameters than observations. This is an example of how machine learning algorithms combined with permutation inference can be used to complement existing econometric tools.       

\begin{figure}[ht]
\centering
\caption{P-values of each of the multivariate balance test at different window sizes.   \newline}
\label{fig: P-values for different window sizes}
\includegraphics[width=0.78\textwidth]{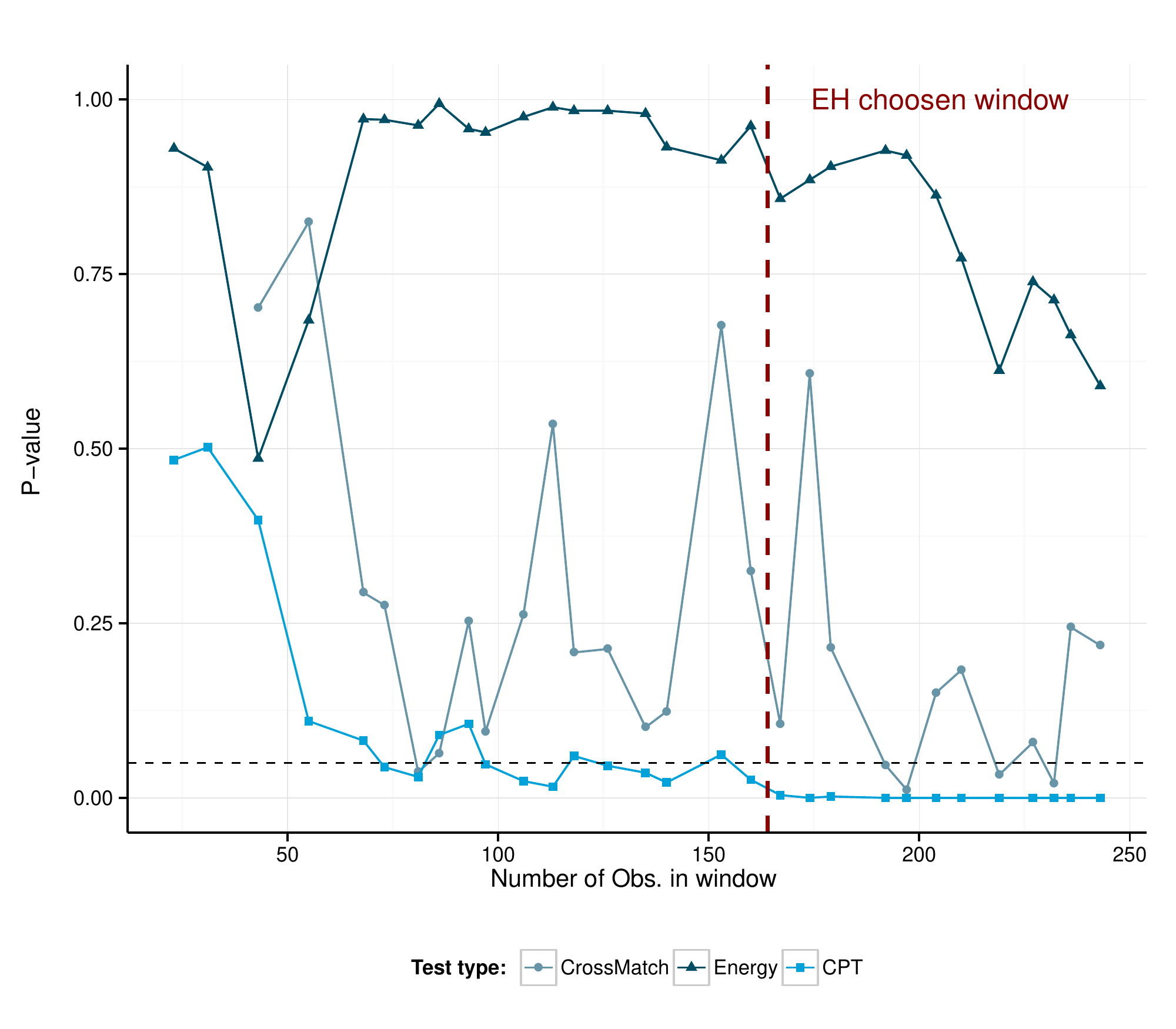}
\begin{minipage}{13.5cm}
\footnotesize
\emph{Notes: 
The CPT uses a random forest classifier, and the test statistic is the in-sample classification accuracy rate. 
In the smallest two window sizes the Cross-Match test statistic was not well defined, as the covariance matrix could not been inverted.
EH used a window containing between 164 to 223 observations in their RD treatment effect estimation, see Table 4 in EH. }
\end{minipage}
\end{figure}


\subsection{The effect of community college on educational attainment, \cite{rouse1995} and \cite{heller2010}} \label{rouse}

In a matching design it is common to use Fisherian inference after conducting the matching procedure, see \cite{rosenbaum2010}. A key question is whether after matching the researcher should imagine that units have been assigned at random within matched blocks, or whether each unit has been assigned independently to treatment.  In other words, in the hypothetical experiment that the matching design is meant to mimic, is the randomization within a group (match) or across groups?    
In this data application we will show it is essential to specify the probability model, because the two may lead to opposite conclusions when conducting balance diagnostics.      

\cite{rouse1995} studied the educational attainment of students who started in a two-year college to that of students at a four-year college. \cite{heller2010} used this data to demonstrate the use of the Cross-Match test for testing imbalance between multivariate distributions. We use this data to demonstrate methodological issues in conducting inference after matching, and not to make any inference or analysis on the effects of two-year college on educational attainment relative to four-year college.  

In Rouse's data, prior to conducting matching there is clear imbalance in the observable characteristics of students who started at a two-year college and those who started at a four-year college (see Figure \ref{fig: college balance} in Appendix). After matching, with or without replacement, the balance tables comparing the treated (two-year) and control (four-year) units show the groups are comparable in the observed characteristics and validates the matching procedure worked well. To test whether there is imbalance in the joint distribution of the covariates we use the CPT, and Figure \ref{fig: college null distribution CPT} shows the results. Figure \ref{fig: college null distribution CPT} yields opposite results depending on the randomization structure that is used. When the randomization structure is across blocks the observed test statistic is to the left of the null distribution, implying more balance than would have been likely under random assignment. When the randomization structure is within blocks the observed test statistic is to the right of the null distribution, implying the covariates can predict the treatment  assignment better than under random assignment. The difference between the left and right plots in Figure \ref{fig: college null distribution CPT} is the matching method, with or without replacement, and as can be seen the matching procedure has no effect on our discussion of within versus across block randomization. 
\begin{figure}[ht]
\begin{center}
\caption{The distribution of the test statistic under the null according to randomization within blocks and across blocks for matching designs with and without replacement}
\label{fig: college null distribution CPT}
\end{center}
\includegraphics[scale=0.5]{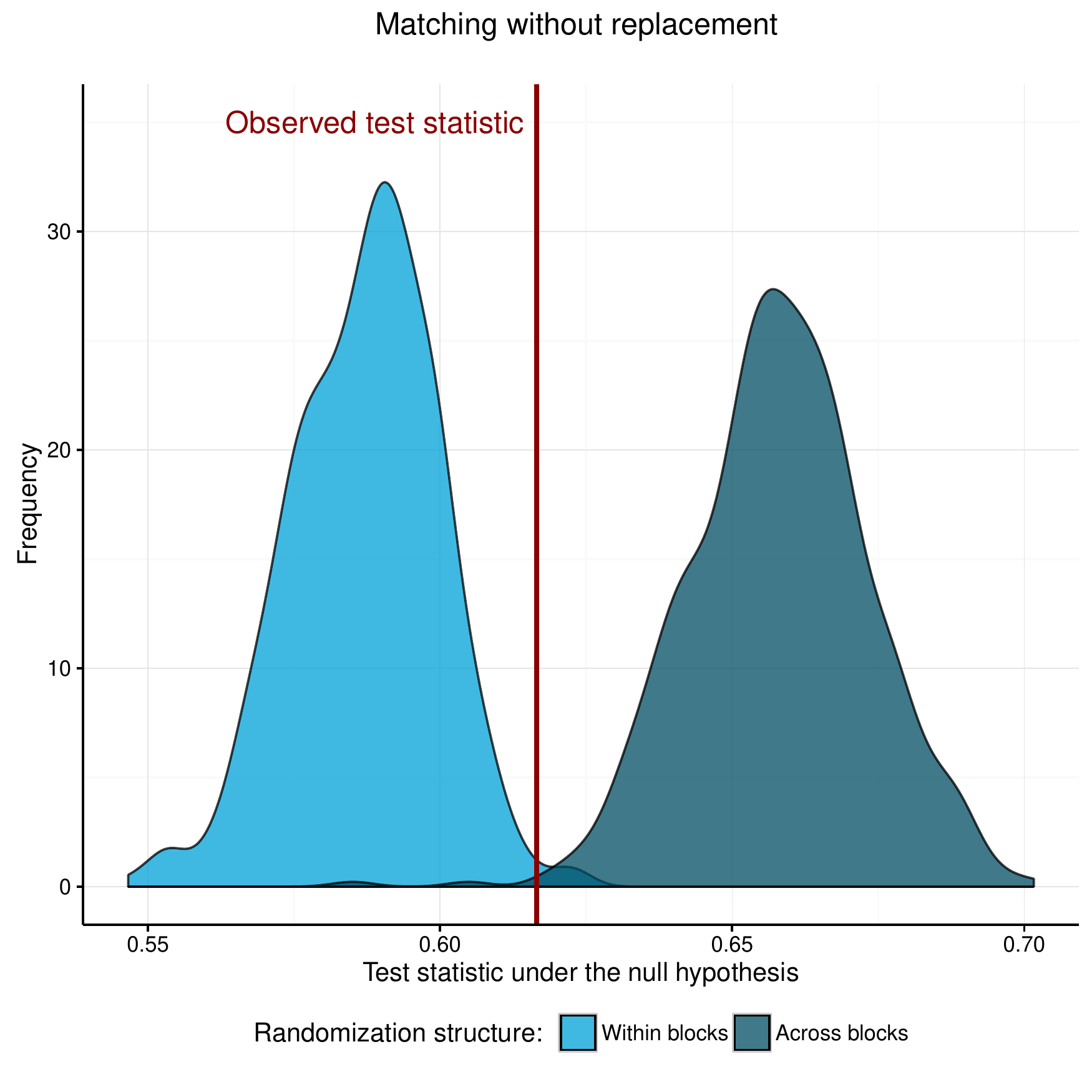}
\includegraphics[scale=0.5]{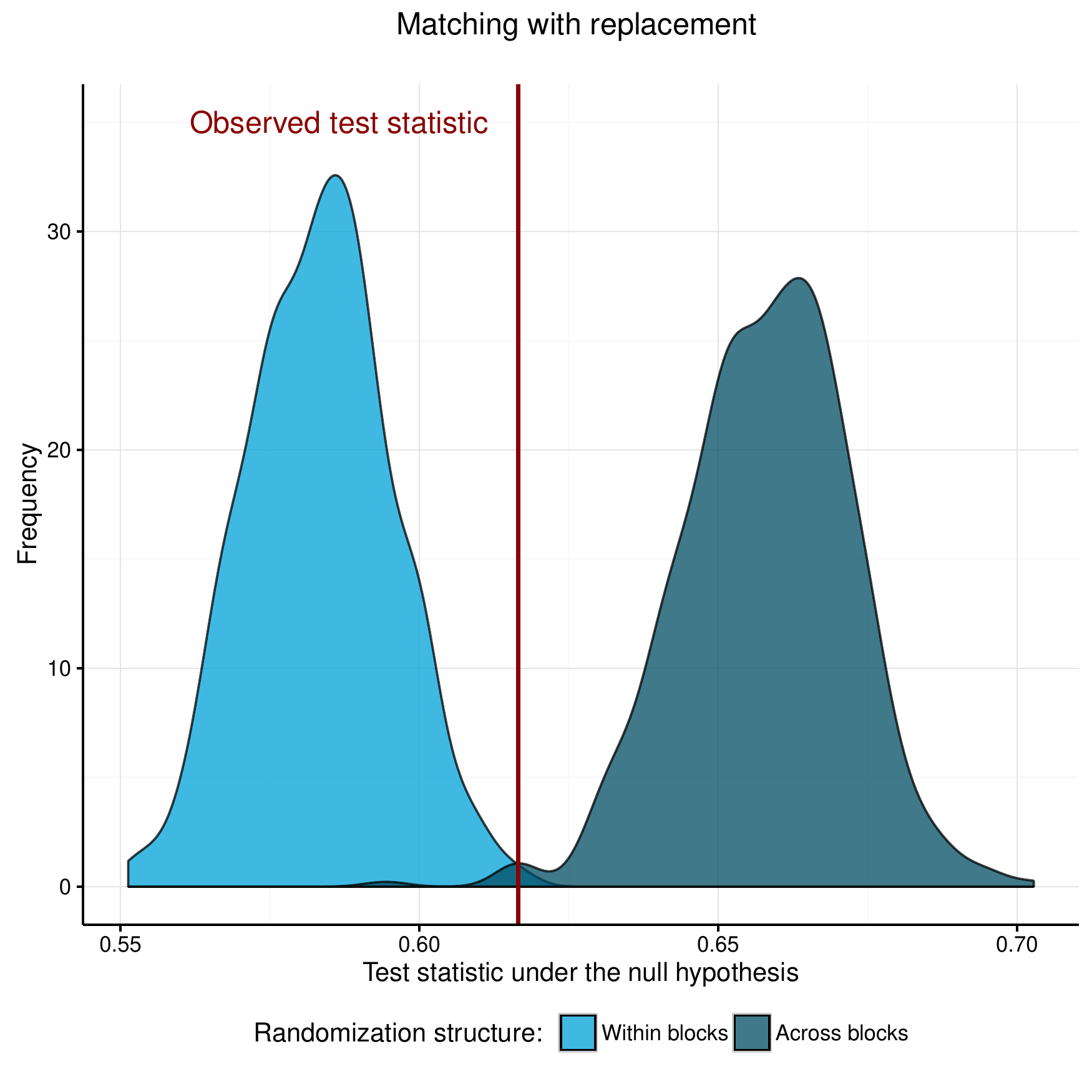}
\begin{minipage}{17cm}
\footnotesize
\emph{Notes:} The difference between the left and right panels is whether the matching was done with replacement or without, and as can be seen from the figure the matching method has no effect on our conclusions concerning within versus across block randomization.   
\end{minipage}
\end{figure}


\section{Theory} \label{technical}

In this section we explicitly reformulate the CPT as a two-sample test for equality of multivariate distributions (\ref{reformulation}), describe an idealized version of the CPT (\ref{idealized}), and then show that the idealized CPT is consistent under weak conditions (\ref{consist}).  We conclude with some comments (\ref{techcomments}).

\subsection{Reformulation} \label{reformulation}

In Section \ref{method} we assume that the $(Z_i, T_i)$ pairs are IID from some unknown distribution.  Let $\mathcal{F}$ be the conditional distribution of $Z_i$ given $T_i = 1$ and $\mathcal{G}$ be the conditional distribution of $Z_i$ given $T_i = 0$.  Then $Z \indep T$ if and only if $\mathcal{F} = \mathcal{G}$.  We may therefore reformulate the CPT as a test for equality of the multivariate distributions $\mathcal{F}$ and $\mathcal{G}$.  

Suppose there are $l > 0$ values of $i$ for which $T_i = 1$, and let $X_1$, $X_2$, ..., $X_l$ denote the $l$ corresponding $Z_i$.  Similarly suppose there are $m > 0$ values of $i$ for which $T_i = 0$ and let $Y_1$, $Y_2$, ..., $Y_m$ denote the $m$ corresponding $Z_i$.  Let $X$ be the $l\times p$ matrix whose rows are $X_1$, ..., $X_l$, and let $Y$ be the $m \times p$ matrix whose rows are $Y_1$, ..., $Y_m$.  Note that the rows of $X$ are IID draws from $\mathcal{F}$ and the rows of $Y$ are IID draws from $\mathcal{G}$.  In this context, the CPT is simply a two-sample test comparing $X$ and $Y$. 

Let us now \textit{redefine} $Z$ to be the $n\times p$ matrix
\begin{equation}
Z \equiv \left(\begin{array}{c} X \\ Y \end{array} \right).
\end{equation}
Note that in our redefinition of $Z$ we have simply reordered the rows so that the first $l$ rows are from the treatment group and the remaining $m$ rows are from the control group.

\subsection{Description of an Idealized CPT} \label{idealized}

Let $s:\mathbb{R}^{n \times p}\mapsto \mathbb{R}$ be some fixed but otherwise arbitrary measurable function that maps an $n \times p$ matrix to a real number.  We will use $S \equiv s(Z)$ as our test statistic.  (We specify possible choices for $s$ below, but for now we allow $s$ to be arbitrary.)  

Let $\Pi_1$, ..., $\Pi_{n!}$ denote some ordering of the $n!$ permutation matrices of dimension $n \times n$.  We assume that $\Pi_1 = I$, but the ordering may otherwise be arbitrary.  Define $S^{(i)} \equiv s(\Pi_i Z)$ for $1 \le i \le n!$.  The values of $S^{(i)}$ are the re-calculated values of the test statistic we obtain after shuffling the observations (i.e., after shuffling the rows of $Z$).  

Now define
\begin{equation}
P \equiv \frac{\#\left\{i : S \le S^{(i)} \right\}}{n!}.
\end{equation}
Proposition \ref{validity} states that $P$ is a valid $P$-value for testing the hypothesis that $\mathcal{F}=\mathcal{G}$.
\begin{proposition} \label{validity}
Assume that $\mathcal{F} = \mathcal{G}$.  Then for any real number $\alpha$ such that $0 \le \alpha \le 1$, it follows that $\mathbb{P}\left(P \le \alpha \right) \le \alpha$.  
\end{proposition}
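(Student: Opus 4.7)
\textbf{Proof proposal for Proposition \ref{validity}.}

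The plan is to run the standard permutation-test validity argument, being careful about ties. The one fact I need is that under $\mathcal{F}=\mathcal{G}$ the rows of $Z$ are i.i.d., so the joint distribution of $Z$ is invariant under left multiplication by any permutation matrix, i.e.\ $\Pi^{\ast} Z \stackrel{d}{=} Z$ for every permutation matrix $\Pi^{\ast}$. All randomness in the $S^{(i)}$ is driven by $Z$, so this is the lever I will pull.

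First I would introduce the auxiliary quantities $T_j \equiv \#\{i : S^{(j)} \le S^{(i)}\}$ for $1 \le j \le n!$, so that $T_1 = \#\{i : S \le S^{(i)}\}$ and therefore $P = T_1/n!$. The next step is to show that $T_j \stackrel{d}{=} T_1$ for every $j$. Fixing $j$ and writing $Z' = \Pi_j Z$, one has
\begin{equation}
T_j \;=\; \#\bigl\{i : s(\Pi_j Z) \le s(\Pi_i Z)\bigr\} \;=\; \#\bigl\{i : s(Z') \le s(\Pi_i \Pi_j^{-1} Z')\bigr\}.
\end{equation}
As $i$ ranges over $\{1,\dots,n!\}$, the product $\Pi_i \Pi_j^{-1}$ ranges over the whole permutation group, so after reindexing (with $k$ chosen so that $\Pi_k = \Pi_i \Pi_j^{-1}$) the count equals $\#\{k : s(Z') \le s(\Pi_k Z')\}$. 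Since $Z' \stackrel{d}{=} Z$ under the null, this has the same distribution as $\#\{k : s(Z) \le s(\Pi_k Z)\} = T_1$, proving the identity in distribution. I expect this step---setting up the right relabelling so that the group structure does the work---to be the main conceptual obstacle, though it is really just bookkeeping.

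The second step is a deterministic, sample-path counting lemma: for every integer $k$ with $0 \le k \le n!$,
\begin{equation}
\sum_{j=1}^{n!} I\{T_j \le k\} \;\le\; k.
\end{equation}
Indeed, $T_j \le k$ forces $S^{(j)}$ to be strictly greater than the $(k{+}1)$-st largest of $S^{(1)},\dots,S^{(n!)}$ (otherwise every element tied with $S^{(j)}$ would be counted inside $T_j$, pushing it past $k$); there are at most $k$ such indices. This is precisely where handling ties requires the ``$\le$'' (rather than ``$<$'') in the definition of $T_j$, and hence in the definition of $P$.

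Finally I would combine the two ingredients: taking expectations gives $\sum_{j} \mathbb{P}(T_j \le k) \le k$, and by step one each summand equals $\mathbb{P}(T_1 \le k)$, so $\mathbb{P}(T_1 \le k) \le k/n!$. Given $\alpha \in [0,1]$, set $k = \lfloor \alpha n! \rfloor$. Because $T_1$ is integer valued, $\{P \le \alpha\} = \{T_1 \le \lfloor \alpha n!\rfloor\}$, hence
\begin{equation}
\mathbb{P}(P \le \alpha) \;=\; \mathbb{P}\bigl(T_1 \le \lfloor \alpha n!\rfloor\bigr) \;\le\; \frac{\lfloor \alpha n!\rfloor}{n!} \;\le\; \alpha,
\end{equation}
which is the desired inequality.
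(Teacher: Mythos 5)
Your proof is correct, and it reaches the conclusion by a genuinely different route from the paper's. Both arguments pull the same lever at the core: under $\mathcal{F}=\mathcal{G}$ the rows of $Z$ are exchangeable, so $\Pi_j Z \stackrel{d}{=} Z$ and hence your $T_j$ (which, after your group-reindexing step, is exactly the paper's $R^{(j)} = r(\Pi_j Z)$) are identically distributed. Where you diverge is in handling ties. The paper first proves validity under the extra assumption that the $S^{(i)}$ are almost surely distinct — there the events $\{R^{(i)}=j\}$ are disjoint, equiprobable, and exhaustive, so $R^{(1)}$ is exactly uniform on $\{1,\dots,n!\}$ — and then removes that assumption by introducing auxiliary i.i.d.\ Uniform$[0,1]$ variables to break ties at random, obtaining a randomized rank $\tilde{R}^{(i)}$ and concluding via the domination $P \ge \tilde{P}$. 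You instead prove the deterministic, pathwise counting bound $\sum_{j} I\{T_j \le k\} \le k$, which absorbs ties directly because of the ``$\le$'' in the definition of $T_j$, and then take expectations. Your argument is more self-contained and elementary: it needs no auxiliary randomization, no two-stage structure, and the counting lemma holds surely rather than in distribution. What the paper's construction buys in exchange is the stronger intermediate fact that the (tie-broken) rank is exactly uniform, so that the randomized $P$-value $\tilde{P}$ attains level $\alpha$ exactly on the grid $\{j/n!\}$ — information your inequality-only route discards, but which the proposition as stated does not require.
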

A proof is given in Appendix \ref{proofs}.  We must point out that there is of course nothing fundamentally new here; we have simply outlined a classical permutation test.  The key point we wish to make is that we may choose any function $s$ that we like, and the test remains valid.  Indeed, our choice of $s$ is a rather complicated function.  We use $Z$ to train a classifier that classifies observations as coming from either $\mathcal{F}$ or $\mathcal{G}$, and $s(Z)$ is some measure of the accuracy of the classifier.  (The function $s$ encapsulates both the training of the classifier, and the measurement of its accuracy.)  We also point out that we are describing here an idealized version of the CPT, because it is usually infeasible to compute $P$ in practice, since that would require us to compute all $n!$ values of $S^{(i)}$.  Lastly, we note that the assumption that $s$ is fixed is somewhat restrictive.  It excludes the possibility that the classifier might use a randomized algorithm.  This would exclude, for example, random forests.  In Appendix \ref{proofs} we discuss a generalization of Proposition \ref{validity} that allows for $s$ to be random.

We next discuss how we might construct our function $s$, and present two possibilities.  One possibility calculates the in-sample classification accuracy rate.  The other calculates the out-of-sample classification rate.  Both require us to specify a \textit{classification function}, which in practice amounts to choosing a classification algorithm (e.g. logistic regression).

\paragraph{Classification Function}

The classification function, which we denote $\hat{f}$, is a function that takes an observation and classifies it as coming from either $\mathcal{F}$ or $\mathcal{G}$.  Somewhat informally, we may think of $\hat{f}$ as a function that maps a $p$-dimensional vector (i.e., a single observation) to $\{0, 1\}$, with ``1'' meaning the observation is classified as coming from $\mathcal{F}$, and ``0'' meaning the observation is classified as coming from $\mathcal{G}$.  However, the prediction rule used by $\hat{f}$ to classify observations is learned from training data, and thus, strictly speaking, the function $\hat{f}$ is a function not only of the observation to be classified, but also of the training data.  We therefore write $\hat{f}$ as a function of two variables, i.e.\ $\hat{f}(u,v)$, where $u$ is the observation to be classified (a $p$-dimensional vector) and $v$ is the training data (a $n_0\times p$ matrix, where $n_0$ is the number of observations included in the training set; $n_0$ will usually be defined implicitly, depending on how we construct the training set.).  

In what follows, we allow $\hat{f}$ to be any fixed, measurable function that maps $\mathbb{R}^p\times\mathbb{R}^{n_0 \times p}$ to $\{0, 1\}$.  We do not place any other restrictions on $\hat{f}$.  In practice, we might choose $\hat{f}$ to be, for example, a logistic regression classifier.  Note that we require $\hat{f}$ here to be fixed, which excludes randomized algorithms such as random forests.  

\paragraph{In-sample Classification Accuracy Rate} Once we have chosen a function $\hat{f}$, we may then define $s$ in terms of $\hat{f}$.  One simple option is the \textit{in-sample classification accuracy rate}:
\begin{equation}
s_{\mathrm{in}}(z) = \frac{1}{n} \left\{\sum_{i = 1}^l \hat{f}(z_i, z) + \sum_{i = l+1}^n \left[1 - \hat{f}(z_i, z)\right]\right\}
\end{equation}
where $z_i$ denotes the $i\thth$ row of $z$ (note that the variable $z$ does not have any special meaning of its own; it is simply used here to define the function $s_{\mathrm{in}}$ in terms of the function $f$).  Here, we use the entire dataset $z$ as the training data (so $n_0 = n$).  We then count the number of observations that are correctly classified and divide by $n$.

\paragraph{Out-of-sample Classification Accuracy Rate} An alternative to the in-sample classification accuracy rate would be the \textit{out-of-sample classification accuracy rate}.   As defined below, the out-of-sample classification accuracy rate essentially amounts to cross validation except that we consider all possible training sets of a fixed size, instead of just 5 or 10 disjoint training sets.  In addition, we require that exactly half of the observations in the test set come from $\mathcal{F}$ and exactly half come from $\mathcal{G}$ (see discussion in Section \ref{techcomments}).

Let $\kappa$ be some integer such that $1 \le \kappa < \mathrm{min}(l,m)$.  If $z$ is a $n \times p$ matrix, let $\mathbf{z}$ (in bold) denote
\[
\left( \begin{array}{c} z_1 \\ \vdots \\ z_{l-\kappa} \\ z_{l} \\ \vdots \\ z_{n-\kappa} \end{array} \right).
\]
In other words, $\mathbf{z}$ is equal to $z$, but with the following $2\kappa$ rows removed: $l-\kappa+1$, $l-\kappa+2$, ..., $l$ and $n-\kappa+1$, $n-\kappa+2$, ..., $n$.  Note that the definition of $\mathbf{z}$ depends on $\kappa$, even though this is not reflected explicitly in the notation. (We do not write, for example, $\mathbf{z}(\kappa)$.  This is to avoid notational clutter.)  The motivation for this ``bold'' notation is that we can use $\mathbf{z}$ as a training set.  
The remaining $2\kappa$ rows of $z$ can be used as a test set.  

Next, define the function $a(z)$ as follows:
\begin{equation}
a(z) \equiv \frac{1}{2\kappa} \left\{\sum_{i=l-\kappa+1}^{l} \hat{f}(z_i; \mathbf{z}) + \sum_{i=n-\kappa+1}^{n} \left[1 - \hat{f}(z_i; \mathbf{z})\right] \right\}.
\end{equation}
Here we use only $\mathbf{z}$ as the training set, so $n_0 = n-2\kappa$.  The remaining $2\kappa$ observations are the test set.  We count how many of the test-set observations are correctly classified, and divide by $2\kappa$.  Thus, $a(z)$ may be interpreted as the out-of-sample classification accuracy rate for one specific partition of $z$ into a training set and test set.

We may now define $s_{\mathrm{out}}(z)$ as:
\begin{equation}
s_{\mathrm{out}}(z) = \frac{1}{l!m!}\sum_{i,j}a\left(\Pi_i^{(X)} \Pi_j^{(Y)} z\right) \label{oosrate}
\end{equation}
where $\Pi_1^{(X)}$, $\Pi_2^{(X)}$, ..., $\Pi_{l!}^{(X)}$ denotes some ordering of the $l!$ permutation matrices that permute only the first $l$ rows of $z$, and $\Pi_1^{(Y)}$, $\Pi_2^{(Y)}$, ..., $\Pi_{m!}^{(Y)}$ denotes some ordering of the $m!$ permutation matrices that permute only the final $m$ rows of $z$.  In other words, $\{\Pi_i^{(X)}\}$ is the set of all $n\times n$ permutation matrices whose most lower-right $m\times m$ submatrix is equal to $I_{m\times m}$, and $\{\Pi_i^{(Y)}\}$ is the set of all $n\times n$ permutation matrices whose most upper-left $l\times l$ submatrix is equal to $I_{l\times l}$.  (Note that $\Pi_i^{(X)}$ and $\Pi_j^{(Y)}$ commute.)  Equation \ref{oosrate} is our definition of the \textit{out-of-sample classification accuracy rate}.  We will drop the subscript ``out'' on $s_{\mathrm{out}}(z)$ when it is clear from context.

\subsection{Consistency} \label{consist}

The CPT is a consistent test if (1) we use the out-of-sample classification accuracy rate, and (2) the classification function $\hat{f}$ has at least some predictive power to discriminate $\mathcal{F}$ from $\mathcal{G}$.  The exact sense in which we mean ``$\hat{f}$ has at least some predictive power to discriminate $\mathcal{F}$ from $\mathcal{G}$'' is specified in Definition \ref{kdgpredictive}.

\begin{definition} \label{kdgpredictive}
Let $Z$, $\kappa$, and $\mathbf{Z}$ be defined as above.  Let $\tilde{X} \sim \mathcal{F}$ and $\tilde{Y} \sim \mathcal{G}$ be $1 \times p$ random vectors, and assume that $\tilde{X}$ and $\tilde{Y}$ are independent of $Z$ and of each other.  We say that a function $\hat{f}: \mathbb{R}^p\times\mathbb{R}^{(n-2\kappa) \times p} \mapsto \{0, 1\}$ is $(\kappa, \delta, \gamma)$-predictive under $\mathcal{F}$ and $\mathcal{G}$ if and only if both of the following are true: 
$$
\mathbb{P}\left\{\mathbb{P}\left[\hat{f}\left(\tilde{X}, \mathbf{Z}\right) = 1 \given \mathbf{Z}\right] > 0.5 + \delta\right\} > 1-\gamma
$$
and
$$
\mathbb{P}\left\{\mathbb{P}\left[\hat{f}\left(\tilde{Y}, \mathbf{Z}\right) = 0 \given \mathbf{Z}\right] > 0.5 + \delta\right\} > 1-\gamma.
$$
\end{definition}
In other words, if we use $\mathbf{Z}$ as a training set, then with probability at least $1 - \gamma$ the function $\hat{f}$ will be able to correctly classify new, independent observations at least somewhat better than a coin flip.  Under this assumption, if $\gamma$ is sufficiently small and if $\kappa$ is sufficiently large, it follows that with high probability the test statistic $S$ will be at least some finite amount larger that 0.5.  More precisely:
\begin{proposition} \label{Sdistro}
Assume that $\mathcal{F} \ne \mathcal{G}$ and that $\hat{f}$ is $(\kappa, \delta, \gamma)$-predictive under $\mathcal{F}$ and $\mathcal{G}$.  Then 
\begin{equation}
\mathbb{P}\left[S \le 0.5 + \delta/4 \right] < \frac{8\gamma + 4\mathrm{exp}\left(-\kappa\delta^2\right)}{\delta}.
\end{equation}
\end{proposition}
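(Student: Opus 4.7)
The plan is to reduce the bound on $\mathbb{P}[S \le 0.5 + \delta/4]$ to a bound on $\mathbb{P}[a(Z) \le 0.5 + \delta/2]$, and to control the latter by conditioning on $\mathbf{Z}$ and applying Hoeffding's inequality to the $2\kappa$ test observations. The central observation enabling the reduction is that $S$ is an average, over $l!\,m!$ permutation pairs $\Pi_i^{(X)}\Pi_j^{(Y)}$, of $[0,1]$-valued quantities $a(\Pi_i^{(X)}\Pi_j^{(Y)} Z)$, each having the same marginal distribution as $a(Z)$ by exchangeability of the IID rows within each group.

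I would first bound $\mathbb{P}[a(Z) \le 0.5 + \delta/2]$. Define $q_F(\mathbf{Z}) \equiv \mathbb{P}[\hat{f}(\tilde{X}, \mathbf{Z}) = 1 \mid \mathbf{Z}]$ and $q_G(\mathbf{Z}) \equiv \mathbb{P}[\hat{f}(\tilde{Y}, \mathbf{Z}) = 0 \mid \mathbf{Z}]$, and let $E = \{q_F(\mathbf{Z}) > 0.5 + \delta,\; q_G(\mathbf{Z}) > 0.5 + \delta\}$; the $(\kappa,\delta,\gamma)$-predictive assumption together with a union bound give $\mathbb{P}(E^c) \le 2\gamma$. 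Because the $2\kappa$ test rows of $Z$ are IID draws from $\mathcal{F}$ and $\mathcal{G}$ independent of $\mathbf{Z}$, conditional on $\mathbf{Z}$ the statistic $a(Z)$ is an average of $2\kappa$ independent Bernoullis with conditional mean $(q_F(\mathbf{Z}) + q_G(\mathbf{Z}))/2$. Hoeffding's inequality applied conditionally on $\mathbf{Z}$ with $t = \delta/2$ then gives
\begin{equation*}
\mathbb{P}\!\left[a(Z) \le \tfrac{q_F(\mathbf{Z}) + q_G(\mathbf{Z})}{2} - \tfrac{\delta}{2} \;\middle|\; \mathbf{Z}\right] \le \exp(-\kappa\delta^2),
\end{equation*}
and on $E$ this specializes to $\mathbb{P}[a(Z) \le 0.5 + \delta/2 \mid \mathbf{Z}] \le \exp(-\kappa\delta^2)$. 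Integrating over $\mathbf{Z}$ and bounding $E^c$ trivially yields $\mathbb{P}[a(Z) \le 0.5 + \delta/2] \le 2\gamma + \exp(-\kappa\delta^2)$.

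Next I would lift this to a bound on $S$ using only that $a(\cdot) \in [0,1]$. Let $\pi(Z)$ denote the fraction of permutation pairs $(i,j)$ with $a(\Pi_i^{(X)}\Pi_j^{(Y)} Z) \le 0.5 + \delta/2$. The $1 - \pi(Z)$ fraction of summands exceeding $0.5 + \delta/2$ already contribute $(1-\pi(Z))(0.5+\delta/2)$ to $S$, so $S \ge (1-\pi(Z))(0.5+\delta/2)$, whence $\{S \le 0.5 + \delta/4\} \subset \{\pi(Z) \ge (\delta/4)/(0.5+\delta/2)\} \subset \{\pi(Z) \ge \delta/4\}$. Exchangeability of $\Pi_i^{(X)}\Pi_j^{(Y)} Z$ with $Z$ gives $\mathbb{E}[\pi(Z)] = \mathbb{P}[a(Z) \le 0.5 + \delta/2]$, so Markov's inequality produces
\begin{equation*}
\mathbb{P}[S \le 0.5 + \delta/4] \le \mathbb{P}[\pi(Z) \ge \delta/4] \le \frac{4\,\mathbb{P}[a(Z) \le 0.5 + \delta/2]}{\delta} \le \frac{8\gamma + 4\exp(-\kappa\delta^2)}{\delta}.
\end{equation*}
The hard step is this last one: a direct Markov inequality on $1-S$ only yields a constant-order estimate (because $\mathbb{E}[S]$ itself may be close to $0.5 + \delta/2$), so one must exploit the $[0,1]$-boundedness of $a$ to convert ``$S$ small'' into ``a $\delta/4$-fraction of the permutation summands are small.'' That combinatorial averaging step is what produces the $1/\delta$ factor appearing in the final bound; the rest is a routine Hoeffding plus exchangeability calculation.
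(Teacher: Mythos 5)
Your proposal is correct and follows essentially the same route as the paper: the paper's proof of this result (Proposition 4 and Corollary 2 in the appendix, with $\zeta=\delta/2$, $\epsilon=\delta/4$) likewise splits off the training-set failure event of probability $<2\gamma$, applies Hoeffding conditionally on $\mathbf{Z}$ to the $2\kappa$ test-set indicators to get the $\exp(-\kappa\delta^2)$ term, and then lifts the single-split bound to $S$ via exactly your truncation-plus-Markov argument (packaged there as Lemma 1, which is your ``fraction of small summands'' step in disguise and is the source of the $4/\delta$ factor). The only cosmetic difference is that the paper passes through a stochastic-dominance reduction to a Binomial$(2\kappa,0.5+\delta)$ before invoking Hoeffding, whereas you apply Hoeffding directly to the heterogeneous Bernoullis; both give the same constant.
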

\begin{proof}
See Appendix \ref{proofs}.
\end{proof}
Moreover, if $\kappa$ is large, then most of the the values of $S^{(i)}$ concentrate right around 0.5.
\begin{proposition} \label{sibound}
Let $\xi$ be some real number such that $0 < \xi < 0.5$.  Then 
\begin{equation}
\frac{\#\{i : S^{(i)} > 0.5 + \xi\}}{n!} < \frac{1}{\xi}\left(\frac{1+\sqrt{\pi}}{2\sqrt{2}}\right) \sqrt{\frac{1}{\kappa}}.
\end{equation}
\end{proposition}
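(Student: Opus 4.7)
The plan is to work with $i$ uniform on $\{1,\ldots,n!\}$, show that $S^{(i)}$ has mean $1/2$ and variance of order $1/\kappa$, and then convert this to the tail bound via
\begin{equation*}
\mathbb{P}\!\left(S^{(i)} > 0.5 + \xi\right) \;\le\; \frac{\mathbb{E}|S^{(i)} - 0.5|}{\xi} \;\le\; \frac{\sqrt{\mathrm{Var}_i(S^{(i)})}}{\xi},
\end{equation*}
i.e.\ one-sided Markov composed with Cauchy--Schwarz. Since the ratio we want to bound is exactly this probability, everything is conditional on the fixed realization of $Z$.

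The first step is to show $\mathbb{E}_i[S^{(i)}] = 1/2$ for every $Z$. Unwinding the definition of $s_{\mathrm{out}}$,
\begin{equation*}
\mathbb{E}_i[S^{(i)}] \;=\; \frac{1}{n!\, l!\, m!} \sum_{i,j,k} a\!\left(\Pi_j^{(X)} \Pi_k^{(Y)} \Pi_i Z\right) \;=\; \frac{1}{n!}\sum_{\tau=1}^{n!} a(\Pi_\tau Z),
\end{equation*}
because as $(i,j,k)$ vary the product $\Pi_j^{(X)} \Pi_k^{(Y)} \Pi_i$ hits each of the $n!$ permutations with multiplicity $l!m!$. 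I would then check $\frac{1}{n!}\sum_\tau a(\Pi_\tau Z) = 1/2$ by conditioning on (i) the unlabeled training rows, (ii) the labeling of those rows (which together determine the classifier $\hat{f}(\cdot,\mathbf{z})$), and (iii) the unlabeled test rows. The only remaining randomness is the uniform choice of which $\kappa$ of the $2\kappa$ test rows receive label $1$. For a classifier that predicts label $1$ on $k$ of those test rows, the number of correct ``1'' predictions is hypergeometric with mean $k/2$, and a short algebraic identity shows the expected total number correct is exactly $\kappa$, so $a$ has conditional mean $1/2$.

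The second step is to bound $\mathrm{Var}_i(S^{(i)})$. Since $S^{(i)}$ is an average of $a$'s over the internal permutations, Jensen gives $\mathbb{E}_i[(S^{(i)} - 1/2)^2] \le \mathrm{Var}_\tau(a(\Pi_\tau Z))$. The conditional-variance decomposition, together with the conditional mean being $1/2$, then reduces this to $\mathbb{E}[\mathrm{Var}(a \mid \text{training},\text{test rows})]$, and inside this conditioning the hypergeometric count above has variance $k(2\kappa - k)/[4(2\kappa - 1)] \le \kappa^2/[4(2\kappa - 1)]$. Rescaling by $(2\kappa)^{-2}$ gives $\mathrm{Var}(a \mid \cdot) \le 1/[4(2\kappa - 1)] \le 1/(4\kappa)$ for $\kappa \ge 1$, so
\begin{equation*}
\frac{\#\{i : S^{(i)} > 0.5 + \xi\}}{n!} \;\le\; \frac{\sqrt{\mathrm{Var}_i(S^{(i)})}}{\xi} \;\le\; \frac{1}{2\xi\sqrt{\kappa}},
\end{equation*}
which is strictly stronger than the stated bound since $1/2 < (1+\sqrt{\pi})/(2\sqrt{2})$.

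The delicate step is the exact identity $\mathbb{E}_\tau[a(\Pi_\tau Z)] = 1/2$: the conditioning must freeze the classifier \emph{before} the test labeling is averaged, since otherwise the training labels (which are themselves random under uniform $\tau$) couple back to the test labels through $\hat{f}$. Once that is set up, the hypergeometric expectation and variance computations are routine. The extra $\sqrt{\pi}$ in the stated constant likely reflects a sharper direct bound on $\mathbb{E}|Y_1 - k/2|$ (perhaps a discrete analog of the normal identity $\mathbb{E}|Z| = \sqrt{2/\pi}\,\sigma$), but it plays no role here; any $O(1/\sqrt{\kappa})$ scaling already suffices.
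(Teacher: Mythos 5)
Your proof is correct, but it takes a genuinely different route from the paper's. The paper first establishes an exponential tail bound on the individual summands, $\#\{(i,j,k): A^{(ijk)} \ge 0.5+\epsilon\}/(l!m!n!) \le \exp(-2\kappa\epsilon^2)$ (Proposition \ref{aijkbound}), by exactly the conditioning you describe --- fix the ordered training block so that, within each block of permutations, the number of correctly classified test points is hypergeometric --- and then applying Chv\'atal's hypergeometric tail inequality; it then converts this entrywise tail bound into a bound on the column averages $S^{(k)}$ via a deterministic matrix-averaging lemma (Lemma \ref{partitionbound}), and the constant $(1+\sqrt{\pi})/(2\sqrt{2})$ falls out of the Gaussian integral $\int_{0.5}^{\infty}\exp[-2\kappa(t-0.5)^2]\,dt$ in that conversion (not, as you speculate, from a mean-absolute-deviation identity). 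You instead stop at the first two moments of the same hypergeometric variable: the exact identity $\frac{1}{n!}\sum_{\tau} a(\Pi_\tau Z)=1/2$, the conditional variance bound $1/[4(2\kappa-1)]\le 1/(4\kappa)$, Jensen over the internal $(j,k)$-average, and Markov applied to $|S^{(i)}-1/2|$. Your setup of the conditioning is the right one (the classifier is frozen by the ordered training assignment before the test labeling is averaged), and the counting fact you invoke --- that the multiset $\{\Pi_j^{(X)}\Pi_k^{(Y)}\Pi_i\}$ covers each permutation with multiplicity $l!m!$ --- is the same one the paper uses to open its proof of Proposition \ref{aijkbound}, so it is worth stating explicitly rather than in passing. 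Your argument is more elementary (no tail inequality, no integral, no separate averaging lemma) and yields the strictly sharper constant $1/2 < (1+\sqrt{\pi})/(2\sqrt{2}) \approx 0.98$, so the stated inequality follows a fortiori. What the paper's heavier machinery buys is the exponential concentration of the $A^{(ijk)}$ themselves, which is what makes possible the ``slightly stronger statements'' alluded to at the end of Section \ref{consist}; a second-moment argument cannot recover that rate.
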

\begin{proof}
See Appendix \ref{proofs}.
\end{proof}

Combining Propositions \ref{Sdistro} and \ref{sibound}, and recalling the definition of $P$, we see that the power of the CPT goes to 1 as as $n \to \infty$ as long as $\kappa \to \infty$ and $\gamma \to 0$ and $\delta \to \delta_0 > 0.5$.  Note that slightly stronger statements are possible using the results in Appendix \ref{proofs}, but we do not pursue them here.

\subsection{Comments} \label{techcomments}

To summarize, when constructing the test statistic we must make two main choices: (1) what classifier to use, and (2) what accuracy measure to use.  Neither decision affects the validity of the test (that is guaranteed by Proposition \ref{validity}) but our choices affect the power of the test, and also the computational complexity.

In practice, the most important choice is usually the classifier (see below).  The better the classifier can distinguish $\mathcal{F}$ from $\mathcal{G}$, the more powerful the test.  This is both a feature and a bug.  On one hand, a researcher may have some intuition about what type of classifier might best fit her data (e.g.\ a linear vs.\ non-linear classifier), and thus ``customize'' the CPT to her particular application.  We feel that this is a major strength of the method.  On the other hand, since the choice is arbitrary, it could easily lead to data snooping.  We therefore suggest, as a default, that researchers run the CPT once with logistic regression and once with random forests, and report the results of both.  If it is felt that a third classifier is more appropriate, we suggest reporting its result as well, in addition to the first two.  Of course, when the CPT is merely being used as a diagnostic tool to discover covariate imbalance, data snooping may not be a serious concern --- if there are serious imbalances, we would like to find them, even if it requires a little searching.     

The choice of accuracy measure seems to be much less important than the choice of classifier in practice.  See Figure \ref{fig: roc-inout} in the appendix, which compares the in-sample vs out-of-sample CPT on simulated data.  Little, if any, difference can be seen.  Our focus here on the out-of-sample classification rate is primarily theoretical; it is more difficult (and requires further assumptions) to prove consistency of the in-sample CPT.  To see why, consider a K-nearest neighbors classifier, with K = 1.  This classifier may be able to discriminate $\mathcal{F}$ from $\mathcal{G}$ in the sense described above (Section \ref{consist}), but the in-sample CPT will have 0 power.  Assuming the $Z_i$ are all distinct, the in-sample classification accuracy rate will always be 1, over all permutations, and thus the CPT will never reject. 

Another theoretical detail is that in our definition of the out-of-sample classification accuracy rate we force the test set to have an equal number of observations from treatment and control.  The idea here is that, to the extent that the classifier approximates an ideal Bayes classifier, it should approximate a Bayes classifier that has a 50/50 prior on the class label.  If the prior is not uniform on the class label, and especially if there is a large imbalance, it is possible that the Bayes classifier would always classify every observation to a single class.  In such cases, the classification accuracy rate would be constant over all permutations, and the CPT would have 0 power.  In practice, this implies that some caution may be required when applying the CPT to datasets with a large imbalance in the number of observations from treatment and control.  In such cases, it may be preferable to use the out-of-sample classification accuracy rate (instead of in-sample), and to ensure the classifier effectively places a uniform prior on the class label.


\section{Discussion}  \label{discussion}

The CPT reformulates the problem of testing whether a binary treatment was assigned at random as a test for equality of multivariate distributions. The test combines classification methods with Fisherian permutation inference.  We illustrate the power of the method relative to existing procedures using Monte-Carlo simulations as well as four real data examples. We hope the CPT will illustrate the gains of using machine learning tools for the construction of powerful new test statistics, and Fisherian inference for conducting hypothesis testing and inference.

The paper emphasizes the importance of the joint distribution rather than the marginal distributions when testing for equality of multivariate distributions. The CPT is \emph{not} a substitute for standard methods such as a balance table that tests for differences in the means of each pre-treatment characteristic separately. The CPT is targeted to complement a balance table and provide a summary measure of the covariates' imbalance.

The CPT can be easily generalized.  Furthermore, although we focus in this paper on binary treatments, a similar method could be implemented for continuous treatments by replacing the classification algorithm with some form of regression, and replacing the classification accuracy rate with some other goodness of fit measure. This flexibility, combined with exact finite sample inference, allows researchers to verify random assignment to treatment in a variety of situations. The four empirical applications aim to illustrate the applicability of the method to different situations that rise in applied research.

\clearpage
\singlespacing
\bibliographystyle{aer}
\bibliography{references}
\doublespacing
\clearpage

\clearpage

\appendix

\section{Appendix: Supplementary Figures and Tables}

\begin{figure}[ht]
\centering
\caption{ROC curves comparing the performance of the in-sample and out-of-sample variants of the CPT on simulated data}
\label{fig: roc-inout}
\includegraphics[scale=0.4]{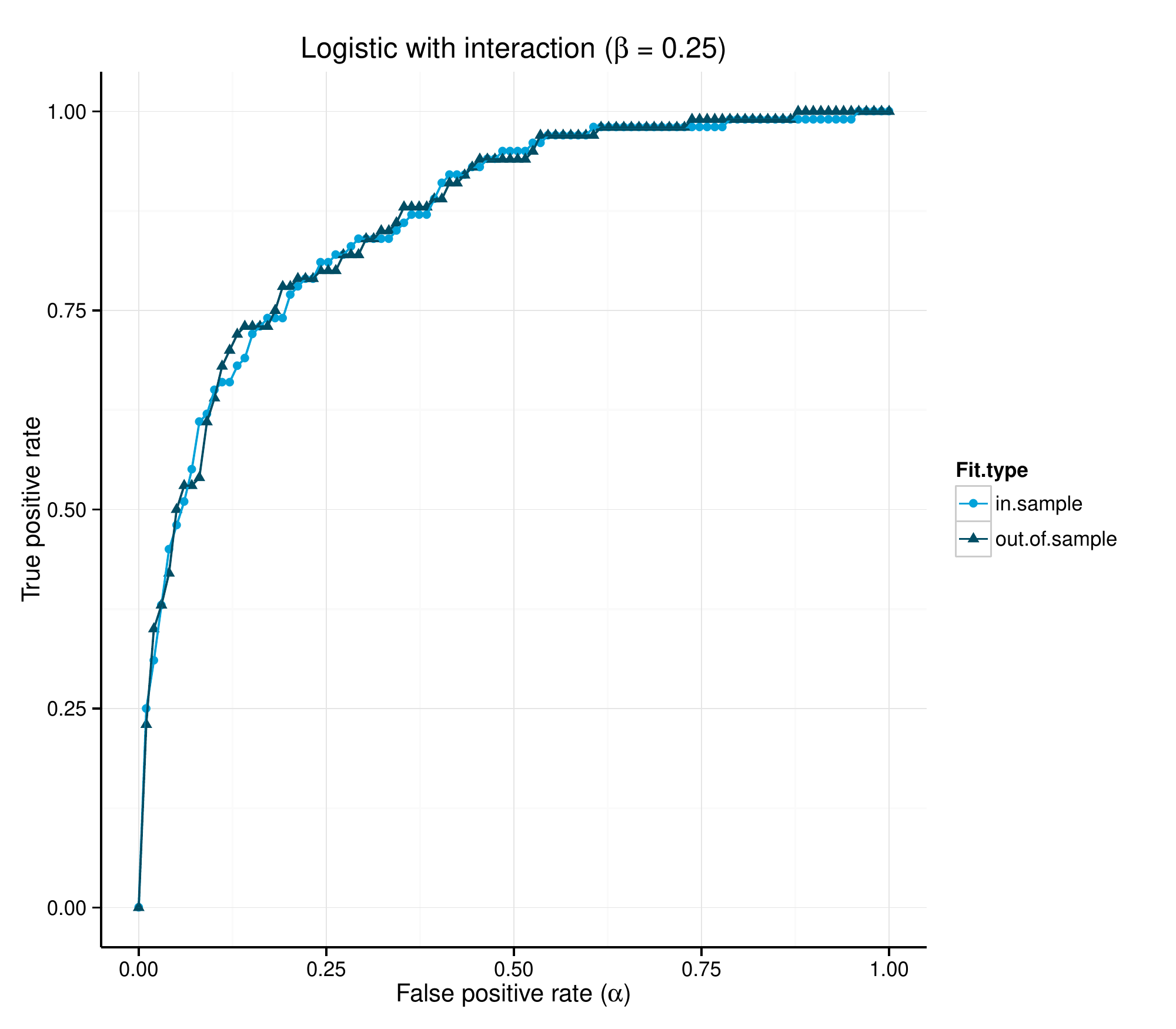} 
\includegraphics[scale=0.4]{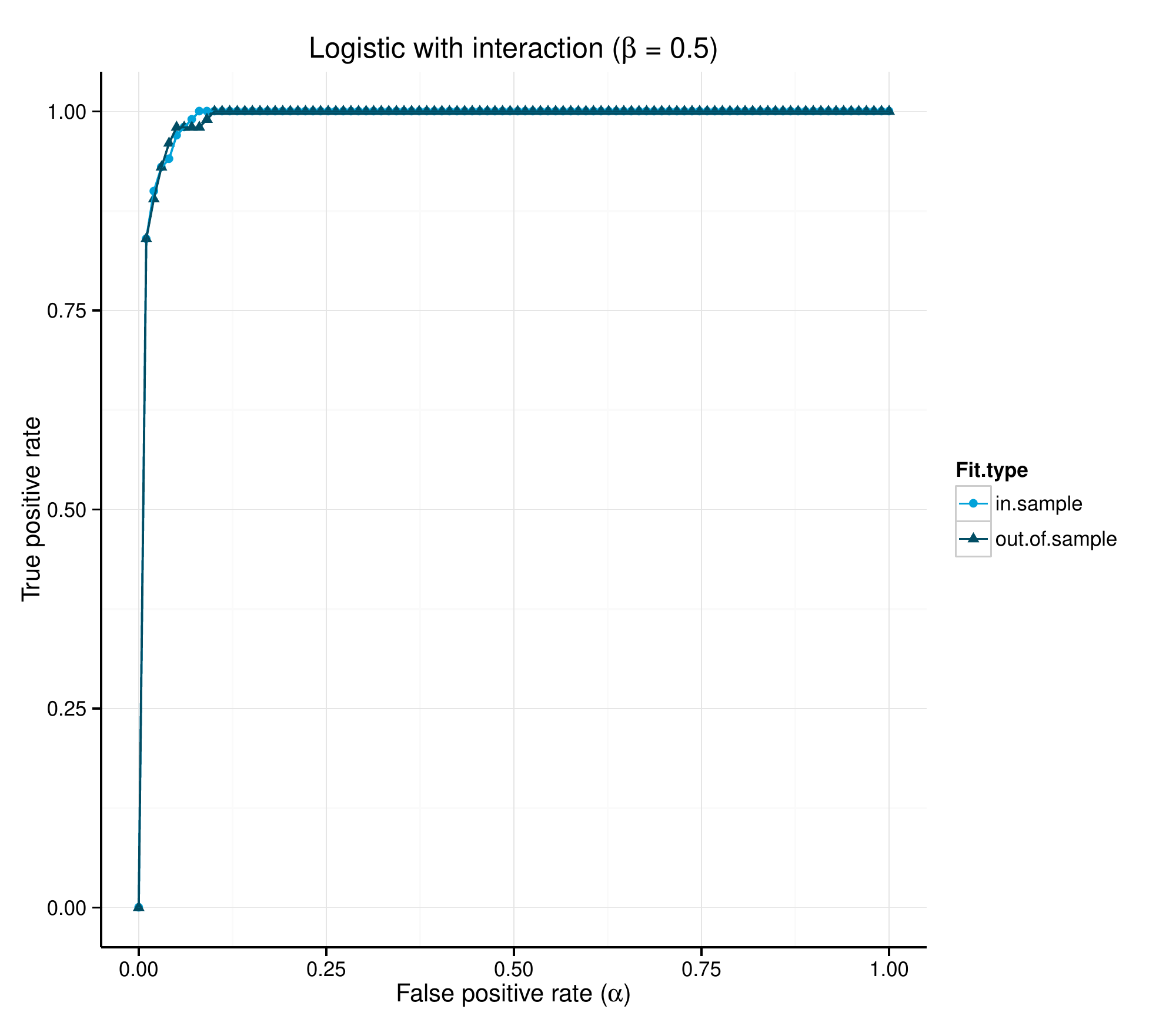} 
\begin{minipage}{16cm}
\emph{Notes:  Simulations were conducted as described in Section \ref{simulations}.  Only results for logistic2 were computed, since calculating the out-of-sample classification accuracy rate of random forests is computationally demanding.}
\end{minipage}
\end{figure}

\begin{figure}[ht]
\caption{Covariate balance table}
\label{fig: MPs balance table }
\centering
\includegraphics[scale=1.5]{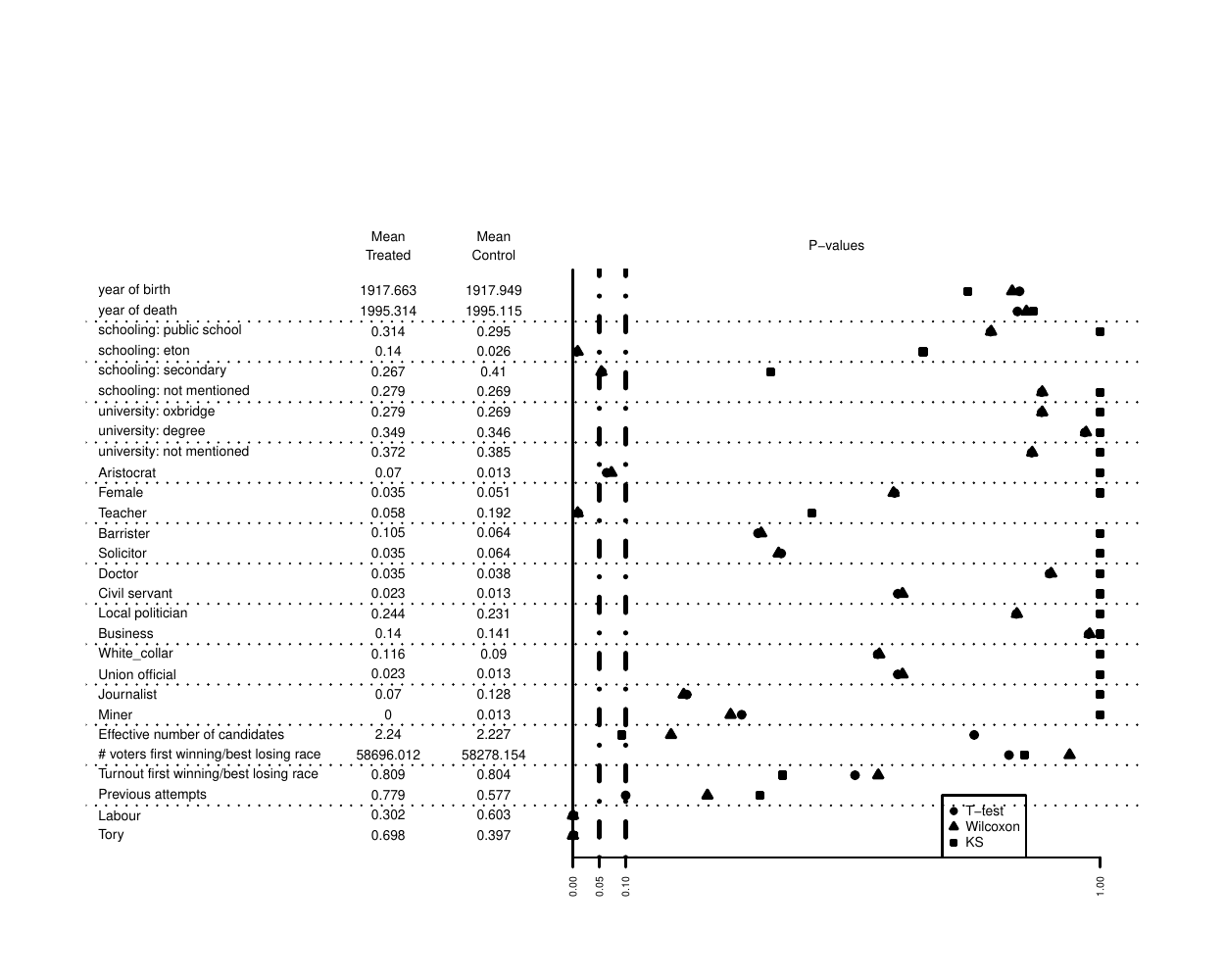}
\begin{minipage}{16cm}
\emph{Notes:  
Table 4 in EH shows the main estimates of the treatment effect. The estimates use a window of 164 to 223 observations around the winning threshold. In this figure we limited the sample to a window containing 164 observations.}    
\end{minipage}
\end{figure} 

\begin{figure}[ht]
\centering
\caption{The distribution of the winning margin by party identity \newline}
\label{fig: Null distribution CPT MPs}
\includegraphics[scale=.75]{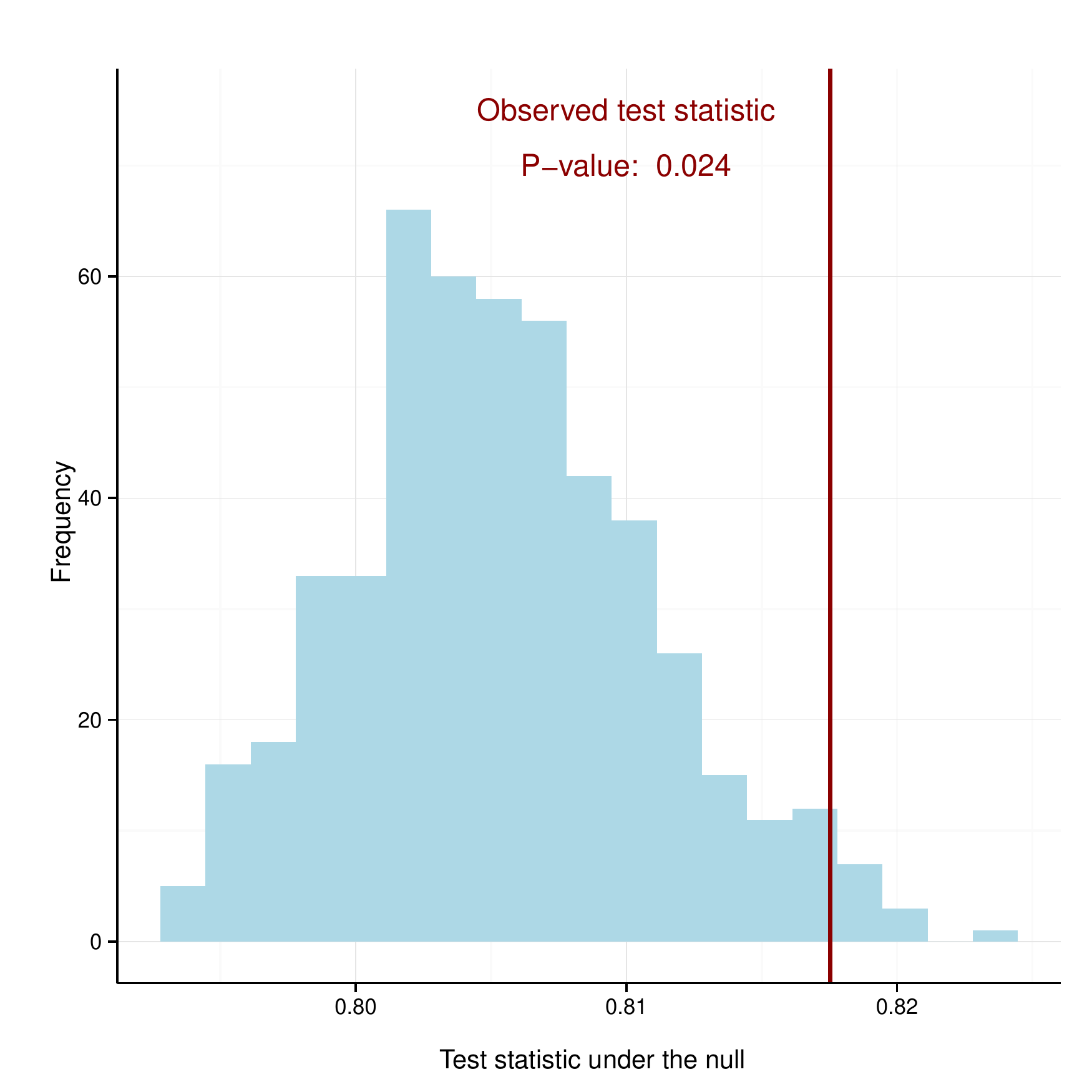}
\begin{minipage}{13cm}
\footnotesize
\emph{Notes: 
We used random forest as the classifier. Logistic regression with all two-way interactions will have more parameters than observations and therefore cannot be implemented.
Table 4 in EH shows the main estimates of the treatment effect. The estimates use a window of 164 to 223 observations around the winning threshold. In this figure we limited the sample to a window containing 164 observations.
}
\end{minipage}
\end{figure}

\begin{figure}[ht]
\caption{The distribution of the Likelihood Ratio Test P-value using all two-way interactions}
\label{fig: judges Type-I error rate LRT}
\centering
\includegraphics[scale=1]{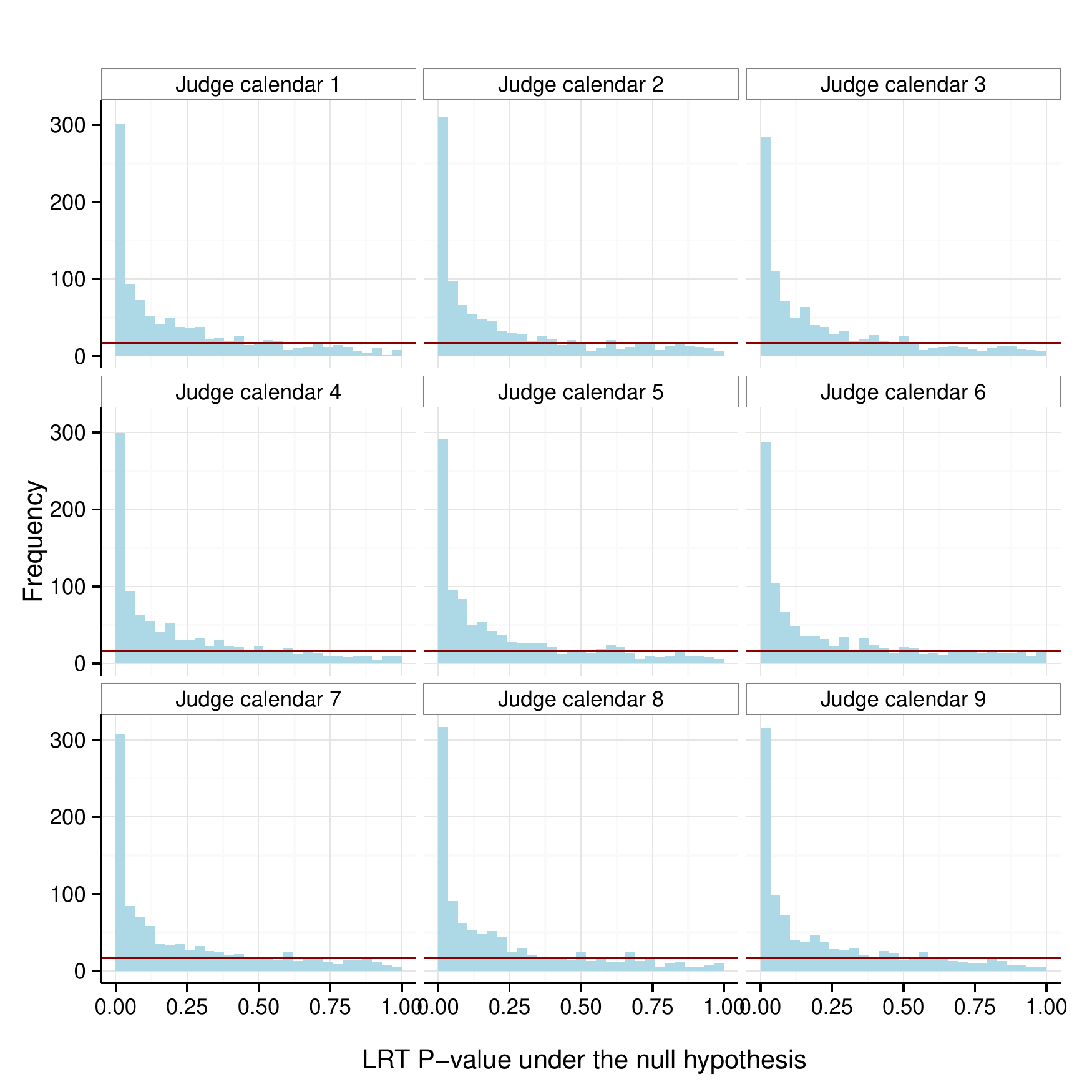}
\begin{minipage}{16cm}
\emph{Notes: Each plot shows the distribution of the LRT P-value under the null hypothesis of random assignment. The red line shows the uniform $[0,1]$ distribution that is expected under the null.}    
\end{minipage}
\end{figure} 

\clearpage
\begin{table}[ht]
\centering
\caption{Balance table within the RDD window} 
\label{tab: MPs balance window}
\begin{threeparttable}
\begin{tabular}{rrrrrr}
  \hline
 & Ave. Treat & Ave. control & T-test & Wilcoxon & KS \\ 
  \hline
year of birth & 1917.66 & 1917.95 & 0.85 & 0.83 & 0.75 \\ 
  year of death & 1995.31 & 1995.12 & 0.84 & 0.86 & 0.87 \\ 
  schooling: public school & 0.31 & 0.29 & 0.79 & 0.79 & 1.00 \\ 
  schooling: eton & 0.14 & 0.03 & 0.01 & 0.01 & 0.66 \\ 
  schooling: secondary & 0.27 & 0.41 & 0.06 & 0.05 & 0.38 \\ 
  schooling: not mentioned & 0.28 & 0.27 & 0.89 & 0.89 & 1.00 \\ 
  university: oxbridge & 0.28 & 0.27 & 0.89 & 0.89 & 1.00 \\ 
  university: degree & 0.35 & 0.35 & 0.97 & 0.97 & 1.00 \\ 
  university: not mentioned & 0.37 & 0.39 & 0.87 & 0.87 & 1.00 \\ 
  Aristocrat & 0.07 & 0.01 & 0.06 & 0.07 & 1.00 \\ 
  Female & 0.04 & 0.05 & 0.61 & 0.61 & 1.00 \\ 
  Teacher & 0.06 & 0.19 & 0.01 & 0.01 & 0.45 \\ 
  Barrister & 0.10 & 0.06 & 0.35 & 0.36 & 1.00 \\ 
  Solicitor & 0.04 & 0.06 & 0.40 & 0.39 & 1.00 \\ 
  Doctor & 0.04 & 0.04 & 0.90 & 0.91 & 1.00 \\ 
  Civil servant & 0.02 & 0.01 & 0.62 & 0.62 & 1.00 \\ 
  Local politician & 0.24 & 0.23 & 0.84 & 0.84 & 1.00 \\ 
  Business & 0.14 & 0.14 & 0.98 & 0.98 & 1.00 \\ 
  White\_collar & 0.12 & 0.09 & 0.58 & 0.58 & 1.00 \\ 
  Union official & 0.02 & 0.01 & 0.62 & 0.62 & 1.00 \\ 
  Journalist & 0.07 & 0.13 & 0.22 & 0.21 & 1.00 \\ 
  Miner & 0.00 & 0.01 & 0.32 & 0.30 & 1.00 \\ 
  Effective number of candidates & 2.24 & 2.23 & 0.76 & 0.19 & 0.09 \\ 
  \# voters first winning/best losing race & 58696.01 & 58278.15 & 0.83 & 0.94 & 0.86 \\ 
  Turnout first winning/best losing race & 0.81 & 0.80 & 0.54 & 0.58 & 0.40 \\ 
  Previous attempts & 0.78 & 0.58 & 0.10 & 0.26 & 0.35 \\ 
  Labour & 0.30 & 0.60 & 0.00 & 0.00 & 0.00 \\ 
  Tory & 0.70 & 0.40 & 0.00 & 0.00 & 0.00 \\ 
   \hline
\end{tabular}
\begin{tablenotes}
\footnotesize
\emph{Notes: The covariate balance in a window around the cut-point that includes 164 observations. }
\end{tablenotes}
\end{threeparttable}
\end{table}

\begin{figure}[ht]
\begin{center}
\caption{The covariate balance before and after matching}
\label{fig: college balance}
\includegraphics[scale=0.5]{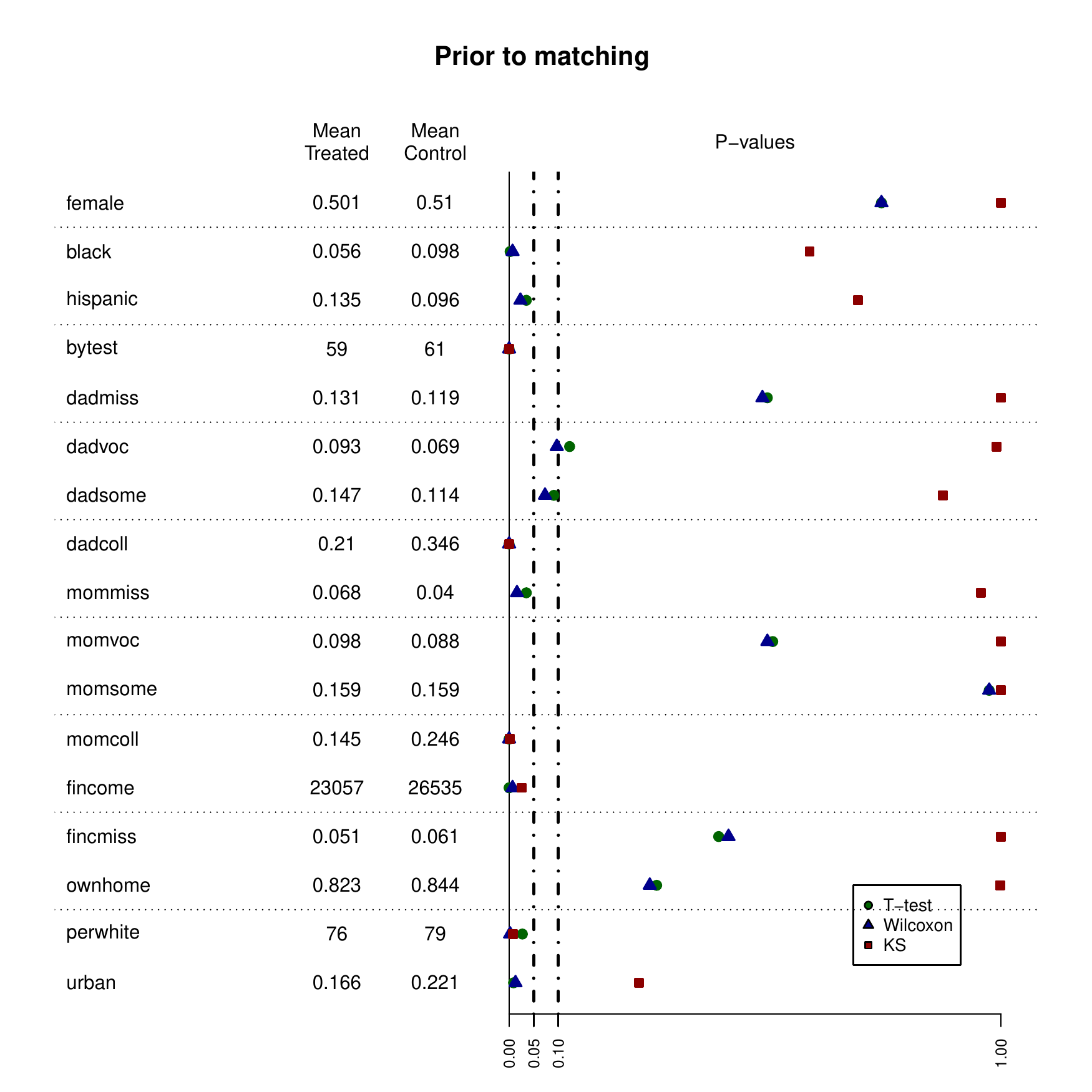}
\end{center}
\includegraphics[scale=0.5]{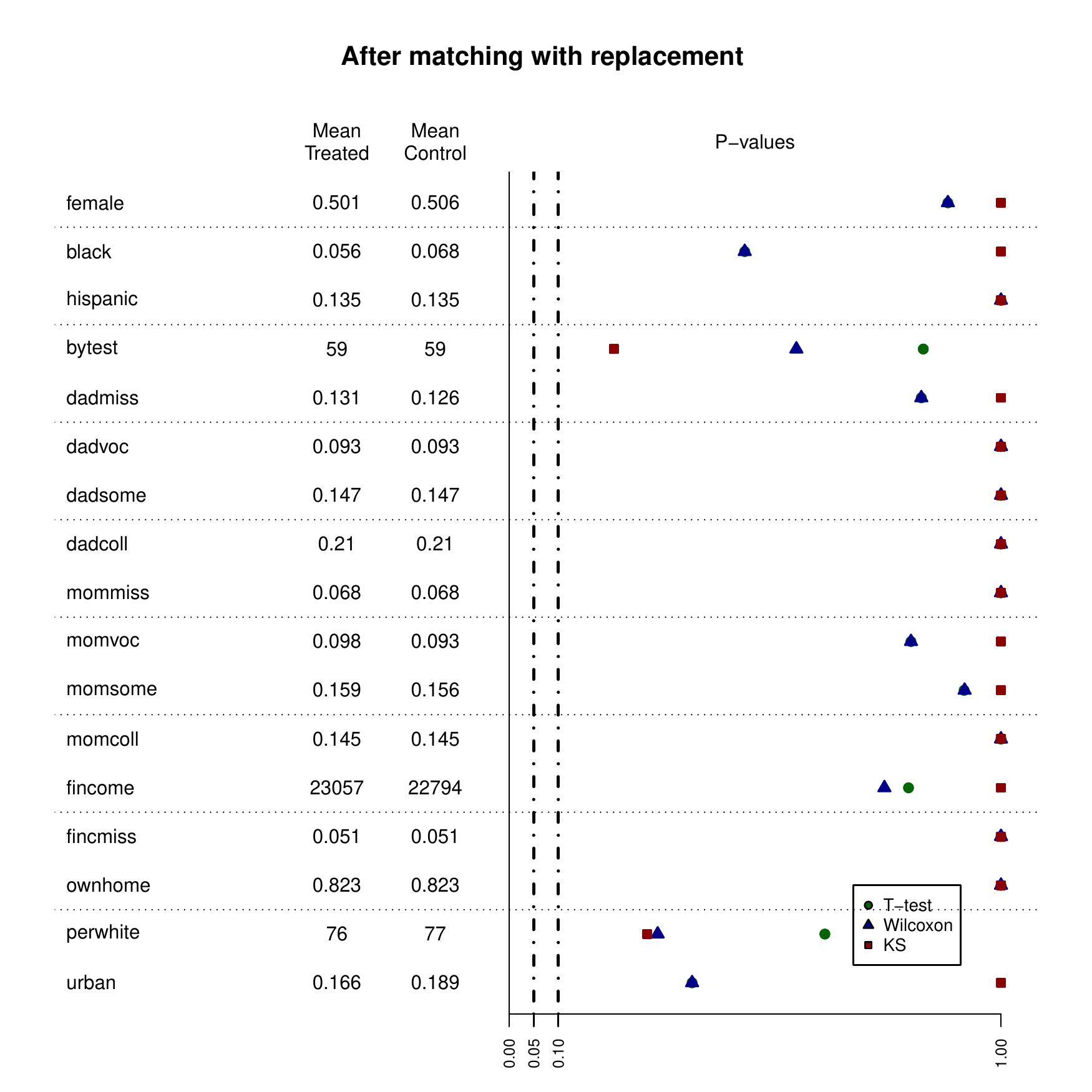}
\includegraphics[scale=0.5]{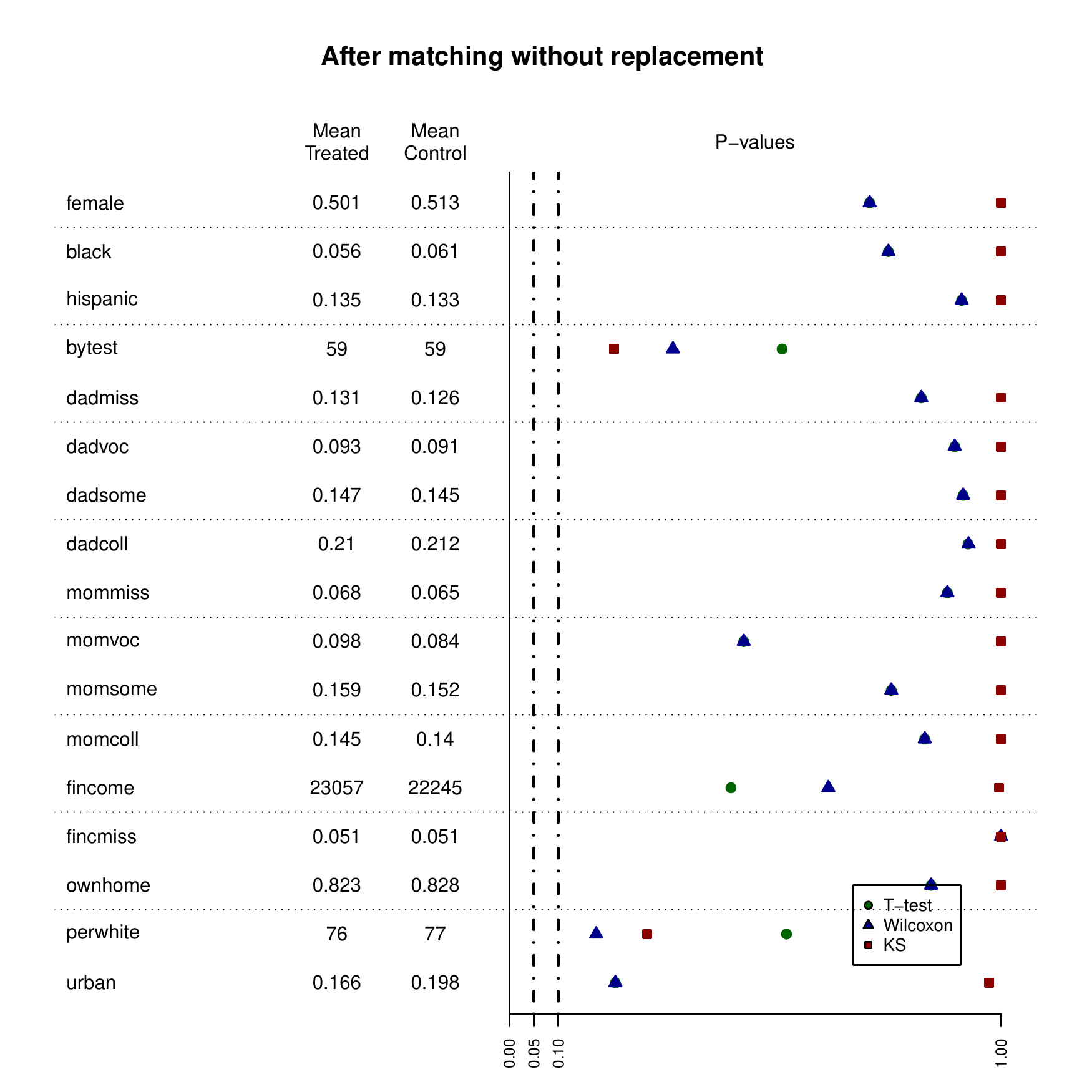}
\begin{minipage}{17cm}
\footnotesize
\emph{Notes:} The figures show the covariate balance in \cite{rouse1995} data before and after implementing a matching procedure to minimize distances on observable characteristics of students in two-year relative to students in four-year college.
\end{minipage}
\end{figure}

\FloatBarrier


\section{Appendix: List of observed defendant characteristics, \cite{green2010}} \label{appendix: data green and winik (2010)}
                                                                        
\begin{verbatim}
"Indicator: defendant female"                                                     
"Indicator: defendant not black"                                                  
"Defendant's age at arrest"                                                       

"Indicator: defendant arrested prior to arrest in sampled case"                   
"Indicator: defendant arrested on felony charge prior to arrest in sampled case"  
"Indicator: defendant arrested on drug charge prior to arrest in sample case"     
"Indicator: defendant arrested on felony drug charge prior to arrest in sampled"
"Indicator: defendant convicted prior to arrest in sampled case"                  
"Indicator: defendant convicted on felony charge prior to arrest in sampled case" 
"Indicator: defendant convicted on drug charge prior to arrest in sampled case"   
"Indicator: defendant convicted on felony drug charge prior to arrest in sampled"

"Indicator: marijuana was drug involved in crime"                                 
"Indicator: powder cocaine was drug involved in crime"                            
"Indicator: crack cocaine was drug involved in crime"                             
"Indicator: heroin was drug involved in crime"                                    
"Indicator: PCP was drug involved in crime"                                       
"Indicator: another drug was involved in crime"                                   
"Indicator: defendant charged with felony possession with intent to distribute"   
"Indicator: defendant charged with felony distribution"   
\end{verbatim}


\section{Appendix: Proofs} \label{proofs}
\setcounter{proposition}{0}
\setcounter{lemma}{0}
\setcounter{corollary}{0}
\setcounter{definition}{0}

This appendix is a (mostly) self-contained collection of technical results.  We prove the propositions presented in the main text, along with some others.  Many of the propositions presented below build off of one another, and are numbered accordingly.  As a consequence, the numbering of the propositions here does not correspond with the numbering of the propositions in the main text.  For quick reference, here is the correspondence:
\begin{itemize}
\item Proposition \ref{validity} in the main text $\mapsto$ Corollary \ref{validity3} below.
\item Proposition \ref{Sdistro} in the main text $\mapsto$ Corollary \ref{corosdistrosm} below.
\item Proposition \ref{sibound} in the main text $\mapsto$ Proposition \ref{siboundsm} below.
\end{itemize}  
For convenience, we also give here a quick reference of some notation from Section \ref{technical} of the main text: 
\begin{itemize}
\item $X$ is a $l\times p$ matrix whose rows $X_1$, ..., $X_l$ are IID from $\mathcal{F}$.
\item $Y$ is a $m \times p$ matrix whose rows $Y_1$, ..., $Y_m$ are IID from $\mathcal{G}$.
\item $ Z \equiv \left(\begin{array}{c} X \\ Y \end{array} \right) $
\item $s:\mathbb{R}^{n \times p}\mapsto \mathbb{R}$ is a fixed but otherwise arbitrary measurable function.
\item $\Pi_1$, ..., $\Pi_{n!}$ denotes some ordering of the $n!$ permutation matrices of dimension $n \times n$.  We assume that $\Pi_1 = I$, but the ordering may otherwise be arbitrary.
\item $S^{(i)} \equiv s(\Pi_i Z)$ 
\item $\displaystyle P \equiv \frac{\#\left\{i : S \le S^{(i)} \right\}}{n!}$
\end{itemize}
In addition, we adopt the following (new) notation:
\begin{itemize}
\item $r(z) \equiv \#\left\{i : s\left(z\right) \le s\left(\Pi_i z\right) \right\}$ \\ (Note that the variable $z$ does not have any special meaning of its own; it is simply used here to define the function $r$ in terms of the function $s$.)
\item $R^{(i)} \equiv r(\Pi_i Z) = \#\left\{j : s\left(\Pi_i Z\right) \le s\left(\Pi_j \Pi_i Z\right) \right\}$
\end{itemize}
Finally, note the following equality:
\begin{itemize}
\item $\displaystyle P = \frac{R^{(1)}}{n!}$
\end{itemize}


\subsection{Validity of the CPT}

In this section we show that the CPT is a valid test (that it controls the type-I error rate).  We begin with Proposition \ref{simplevalidity}, which shows that the CPT is valid in the special case that the $S^{(i)}$ are all distinct.  With this added assumption, the proof is straight-forward.

\begin{proposition} \label{simplevalidity}
Assume that $\mathcal{F} = \mathcal{G}$ and that with probability 1, the values of the $S^{(i)}$ are all distinct.  Then for any real number $\alpha$ such that $0 \le \alpha \le 1$, it follows that $\mathbb{P}\left(P \le \alpha \right) \le \alpha$.  
\end{proposition}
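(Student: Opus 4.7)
The plan is to exploit exchangeability of the rows of $Z$ under the null together with the group structure of the permutation matrices to show that $R^{(1)}$ is uniform on $\{1,2,\ldots,n!\}$, and then read off the bound on $\mathbb{P}(P\le \alpha)$ from $P = R^{(1)}/n!$.

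First I would establish that $R^{(i)} \stackrel{d}{=} R^{(1)}$ for every $i$. Since $\mathcal{F}=\mathcal{G}$, the $n$ rows of $Z$ are i.i.d., hence exchangeable, so $\Pi_i Z \stackrel{d}{=} Z$ for each fixed permutation matrix $\Pi_i$. Applying the measurable function $r$ preserves equality in distribution, so $R^{(i)} = r(\Pi_i Z) \stackrel{d}{=} r(Z) = R^{(1)}$. (This is where the group property is used implicitly: the definition $r(z) = \#\{j : s(z)\le s(\Pi_j z)\}$ already ranges $\Pi_j$ over the entire group, so $r$ is the \emph{same} function when applied to $Z$ or to $\Pi_i Z$.)

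Next I would use the distinctness hypothesis to pin down the common distribution. If the $S^{(i)}$ are almost surely distinct, then the $R^{(i)}$ form a permutation of $\{1,2,\ldots,n!\}$, so $\sum_{i=1}^{n!} I\{R^{(i)}=l\}=1$ almost surely for each $l\in\{1,\ldots,n!\}$. Taking expectations and using the equal-distribution result from the previous step yields
\begin{equation}
1 = \sum_{i=1}^{n!}\mathbb{P}(R^{(i)}=l) = n!\cdot\mathbb{P}(R^{(1)}=l),
\end{equation}
so $R^{(1)}$ is uniform on $\{1,2,\ldots,n!\}$.

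Finally, since $P = R^{(1)}/n!$, we have $\mathbb{P}(P\le \alpha) = \mathbb{P}(R^{(1)}\le \alpha n!) = \lfloor \alpha n!\rfloor/n! \le \alpha$, completing the proof. There is no real obstacle here — the argument is the classical permutation-test validity proof; the only subtlety is making sure the group/reindexing step is articulated cleanly, which is already baked into the paper's definition of $r$. The genuinely interesting work is deferred to the next proposition, where the assumption of distinctness has to be dropped.
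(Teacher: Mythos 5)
Your proposal is correct and follows essentially the same route as the paper's proof: exchangeability of the rows of $Z$ under $\mathcal{F}=\mathcal{G}$ gives $R^{(i)} \stackrel{d}{=} R^{(1)}$, and the a.s.\ distinctness of the $S^{(i)}$ forces the $R^{(i)}$ to be a permutation of $\{1,\ldots,n!\}$, so $R^{(1)}$ is uniform and $\mathbb{P}(P\le\alpha)=\lfloor \alpha n!\rfloor/n!\le\alpha$. Your indicator-sum-plus-expectation step is just a compact repackaging of the paper's argument that the events $\{R^{(i)}=j\}$ are disjoint, equiprobable, and exhaustive.
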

\begin{proof}
Let $i$ and $j$ be integers such that $1 \le i \le n!$ and $1 \le j \le n!$.  Let $E_{ij}$ denote the event $\left\{R^{(i)} = j \right\}$.  In the following paragraphs we will show that for any value of $j$, the events $E_{1j}$, $E_{2j}$, ..., $E_{n!j}$ are disjoint and have have equal probabilities, and that the union of these events has probability 1.  From this it follows that $\mathbb{P}(E_{ij}) = 1/n!$ for all $i, j$.  This implies that $\mathbb{P}\left(R^{(i)} \le j\right) = \frac{j}{n!}$ for all $i$, which implies in particular that $\mathbb{P}\left(R^{(1)} \le j\right) = \frac{j}{n!}$, which further implies that $\mathbb{P}\left(P \le \frac{j}{n!}\right) = \frac{j}{n!}$.  The desired result then follows immediately.

To see that that the events $E_{1j}$, $E_{2j}$, ..., $E_{n!j}$ are all disjoint, note that 
\begin{align}
R^{(i)} = R^{(i')} &\implies \#\left\{k : s\left(\Pi_{i}Z\right) \le s\left(\Pi_k \Pi_{i}Z\right) \right\} = \#\left\{k : s\left(\Pi_{i'}Z\right) \le s\left(\Pi_k \Pi_{i'}Z\right) \right\}  \label{distinctline1}\\
&\implies \#\left\{k : s\left(\Pi_{i}Z\right) \le s\left(\Pi_k Z\right) \right\} = \#\left\{k : s\left(\Pi_{i'}Z\right) \le s\left(\Pi_k Z\right) \right\} \\
&\implies s\left(\Pi_{i}Z\right) = s\left(\Pi_{i'}Z\right) \\
&\implies i = i'  \label{distinctline4}
\end{align}
and thus $i \ne i' \implies R^{(i)} \ne R^{(i')}$.

To see that the events $E_{1j}$, $E_{2j}$, ..., $E_{n!j}$ all have equal probabilities, first note that because $\mathcal{F} = \mathcal{G}$, the rows of $Z$ are IID, and therefore exchangeable.  Thus for any $i$ we have that $\Pi_i Z$ is equal in distribution to $Z$.  From this it follows that the $R^{(i)}$ are all equal in distribution (recall $R^{(i)} \equiv r(\Pi_i Z)$).  This implies in particular that for any $j$ such that $1 \le j \le n!$ the events
\begin{equation}
\left\{R^{(1)} = j \right\}, \left\{R^{(2)} = j \right\}, ..., \left\{R^{(n!)} = j \right\}
\end{equation}
all have equal probability.  

Finally, note that the union of the events $E_{1j}$, $E_{2j}$, ..., $E_{n!j}$ has probability 1, because $R^{(i)}$ must be an integer between 1 and $n!$ (inclusive).
\end{proof}

Our next goal is to remove the restriction that the $S^{(i)}$ must all be distinct.  If the $S^{(i)}$ are not distinct, the proof of Proposition \ref{simplevalidity} breaks down, because the $R^{(i)}$ will not be distinct either, and thus the events $E_{1k}$, $E_{2k}$, ..., $E_{n!k}$ will not be disjoint.  To solve this problem, we will break ties at random.
 
We first introduce some additional notation.  
\begin{itemize}
\item Let $\Upsilon_1, ..., \Upsilon_{n!}$ be IID Uniform[0,1] random variables.
\item Let $\tilde{R}^{(i)} \equiv \#\left\{j : s\left(\Pi_i Z\right) < s\left(\Pi_j \Pi_i Z\right) \right\} + \#\left\{j : s\left(\Pi_i Z\right) = s\left(\Pi_j \Pi_i Z\right) \mbox{ and } \Upsilon_i < \Upsilon_j \right\}$
\item Let $\displaystyle \tilde{P} \equiv \frac{\tilde{R}^{(i)}}{n!}$ 
\end{itemize}

We can now show
\begin{proposition} \label{validity2}
Assume that $\mathcal{F} = \mathcal{G}$.  Then for any real number $\alpha$ such that $0 \le \alpha \le 1$, it follows that $\mathbb{P}\left(\tilde{P} \le \alpha \right) \le \alpha$.  
\end{proposition}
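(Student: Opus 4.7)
The plan is to reproduce the argument of Proposition \ref{simplevalidity}, but with the auxiliary IID uniforms $\{\Upsilon_i\}$ taking over the role that distinctness of the $S^{(i)}$'s played there. The two ingredients to establish are: (a) the $\tilde{R}^{(i)}$'s are almost surely pairwise distinct integers in $\{1,\ldots,n!\}$, and (b) under the null they are identically distributed. Once these are in hand, the pigeonhole bookkeeping used at the end of Proposition \ref{simplevalidity} forces each $\tilde{R}^{(i)}$ to be uniform on $\{1,\ldots,n!\}$, and the same computation as there then delivers $\mathbb{P}(\tilde{P}\le\alpha)\le\alpha$ for every $\alpha\in[0,1]$.

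Step (a) is a tie-broken version of the implication chain (\ref{distinctline1})--(\ref{distinctline4}). I would interpret $\tilde{R}^{(i)}$ as the rank of the pair $(s(\Pi_i Z),\Upsilon_i)$ in lexicographic order among the $n!$ analogous pairs. Two ranks can coincide only if two of these lex pairs agree, in which case either the $s$-components are equal and two independent continuous $\Upsilon$'s are forced to coincide (a probability-zero event), or the $s$-components disagree and a strict inequality in the rank comparison prevents the ranks from coinciding. Hence the $\tilde{R}^{(i)}$'s are almost surely distinct.

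Step (b) uses two symmetries: under $\mathcal{F}=\mathcal{G}$ all $n$ rows of $Z$ are IID, so $\Pi_k Z \stackrel{d}{=} Z$ for every $k$; and the $\Upsilon_j$'s are IID and independent of $Z$, so any permutation of their indices preserves the joint distribution. Applying both simultaneously, with the relabeling of the $\Upsilon$-indices chosen to match the left multiplication by $\Pi_i$ appearing inside $s(\Pi_j \Pi_i Z)$, transforms the functional form of $\tilde{R}^{(i)}$ into that of $\tilde{R}^{(1)}$, yielding equality in distribution.

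I expect step (b) to be the main obstacle. The definition of $\tilde{R}^{(i)}$ couples a composite permutation index (inside $s(\Pi_j \Pi_i Z)$) with the raw index $j$ of the tie-breaker $\Upsilon_j$, and the two transform differently under a group-theoretic relabeling. The reconciliation relies on the fact that any bijection of the $\Upsilon$-indices leaves their joint law unchanged, which is what allows the two substitutions to be carried out independently and then recombined into the functional form for $\tilde{R}^{(1)}$. Once this is handled, the remainder of the argument repeats Proposition \ref{simplevalidity} essentially verbatim.
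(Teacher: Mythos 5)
Your overall strategy coincides with the paper's: the paper's proof of this proposition consists of nothing more than asserting your steps (a) and (b) --- that the $\tilde{R}^{(i)}$ are almost surely distinct and identically distributed --- and then rerunning the pigeonhole argument of Proposition \ref{simplevalidity}. So you have correctly isolated what must be shown, and you have correctly identified the delicate point. The trouble is that the obstacle you flag in step (b) is fatal to the definition as printed, and the relabeling you propose does not close it; it also undermines step (a). Pathwise, the relabeling of the $\Upsilon$-indices would have to send $1 \mapsto i$ while fixing (setwise) the tie set $\{j : s(\Pi_i Z) = s(\Pi_j \Pi_i Z)\}$; that set always contains $j = 1$ (since $\Pi_1\Pi_i = \Pi_i$) but need not contain $i$, so no such relabeling exists in general. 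Concretely, take $n = 2$ with $\Pi_1 = I$, $\Pi_2$ the swap, and suppose $\mathbb{P}(S^{(1)} = S^{(2)}) = 0$. Reading the definition literally, $\tilde{R}^{(1)} = I\{S^{(1)} < S^{(2)}\}$, whereas $\tilde{R}^{(2)} = I\{S^{(2)} < S^{(1)}\} + I\{\Upsilon_2 < \Upsilon_1\}$, the extra indicator arising exactly because $j=1$ always lands in the tie set for $i=2$. These two quantities coincide with probability $1/4$ and are not identically distributed (one is Bernoulli, the other Binomial$(2,1/2)$); moreover $\mathbb{P}(\tilde{P} \le 0) = \mathbb{P}(\tilde{R}^{(1)} = 0) = 1/2$, so the claimed bound fails at $\alpha = 0$. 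Thus neither your argument nor the paper's one-line assertion is valid for $\tilde{R}^{(i)}$ as defined.

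The repair is precisely the reading you adopt when you call $\tilde{R}^{(i)}$ a lexicographic rank, but it must be built into the definition rather than derived from it: the tie-breaker has to be indexed by the composite permutation, so that $\tilde{R}^{(i)}$ is the rank of the pair $(S^{(i)}, \Upsilon_i)$ within the single, $i$-independent collection $\{(S^{(j)}, \Upsilon_j)\}_{j=1}^{n!}$; and the rank must count the pair itself (compare with $\Upsilon_i \le \Upsilon_j$, or add one), so that the ranks form a random bijection onto $\{1,\dots,n!\}$ rather than $\{0,\dots,n!-1\}$ --- with the exclusive rank, $\tilde{R}^{(1)} = 0$ occurs with probability $1/n!$ and the bound again fails for small $\alpha$, and your claim that the pigeonhole forces uniformity on $\{1,\dots,n!\}$ would be unjustified. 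With that amended definition both of your steps do go through cleanly: distinctness is immediate because the $\Upsilon_j$ are a.s. distinct and the ranks are taken within one common collection, and identical distribution follows because $(S^{(j)})_j$ is invariant in law under the transitive right-multiplication relabeling while $(\Upsilon_j)_j$, being IID and independent of $Z$, is invariant under the same relabeling, so the two substitutions combine exactly as you describe. You should also verify that $P \ge \tilde{P}$ survives the amendment (it does, since the tie-breaking term is dominated by the full tie count), as Corollary \ref{validity3} depends on it.
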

\begin{proof}
The $\tilde{R}^{(i)}$ are clearly all distinct with probability 1, and also identically distributed.  We may therefore simply replace $R^{(i)}$ with $\tilde{R}^{(i)}$ and make the appropriate changes to lines (\ref{distinctline1})--(\ref{distinctline4}) in the proof of Proposition \ref{simplevalidity}.
\end{proof}

Given Proposition \ref{validity2} we can also show
\begin{corollary} \label{validity3}
Assume that $\mathcal{F} = \mathcal{G}$.  Then for any real number $\alpha$ such that $0 \le \alpha \le 1$, it follows that $\mathbb{P}\left(P \le \alpha \right) \le \alpha$.  
\end{corollary}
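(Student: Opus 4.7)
The plan is to deduce the corollary from Proposition \ref{validity2} by showing $P \le \tilde{P}$ pointwise, so that any event of the form $\{P \le \alpha\}$ is contained in $\{\tilde{P} \le \alpha\}$, and then apply the already-established bound on the latter.

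First I would unpack the two definitions and compare them term by term. For each $i$,
\begin{equation}
R^{(i)} = \#\left\{j : s(\Pi_i Z) < s(\Pi_j \Pi_i Z)\right\} + \#\left\{j : s(\Pi_i Z) = s(\Pi_j \Pi_i Z)\right\},
\end{equation}
whereas $\tilde{R}^{(i)}$ is the same expression with the full tie count replaced by the strictly smaller count $\#\{j : s(\Pi_i Z) = s(\Pi_j \Pi_i Z) \text{ and } \Upsilon_i < \Upsilon_j\}$. Hence $\tilde{R}^{(i)} \le R^{(i)}$ surely, and dividing by $n!$ (and specializing to $i=1$) gives $\tilde{P} \le P$ with probability one.

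It follows that $\{P \le \alpha\} \subseteq \{\tilde{P} \le \alpha\}$ as events, and therefore
\begin{equation}
\mathbb{P}(P \le \alpha) \le \mathbb{P}(\tilde{P} \le \alpha) \le \alpha,
\end{equation}
where the last inequality is Proposition \ref{validity2}. No step here is really an obstacle; the only subtlety is making sure the direction of the inequality between $P$ and $\tilde{P}$ is correct, which is why randomized tie-breaking gives a (weakly) smaller rank and thus a (weakly) smaller $p$-value than the conservative $\le$-rank used in $P$. The corollary then follows immediately, and we do not need any new exchangeability argument because all of that work is already absorbed into Proposition \ref{validity2}.
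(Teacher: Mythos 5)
Your proposal is correct and follows essentially the same route as the paper: establish $\tilde{P} \le P$ pointwise (the paper states this in one line; you justify it by comparing the tie-counting terms) and then invoke Proposition \ref{validity2} via monotonicity of the events. The only quibble is that the tie count with the $\Upsilon_i < \Upsilon_j$ restriction is weakly, not strictly, smaller in general, but that is all the argument needs.
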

\begin{proof}
Note that $P \ge \tilde{P}$, which implies that $\mathbb{P}\left(P \le \alpha \right) \le \mathbb{P}\left(\tilde{P} \le \alpha \right)$, and then cite Proposition \ref{validity2}.
\end{proof}

We would also like to generalize Proposition \ref{simplevalidity} to allow for the function $s$ to be random, in order to allow for randomized algorithms such as random forests.  This can be accomplished as follows.  Let $s^{(1)}$, $s^{(2)}$, ..., $s^{(n!)}$ be a sequence of random measurable functions mapping $\mathbb{R}^{n \times p}$ to $\mathbb{R}$.  Assume that the $s^{(i)}$ are mutually independent and identically distributed.  Then simply re-define $S^{(i)} \equiv s^{(i)}(\Pi_i Z)$.  Under this definition, the $S^{(i)}$ (and, by extension, the $R^{(i)}$) continue to be identically distributed.  If the $S^{(i)}$ are not all distinct, this can be handled as in Proposition \ref{validity2}.


\subsection{Power}

This section is composed of three subsections.  The first contains some notation, including both notation from the main text and new notation.  The second subsection contains results that lead to the proof of Proposition \ref{Sdistro} in the main text, and the third contains results that lead to the proof of Proposition \ref{sibound} in the main text. 

\subsubsection{Notation}

Recall Definition 1 from the main text.
\begin{definition}
Let $Z$, $\kappa$, and $\mathbf{Z}$ be defined as in the main text.  Let $\tilde{X} \sim \mathcal{F}$ and $\tilde{Y} \sim \mathcal{G}$ be $1 \times p$ random vectors, and assume that $\tilde{X}$ and $\tilde{Y}$ are independent of $Z$ and of each other.  We say that a function $\hat{f}: \mathbb{R}^p\times\mathbb{R}^{(n-2\kappa) \times p} \mapsto \{0, 1\}$ is $(\kappa, \delta, \gamma)$-predictive under $\mathcal{F}$ and $\mathcal{G}$ if and only if both of the following are true: 
$$
\mathbb{P}\left\{\mathbb{P}\left[\hat{f}\left(\tilde{X}, \mathbf{Z}\right) = 1 \given \mathbf{Z}\right] \ge 0.5 + \delta\right\} > 1-\gamma
$$
and
$$
\mathbb{P}\left\{\mathbb{P}\left[\hat{f}\left(\tilde{Y}, \mathbf{Z}\right) = 0 \given \mathbf{Z}\right] \ge 0.5 + \delta\right\} > 1-\gamma.
$$
\end{definition}
\noindent
Recall also the following notation from the main text.
\begin{itemize}
\item $\Pi_1^{(X)}$, $\Pi_2^{(X)}$, ..., $\Pi_{l!}^{(X)}$ denotes some ordering of the subset of $n\times n$ permutation matrices that permute only the first $l$ rows
\item $\Pi_1^{(Y)}$, $\Pi_2^{(Y)}$, ..., $\Pi_{m!}^{(Y)}$ denotes some ordering of the subset of $n\times n$ permutation matrices that permute only the final $m$ rows
\item $\displaystyle a(z) \equiv \frac{1}{2\kappa} \left\{\sum_{i=l-\kappa+1}^{l} \hat{f}(z_i; \mathbf{z}) + \sum_{i=n-\kappa+1}^{n} \left[1 - \hat{f}(z_i; \mathbf{z})\right] \right\}$
\item $\displaystyle s(z) = \frac{1}{l!m!}\sum_{i,j}a\left(\Pi_i^{(X)} \Pi_j^{(Y)} z\right)$
\end{itemize}
In addition:
\begin{itemize}
\item Let $\Pi_{ijk} \equiv \Pi_i^{(X)}\Pi_j^{(Y)}\Pi_k$
\item Let $Z^{(i)} \equiv \Pi_{i}Z$
\item Let $Z^{(ijk)} \equiv \Pi_{ijk}Z$
\item Let $\hat{F}_j \equiv \hat{f}\left(Z_j; \mathbf{Z}\right)$
\item Let $\hat{F}^{(i)}_j \equiv \hat{f}\left(Z^{(i)}_j; \mathbf{Z^{(i)}}\right)$
\item Let $A \equiv a\left(Z\right)$
\item Let $A^{(ijk)} \equiv a\left(Z^{(ijk)}\right)$
\end{itemize}
Note in particular:
\begin{itemize}
\item $\displaystyle S^{(k)} = s(\Pi_kZ) = \frac{1}{l!m!}\sum_{i,j}a\left(\Pi^{(ijk)} Z\right) = \frac{1}{l!m!}\sum_{i,j}A^{(ijk)}$
\item $\displaystyle S = S^{(1)} = \frac{1}{l!m!}\sum_{i,j}A^{(ij1)}$
\end{itemize}
In order to more easily index the training set and test set, we also define:
\begin{itemize}
\item Let $\mathbf{T}_X \equiv \{1,2,...,l-\kappa\}$ 
\item Let $T_X \equiv \{l-\kappa+1,l-\kappa+2,...,l\}$
\item Let $\mathbf{T}_Y \equiv \{l+1,l+2,...,n-\kappa\}$
\item Let $T_Y \equiv \{n-\kappa+1, n-\kappa+2,...,n\}$
\item Let $\mathbf{T} \equiv \mathbf{T}_X \cup \mathbf{T}_Y$
\item Let $T \equiv T_X \cup T_Y$
\item Let $\mathbf{t}$ be an arbitrary element of $\mathbf{T}$, let $\mathbf{t}_X$ be an arbitrary element of $\mathbf{T}_X$, let $t$ be an arbitrary element of $T$, etc.
\end{itemize} 

\subsubsection{Bounding $\mathbb{P}\left[S \le 0.5 + \delta/4 \right]$}

\begin{lemma} \label{markovlemma}
Let $U_1$, $U_2$, ..., $U_N$ be real-valued non-negative random variables with finite expectations, and let $c$, $d$, and $e$ be positive real numbers.  
Assume $\mathbb{P}(U_i \le c) < e$ for all $i$.  Then $\mathbb{P}(\bar{U} \le c - d) < \frac{ec}{d}$.
\end{lemma}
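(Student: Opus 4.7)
The plan is to reduce the claim to an application of Markov's inequality after truncating each $U_i$ from above. The condition $\mathbb{P}(U_i \le c) < e$ says that each $U_i$ is usually at least $c$, and we want to conclude that the average is usually at least $c-d$; Markov's inequality is the natural tool, but we first need a non-negative random variable whose expectation we can bound.

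First I would introduce $V_i \equiv \max(c - U_i, 0)$. Because $U_i \ge 0$, we have $V_i \in [0, c]$, and because $V_i > 0$ only when $U_i < c$ (in particular only when $U_i \le c$), the hypothesis gives $\mathbb{P}(V_i > 0) \le \mathbb{P}(U_i \le c) < e$. Combining these two observations,
\[
\mathbb{E}[V_i] \;\le\; c\cdot\mathbb{P}(V_i > 0) \;<\; ec,
\]
and hence $\mathbb{E}[\bar V] < ec$ as well.

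Next I would relate $\bar V$ to $\bar U$. Since $V_i \ge c - U_i$ pointwise, averaging gives $\bar V \ge c - \bar U$, so the event $\{\bar U \le c - d\}$ is contained in $\{\bar V \ge d\}$. By Markov's inequality applied to the non-negative random variable $\bar V$,
\[
\mathbb{P}(\bar U \le c - d) \;\le\; \mathbb{P}(\bar V \ge d) \;\le\; \frac{\mathbb{E}[\bar V]}{d} \;<\; \frac{ec}{d},
\]
which is the desired bound.

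There is essentially no serious obstacle here; the only minor point that requires care is ensuring $V_i$ is non-negative and bounded above by $c$ (which uses $U_i \ge 0$), so that the elementary bound $\mathbb{E}[V_i] \le c\cdot\mathbb{P}(V_i > 0)$ is legitimate. Note that the $U_i$ need not be independent or identically distributed for the argument to go through, since we only use linearity of expectation and Markov's inequality.
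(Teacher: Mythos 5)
Your proof is correct and follows essentially the same route as the paper: truncate at the threshold $c$, bound the expectation of the resulting variable using $\mathbb{P}(U_i \le c) < e$, and apply Markov's inequality. The only cosmetic difference is that you work directly with the deficit $V_i = \max(c - U_i, 0)$, whereas the paper first forms $V_i = c\,I(U_i \ge c)$ and then applies Markov to $W = c - \bar{V}$; both arguments are valid and yield the same bound.
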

\begin{proof}
Let 
\begin{equation}
V_i \equiv \begin{cases} 0 & \mbox{if } U_i < c \\ c & \mbox{if } c \le U_i \end{cases} 
\end{equation}
and note that $V_i \le U_i$ for all $i$, and thus $\bar{V} \le \bar{U}$.  Note also that $V_i \le c$ for all $i$, and thus $\bar{V} \le c$.  Let $\mu_i \equiv \mathbb{E}\left(V_i\right)$ and $\mu \equiv \mathbb{E}\left(\bar{V}\right)$.   Note that $\mu_i > (1-e)c$ for all $i$, and thus $\mu > (1-e)c$.  

Let $W \equiv c - \bar{V}$.  Then $W \ge 0$ and $\mathbb{E}(W) = c - \mu$.  Thus by Markov's inequality
\begin{equation}
\mathbb{P}\left(W \ge d \right) \le \frac{c-\mu}{d}
\end{equation}
which implies that
\begin{align}
\mathbb{P}\left(\bar{V} \le c - d \right) &\le \frac{c-\mu}{d} \\
&< \frac{c-(1-e)c}{d} \\
&= \frac{ec}{d}.
\end{align}
Since $\bar{V} \le \bar{U}$, it follows that $\mathbb{P}\left(\bar{U} \le c - d \right) < \frac{ec}{d}$.
\end{proof}

\begin{proposition} \label{sdistrosm}
Assume that $\mathcal{F} \ne \mathcal{G}$ and that $\hat{f}$ is $(\kappa, \delta, \gamma)$-predictive under $\mathcal{F}$ and $\mathcal{G}$, with $\delta < 0.5$.  Let $\epsilon$ and $\zeta$ be positive real numbers such that $\epsilon < \zeta < \delta$. Then 
\begin{equation}
\mathbb{P}\left[S \le 0.5 + \epsilon \right] < \frac{2\gamma + \mathrm{exp}\left[-4\kappa(\delta-\zeta)^2\right]}{\zeta - \epsilon}.
\end{equation}
\end{proposition}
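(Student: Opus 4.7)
The plan is to (i) bound the probability that a single out-of-sample accuracy $A \equiv a(Z)$ falls at or below $0.5 + \zeta$, and then (ii) upgrade this pointwise bound to a bound on the average $S$ via Lemma \ref{markovlemma}. The reduction to step (i) works because the rows of $X$ are IID, the rows of $Y$ are IID, and $\Pi_i^{(X)}\Pi_j^{(Y)}$ only permutes within these two groups, so $\Pi_i^{(X)}\Pi_j^{(Y)}Z$ is equal in distribution to $Z$, and therefore each $A^{(ij1)}$ is distributed as $A$.

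For step (i), I would condition on the training data $\mathbf{Z}$. Since all rows of $Z$ are mutually independent, the test rows $\{Z_t : t \in T_X \cup T_Y\}$ are independent of $\mathbf{Z}$ and of each other. Hence, given $\mathbf{Z}$, the indicators $\hat{F}_t = \hat{f}(Z_t; \mathbf{Z})$ for $t \in T_X$ are conditionally IID Bernoulli with mean $p_X(\mathbf{Z}) \equiv \mathbb{P}[\hat{f}(\tilde{X}, \mathbf{Z}) = 1 \mid \mathbf{Z}]$, and analogously for $t \in T_Y$ with mean $p_Y(\mathbf{Z})$; the two groups are also mutually conditionally independent. By the $(\kappa, \delta, \gamma)$-predictive hypothesis and a union bound, the ``good'' event
\begin{equation}
G \equiv \left\{p_X(\mathbf{Z}) \ge 0.5 + \delta\right\} \cap \left\{p_Y(\mathbf{Z}) \le 0.5 - \delta\right\}
\end{equation}
has probability exceeding $1 - 2\gamma$. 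On $G$, the conditional mean $\mathbb{E}[A \mid \mathbf{Z}] = \tfrac{1}{2}\left[p_X(\mathbf{Z}) + 1 - p_Y(\mathbf{Z})\right] \ge 0.5 + \delta$, so applying Hoeffding's inequality to the $2\kappa$ conditionally independent $[0,1]$-valued summands defining $2\kappa A$ yields, on $G$,
\begin{equation}
\mathbb{P}\left[A \le 0.5 + \zeta \mid \mathbf{Z}\right] \le \exp\!\left[-4\kappa(\delta - \zeta)^2\right].
\end{equation}
Integrating over $\mathbf{Z}$ and combining with $\mathbb{P}(G^c) < 2\gamma$ gives the unconditional bound $\mathbb{P}[A \le 0.5 + \zeta] < 2\gamma + \exp[-4\kappa(\delta - \zeta)^2]$.

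For step (ii), I would apply Lemma \ref{markovlemma} with $N = l!m!$ to the non-negative variables $\{A^{(ij1)}\}$, each bounded by $1$ and all distributed as $A$, using parameters $c = 0.5 + \zeta$, $d = \zeta - \epsilon > 0$, and $e = 2\gamma + \exp[-4\kappa(\delta - \zeta)^2]$. The lemma then yields $\mathbb{P}[S \le 0.5 + \epsilon] < ec/d$, and since $c = 0.5 + \zeta < 1$ (because $\zeta < \delta < 0.5$), this is strictly less than $e/d$, which is the stated inequality.

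The main obstacle will be the conditional Hoeffding step: one must carefully justify that, given $\mathbf{Z}$, the $2\kappa$ test-set indicators are genuinely conditionally independent Bernoulli variables with the claimed means. This rests on the fact that $\hat{f}$ is a deterministic measurable function and that the test rows are independent of the training rows. A minor subtlety, which turns out not to be an obstacle, is that the $\{A^{(ij1)}\}$ are highly dependent (overlapping splits share test points); however, Lemma \ref{markovlemma} is proved by an expectation-level truncation argument that never uses independence, so the average $S$ can still be controlled in this way. No sharper tail bound than Hoeffding is required, because the final $1/(\zeta - \epsilon)$ factor in the bound arises entirely from the Markov step inside Lemma \ref{markovlemma}.
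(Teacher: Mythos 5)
Your proposal is correct and follows essentially the same route as the paper: decompose on the ``good'' event of probability exceeding $1-2\gamma$ from the $(\kappa,\delta,\gamma)$-predictive hypothesis, apply a Hoeffding-type tail bound to the $2\kappa$ conditionally independent test-set indicators to control $\mathbb{P}[A \le 0.5+\zeta]$, and then pass from $A$ to the average $S$ via Lemma \ref{markovlemma} with $c = 0.5+\zeta$, $d = \zeta-\epsilon$, and $e = 2\gamma + \mathrm{exp}[-4\kappa(\delta-\zeta)^2]$. The only cosmetic difference is that you apply Hoeffding conditionally on $\mathbf{Z}$ using the conditional mean bound, whereas the paper first uses stochastic monotonicity to reduce to the worst case $p_1=p_2=0.5+\delta$ and then applies Hoeffding to a Binomial$(2\kappa, 0.5+\delta)$; both yield the identical bound.
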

\begin{proof}
Let 
\begin{equation}
p_1 \equiv \mathbb{P}\left[\hat{f}\left(\tilde{X}, \mathbf{Z}\right) = 1 \given \mathbf{Z}\right]
\end{equation}
and
\begin{equation}
p_2 \equiv \mathbb{P}\left[\hat{f}\left(\tilde{Y}, \mathbf{Z}\right) = 0 \given \mathbf{Z}\right].
\end{equation}
From the fact that $\hat{f}$ is $(\kappa, \delta, \gamma)$-predictive under $\mathcal{F}$ and $\mathcal{G}$, it immediately follows that $\mathbb{P}(p_1 < 0.5 + \delta) < \gamma$ and $\mathbb{P}(p_2 < 0.5 + \delta) < \gamma$.  Let $E_1$ denote the event $\{p_1 < 0.5 + \delta\} \cup \{p_2 < 0.5 + \delta\}$ and let $E_2$ denote the event $\{p_1 \ge 0.5 + \delta\} \cap \{p_2 \ge 0.5 + \delta\}$.  Then
\begin{align}
\mathbb{P}\left[A \le 0.5 + \zeta \right] &= 
  \mathbb{P}\left[A  \le 0.5 + \zeta \given E_1 \right]\mathbb{P}\left[ E_1 \right] + \mathbb{P}\left[A  \le 0.5 + \zeta \given E_2 \right]\mathbb{P}\left[ E_2 \right] \\
&\le  \mathbb{P}\left[ E_1 \right] + \mathbb{P}\left[A  \le 0.5 + \zeta \given E_2 \right] \\
 &< 2\gamma + \mathbb{P}\left[A  \le 0.5 + \zeta \given E_2 \right]. \label{prebinomial}
\end{align}

Note that $\hat{F}_{t_X} = \hat{f}\left(Z_{t_X}; \mathbf{Z}\right)$ is an indicator for whether observation $t_X$ in the test set is correctly classified, and also that, conditional on $p_1$, $\hat{F}_{t_X}$ is Bernoulli($p_1$).  Similarly, $\hat{F}_{\mathbf{t}_Y}$ is an indicator for whether observation $t_Y$ in the test set is \textit{incorrectly} classified, and that, conditional on $p_2$, $\hat{F}_{\mathbf{t}_Y}$ is Bernoulli($1-p_2$).  Note also that, conditional on $p_1$ and $p_2$, all of the $\hat{F}_{t_X}$ and $\hat{F}_{t_Y}$ are mutually independent.  Thus, conditional on $p_1$ and $p_2$, the distribution of 
\begin{equation}
2\kappa A = \sum_{t_X} \hat{f}(Z_{t_X}; \mathbf{Z}) + \sum_{t_Y} \left[1 - \hat{f}(Z_{t_Y}; \mathbf{Z})\right]
\end{equation}
 is equal to the distribution of the sum of two independent Binomial random variables, one with parameters $(\kappa, p_1)$ and the other with parameters $(\kappa, p_2)$.  
We therefore deduce that 
\begin{equation}
\mathbb{P}\left[2\kappa A  \le \kappa + 2\kappa\zeta \given \{p_1 \ge 0.5 + \delta\} \cap \{p_2 \ge 0.5 + \delta\} \right] \le \mathbb{P}\left[2\kappa A \le \kappa + 2\kappa\zeta \given p_1 = p_2 = 0.5 + \delta \right]
\end{equation}
which allows us to simplify (\ref{prebinomial}) to
\begin{align}
\mathbb{P}\left[A \le 0.5 + \zeta \right] &< 2\gamma + \mathbb{P}\left[A \le 0.5 + \zeta \given p_1 = p_2 = 0.5 + \delta \right]  \label{prehoeffding} .
\end{align}
Now, conditional on $p_1 = p_2 = 0.5 + \delta$, the distribution of $2\kappa A$ is Binomial($2\kappa,0.5+\delta$).  Using Hoeffding's bound for the binomial distribution, we simplify (\ref{prehoeffding}) to
\begin{align}
\mathbb{P}\left[A \le 0.5 + \zeta \right] &< 2\gamma + \mathrm{exp}\left[-4\kappa(\delta-\zeta)^2\right]. 
\end{align}
We are now in a position to bound $\mathbb{P}\left[S < 0.5 + \epsilon \right]$.  Observe that 
\begin{equation}
S = \frac{1}{l!m!}\sum_{ij} A^{ij1}
\end{equation}
and note also that $A^{ij1} \,{\buildrel d \over =}\, A$ for all $i,j$.  We may therefore apply Lemma \ref{markovlemma} with 
\begin{align}
c &= 0.5 + \zeta \\
e &= 2\gamma + \mathrm{exp}\left[-4\kappa(\delta-\zeta)^2\right] \\
d &= \zeta - \epsilon
\end{align}
to obtain
\begin{align}
\mathbb{P}\left[S \le 0.5 + \epsilon \right] &< \frac{\left\{2\gamma + \mathrm{exp}\left[-4\kappa(\delta-\zeta)^2\right]\right\}(0.5 + \zeta)}{\zeta - \epsilon}\\
&< \frac{2\gamma + \mathrm{exp}\left[-4\kappa(\delta-\zeta)^2\right]}{\zeta - \epsilon}.
\end{align}
\end{proof}

\begin{corollary} \label{corosdistrosm}
Assume that $\mathcal{F} \ne \mathcal{G}$ and that $\hat{f}$ is $(\kappa, \delta, \gamma)$-predictive under $\mathcal{F}$ and $\mathcal{G}$, with $\delta < 0.5$.  Then 
\begin{equation}
\mathbb{P}\left[S \le 0.5 + \delta/4 \right] < \frac{8\gamma + 4\mathrm{exp}\left(-\kappa\delta^2\right)}{\delta}.
\end{equation}
\end{corollary}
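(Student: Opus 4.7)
The plan is to obtain this corollary as a direct specialization of Proposition~\ref{sdistrosm}. That proposition already delivers a bound on $\mathbb{P}[S \le 0.5 + \epsilon]$ that is valid for any positive triple $\epsilon < \zeta < \delta$; all that remains is to pick $(\epsilon, \zeta)$ so that the right-hand side collapses to the form stated in the corollary.

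First I would set $\epsilon = \delta/4$, which matches the left-hand side of the corollary. Second, to make the Hoeffding exponent clean, I would look for a $\zeta$ that makes $4(\delta-\zeta)^2$ a simple multiple of $\delta^2$; the natural choice $\zeta = \delta/2$ gives $4\kappa(\delta-\zeta)^2 = \kappa\delta^2$, and as a bonus it makes the denominator $\zeta-\epsilon = \delta/4$. Plugging these into Proposition~\ref{sdistrosm} yields
\[
\mathbb{P}\!\left[S \le 0.5 + \delta/4\right] < \frac{2\gamma + \mathrm{exp}(-\kappa\delta^2)}{\delta/4} = \frac{8\gamma + 4\,\mathrm{exp}(-\kappa\delta^2)}{\delta},
\]
which is the claimed inequality.

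The only thing left is to verify the hypotheses of Proposition~\ref{sdistrosm}. The assumptions $\mathcal{F} \ne \mathcal{G}$, $\hat{f}$ being $(\kappa,\delta,\gamma)$-predictive, and $\delta < 0.5$ are imported directly from the corollary; and the ordering $0 < \delta/4 < \delta/2 < \delta$ is immediate for any $\delta > 0$. There is essentially no obstacle here: the substantive work — controlling $\mathbb{P}[A \le 0.5 + \zeta]$ via Hoeffding's inequality for a sum of two independent binomials conditional on the training-set predictions $p_1, p_2$, and then transferring that bound to $S = \frac{1}{l!m!}\sum_{i,j} A^{(ij1)}$ via the Markov-style averaging of Lemma~\ref{markovlemma} — has already been done upstream. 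The corollary is simply the tidy specialization of that machinery at $\epsilon = \delta/4$, $\zeta = \delta/2$.
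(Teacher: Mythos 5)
Your proposal is correct and is exactly the paper's own proof: the paper also obtains the corollary by setting $\epsilon = \delta/4$ and $\zeta = \delta/2$ in Proposition~\ref{sdistrosm}. The arithmetic checks out ($4\kappa(\delta-\delta/2)^2 = \kappa\delta^2$ and $\zeta - \epsilon = \delta/4$), so nothing further is needed.
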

\begin{proof}
Let $\epsilon = \delta/4$ and $\zeta = \delta/2$ in Proposition \ref{sdistrosm}.
\end{proof}


\subsubsection{Bounding $\#\{i : S^{(i)} \ge 0.5 + \epsilon\}$}

We begin with some additional notation:
\begin{itemize}
\item Let $A^{(i)} \equiv a\left(Z^{(i)}\right)$
\item Let $\displaystyle \phi^{(i)}_X \equiv \sum_{t_X} \hat{F}^{(i)}_{t_X}$
\item Let $\displaystyle \phi^{(i)}_Y \equiv \sum_{t_Y} \hat{F}^{(i)}_{t_Y}$
\item Let $\phi^{(i)} = \phi^{(i)}_X + \phi^{(i)}_Y$
\end{itemize}
Note that with these definitions
\begin{align}
A^{(i)} &= \frac{1}{2\kappa} \left(\phi^{(i)}_X + \kappa - \phi^{(i)}_Y \right) \\
 &= 0.5 + \left(\phi^{(i)}_X - \phi^{(i)}_Y \right)/(2\kappa) \label{simpleai} 
\end{align}

\begin{proposition} \label{aijkbound}  Let $\epsilon$ be some positive real number.  Then
\begin{equation}
\frac{\#\{(i,j,k) : A^{(ijk)} \ge 0.5 + \epsilon\}}{l!m!n!} \le \mathrm{exp}\left(-2\kappa\epsilon^2\right).
\end{equation}
\end{proposition}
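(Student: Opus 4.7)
The plan is to reinterpret the left-hand side as a probability under a uniformly random permutation, reduce via conditioning to a hypergeometric random variable on the test set, and then apply Hoeffding's inequality for sampling without replacement.

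First, for any fixed $Z$, I would show that the map $(i,j,k) \mapsto \Pi_i^{(X)}\Pi_j^{(Y)}\Pi_k$ is $(l!\,m!)$-to-one onto $S_n$: for any target permutation $\tau$ and any pair $(i,j)$, the equation $\Pi_i^{(X)}\Pi_j^{(Y)}\Pi_k = \tau$ uniquely determines $\Pi_k$ and therefore $k$. Hence the left-hand side equals $\mathbb{P}[A^{(\sigma Z)} \ge 0.5 + \epsilon]$ where $\sigma$ is uniformly distributed on $S_n$.

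Next, I would decompose the action of $\sigma$ on the rows of $Z$ into three independent uniform choices: (a) which $n-2\kappa$ rows form the training set; (b) which $l-\kappa$ of those are placed in the ``1''-label block and which $m-\kappa$ in the ``0''-label block; and (c) which $\kappa$ of the remaining $2\kappa$ test rows are placed in $T_X$ (labeled ``1'') versus $T_Y$ (labeled ``0''). Condition on (a) and (b). Then the trained classifier $\hat{f}(\,\cdot\,;\mathbf{Z}^{(\sigma)})$ and its prediction $v_t \in \{0,1\}$ on each test row $t$ are fixed. Writing $V \equiv \sum_{t \in T_X \cup T_Y} v_t$ and $V_X \equiv \sum_{t \in T_X} v_t$, a short calculation from the definition of $a$ gives
\[
A^{(\sigma Z)} \;=\; \frac{1}{2} + \frac{2V_X - V}{2\kappa},
\]
so the event $\{A^{(\sigma Z)} \ge 0.5 + \epsilon\}$ coincides with the event $\{V_X \ge V/2 + \kappa\epsilon\}$.

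Given (a) and (b), step (c) makes $V_X$ a hypergeometric random variable: the number of ones in a uniformly random $\kappa$-subset drawn without replacement from a population of $2\kappa$ values in $\{0,1\}$ containing $V$ ones, with mean $V/2$. Hoeffding's inequality for sampling without replacement then yields
\[
\mathbb{P}\bigl[\,V_X \ge V/2 + \kappa\epsilon \,\bigm|\, (a), (b)\,\bigr] \;\le\; \exp(-2\kappa\epsilon^2).
\]
Because this bound is uniform in the conditioning values, taking expectations over (a) and (b) establishes the claim. The main technical point is recognizing that the test-set split (c) produces a hypergeometric random variable, after which the result follows immediately from the standard Hoeffding bound; the combinatorial identity in the first step and the algebraic rewriting of $A$ in the second are routine.
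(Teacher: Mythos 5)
Your proposal is correct and is essentially the paper's own argument in probabilistic dress: the paper's reduction of the $(i,j,k)$-count to a count over $k$ alone, its partition of the permutations into blocks on which the training rows are fixed, and its hypergeometric/Chv\'atal tail bound correspond exactly to your $(l!\,m!)$-to-one map, your conditioning on the training-set arrangement, and your Hoeffding bound for sampling without replacement. The only nicety is that your conditioning should fix the full ordering of the training rows (not just the block membership) so that the trained classifier is literally determined, but this is a presentational detail rather than a gap.
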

\begin{proof}
First note that the multiset of all $\Pi_{ijk}$ is simply equal to the set of all $\Pi_i$, but with each element having multiplicity $l!m!$.  Thus
\begin{equation}
\frac{\#\{(i,j,k) : A^{(ijk)} \ge 0.5 + \epsilon\}}{l!m!n!} = \frac{\#\{i : A^{(i)} \ge 0.5 + \epsilon\}}{n!},  
\end{equation}
so we only need to count the number of $A^{(i)}$ that are greater than or equal to $0.5 + \epsilon$.  

Let us now partition the $n!$ permutation matrices $\{\Pi_i\}$ into $n!/(2\kappa)!$ disjoint subsets, which we denote $H_1$, $H_2$, ..., $H_{n!/(2\kappa)!}$, each of which contains exactly $(2\kappa)!$ matrices.  We choose the partitions such that, within each partition, the rows corresponding to the training set are fixed.  More formally, if we let $h_{ij}$ index of the elements of $H_i$, i.e.
\begin{equation}
H_{i} = \{\Pi_{h_{i1}}, ..., \Pi_{h_{i,(2\kappa)!}}\},
\end{equation}
and if we let $\left(\Pi_{h_{ij}}\right)_\mathbf{t}$ denote an arbitrary row $\mathbf{t} \in \mathbf{T}$ of matrix $\Pi_{h_{ij}}$, then our partitions are defined so that
\begin{equation}
 \left(\Pi_{h_{ij}}\right)_\mathbf{t} = \left(\Pi_{h_{ij'}}\right)_\mathbf{t}
\end{equation}
for all $i$, $j$, $j'$, and $\mathbf{t}$.  By way of contrast, note that the rows of the test set do vary.  If $j \ne j'$ and $t \in T$, then
\begin{equation}
 \left(\Pi_{h_{ij}}\right)_{t} \ne \left(\Pi_{h_{ij'}}\right)_{t}.
\end{equation}
To reduce notational clutter, in the discussion that follows we will let $H$ denote an arbitrary partition $H_i$, and we will let $h$ index an arbitrary element of $H$.   

Now, because the training set is fixed within the partition, the function $\hat{f}(\cdot, \mathbf{Z}^{(h)})$ is identical for all $h$, and in particular the sum
\begin{equation}
\phi^{(h)} = \sum_{t} \hat{f}\left(Z_{t}, \mathbf{Z}^{(h)}\right) 
\end{equation}
is the same for all $h$.  We therefore denote this quantity $\phi^{(H)}$.  Making use of (\ref{simpleai}), we see that
\begin{align}
A^{(h)} \ge 0.5 + \epsilon &\iff 0.5 + \left(\phi^{(h)}_X - \phi^{(h)}_Y \right)/(2\kappa) \ge 0.5 + \epsilon \\
&\iff \phi^{(h)}_X - \phi^{(h)}_Y  \ge 2\kappa\epsilon \\
&\iff 2\phi^{(h)}_X - \phi^{(H)}  \ge 2\kappa\epsilon \\
&\iff \phi^{(h)}_X \ge \kappa\epsilon  + \phi^{(H)}/2 
\end{align}
and thus
\begin{equation}
\frac{\#\{h : A^{(h)} \ge 0.5 + \epsilon\}}{(2\kappa)!} = \frac{\#\{h : \phi^{(h)}_X \ge \phi^{(H)}/2 + \kappa\epsilon \}}{(2\kappa)!}.  
\end{equation}
We are therefore interested in the proportion of $\phi^{(h)}_X$ that are greater than or equal to $C \equiv \phi^{(H)}/2 + \kappa\epsilon$.  

Observe that for any particular $h$, exactly $\phi^{(H)}$ of the 
$\hat{F}^{(h)}_t$ 
are equal to 1, and the remaining $2\kappa - \phi^{(H)}$ of the $\hat{F}^{(h)}_t$ are equal to 0.  The value of $\phi^{(h)}_X$ is equal to the number of ``1''s that have been allocated to the $\kappa$ rows of $T_X$.  Since the elements of $H$ include all possible shufflings within the test set of these ``1''s and ``0''s, the proportion 
\begin{equation}
\frac{\#\{h : \phi^{(h)}_X = k\}}{(2\kappa)!}  
\end{equation}
follows a hypergeometric distribution over $k$, with parameters $(2\kappa, \phi^{(H)}, \kappa)$.  ($2\kappa$ is the population size, $\phi^{(H)}$ is the number of ``successes'' within the population, and $\kappa$ is the sample size.)  

Therefore, using the results of \cite{chvatal1979tail}, 
\begin{align}
\frac{\#\{h : \phi^{(h)}_X \ge \phi^{(H)}/2 + \kappa\epsilon \}}{(2\kappa)!} &\le \mathrm{exp}\left\{-2\kappa\left[\left(\frac{\kappa-1}{2\kappa}\right)\phi^{(H)} + \epsilon \right]^2\right\} \\
&\le \mathrm{exp}\left(-2\kappa\epsilon^2\right). 
\end{align}
Thus, equivalently,
\begin{equation}
\frac{\#\{h : A^{(h)} \ge 0.5 + \epsilon\}}{(2\kappa)!} \le \mathrm{exp}\left(-2\kappa\epsilon^2\right).
\end{equation}
Since this inequality holds within each partition $H$, it also holds for all $A^{(i)}$, i.e.
\begin{equation}
\frac{\#\{i : A^{(i)} \ge 0.5 + \epsilon\}}{n!} \le \mathrm{exp}\left(-2\kappa\epsilon^2\right).
\end{equation}
\end{proof}

The following lemma uses its own notation.
\begin{lemma} \label{partitionbound}
Let $U_{m\times n} = \{u_{ij} \}$ be a $m\times n$ matrix such that $u_{ij} \ge 0$ for all $i,j$.  Let $v_j \equiv \frac{1}{m}\sum_{i=1}^m u_{ij}$.  Let $G(t)$ be some function such that
\begin{equation}
\frac{\#\{(i,j) : u_{ij} > t \}}{mn} \le G(t)
\end{equation}
for all $t$.  Let $\epsilon$ be some real number such that $0 < \epsilon < 1$. Then
\begin{equation}
\frac{\#\left\{j : v_j > \int_0^{\infty} \mathrm{min}\left\{G(t)/\epsilon, 1 \right\} dt \right\}}{n} \le \epsilon
\end{equation}
\end{lemma}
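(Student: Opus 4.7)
The plan is to use a layer-cake representation of $v_j$ together with a proof by contradiction. For each $t \geq 0$, define the column-wise tail proportion $p_j(t) \equiv \frac{1}{m}\#\{i : u_{ij} > t\}$, so that $p_j(t) \in [0,1]$ and $v_j = \int_0^\infty p_j(t)\,dt$ by Fubini. The hypothesis translates into the averaged bound $\frac{1}{n}\sum_{j=1}^n p_j(t) \leq G(t)$ for all $t$. Let $T^{*} \equiv \int_0^\infty \min\{G(t)/\epsilon,\,1\}\,dt$ and let $J^{*} \equiv \{j : v_j > T^{*}\}$.

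Suppose for contradiction that $|J^{*}| > \epsilon n$. The lower bound is immediate: $\sum_{j \in J^{*}} v_j > |J^{*}|\, T^{*}$. For the upper bound, use the two-sided constraint on $p_j(t)$, namely $\sum_{j \in J^{*}} p_j(t) \leq |J^{*}|$ (from $p_j(t)\leq 1$) and $\sum_{j \in J^{*}} p_j(t) \leq \sum_j p_j(t) \leq nG(t)$ (from the hypothesis). Taking the minimum pointwise and integrating gives
\[
\sum_{j \in J^{*}} v_j \;\leq\; \int_0^\infty \min\bigl\{nG(t),\,|J^{*}|\bigr\}\,dt \;=\; |J^{*}|\int_0^\infty \min\!\left\{\frac{nG(t)}{|J^{*}|},\,1\right\}dt.
\]

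Now compare the two bounds: after dividing by $|J^{*}|$, we would need $T^{*} < \int_0^\infty \min\{nG(t)/|J^{*}|,\,1\}\,dt$. But the assumption $|J^{*}| > \epsilon n$ gives $n/|J^{*}| < 1/\epsilon$, so pointwise $\min\{nG(t)/|J^{*}|,\,1\} \leq \min\{G(t)/\epsilon,\,1\}$, and integrating yields $\int_0^\infty \min\{nG(t)/|J^{*}|,\,1\}\,dt \leq T^{*}$. This contradicts the previous strict inequality, and hence $|J^{*}|/n \leq \epsilon$ as claimed.

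I expect the only subtlety to be the bookkeeping around the truncation: one has to use the fact that $p_j(t)$ is trivially bounded by $1$ (otherwise the naive Markov bound $\#\{j : v_j > M\}/n \leq (\int G)/M$ is too weak when $G(t)/\epsilon \geq 1$ on a set of large measure). Rescaling $\min\{nG(t),|J^{*}|\}$ by $|J^{*}|$ to recover an integrand of the same shape as the threshold definition, and then comparing via monotonicity in $1/|J^{*}|$ versus $1/(\epsilon n)$, is the step that makes the whole argument close.
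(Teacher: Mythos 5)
Your proof is correct, and it takes a genuinely different route from the paper's. The paper sorts all $mn$ entries into reverse order statistics $u_{(1)} \ge \dots \ge u_{(mn)}$, bounds the partial averages $u^{(k)}$ of the top $k$ entries via the layer-cake representation, and then relates the top $q = \lceil k/m\rceil$ column averages to $u^{(k)}$ by a pigeonhole argument, finishing with some floor/ceiling bookkeeping. You instead argue by contradiction directly on the exceptional set $J^{*}$: the pointwise two-sided bound $\sum_{j\in J^{*}} p_j(t) \le \min\{nG(t),\,|J^{*}|\}$ (capping by $|J^{*}|$ because each $p_j \le 1$, and by $nG(t)$ from the hypothesis), integrated in $t$, is played off against the trivial lower bound $\sum_{j\in J^{*}} v_j > |J^{*}|\,T^{*}$, and the monotonicity of $\min\{cG(t),1\}$ in $c$ converts $|J^{*}| > \epsilon n$ into the contradiction. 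Both arguments hinge on the same truncation idea --- the cap at $1$ inside the integral is exactly what rescues the naive Markov bound when $G(t)/\epsilon \ge 1$ on a large set, as you note --- but your version avoids order statistics and the $\lceil k/m\rceil$ arithmetic entirely, and is shorter and cleaner; the paper's version produces the slightly more general intermediate fact $u^{(k)} \le \int_0^\infty \min\{\tfrac{mn}{k}G(t),1\}\,dt$ along the way. The only minor points worth making explicit in a write-up are that under the contradiction hypothesis $|J^{*}| \ge 1$ forces $T^{*} < \infty$ (so the strict lower bound is meaningful), and that the interchange of sum and integral for the nonnegative step functions $p_j(t)$ is justified by Tonelli; neither is a gap.
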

\begin{proof}
Let $u_{(i)}$ denote the ``reverse order statistics'' of $\{u_{ij} \}$, i.e. 
\begin{equation}
u_{(1)} \ge u_{(2)} \ge u_{(3)} \ge ... \ge u_{(mn)} 
\end{equation}
and let $u^{(k)}$ denote the partial averages
\begin{equation}
u^{(k)} = \frac{1}{k}\sum_{i=1}^{k} u_{(i)}.
\end{equation}
Let 
\begin{equation}
F(t) = \frac{1}{mn}\sum_{i=1}^{mn} I(u_{(i)} \le t)
\end{equation}
and let
\begin{equation}
F^{(k)}(t) = \frac{1}{k}\sum_{i=1}^{k} I(u_{(i)} \le t)
\end{equation}
and note that
\begin{align}
u^{(k)} &= \int_0^{\infty} \left[1-F^{(k)}(t)\right] dt \\
&= \int_0^{\infty} \mathrm{min}\left\{\frac{mn}{k}\left[1-F(t)\right], 1 \right\} dt \\
&\le \int_0^{\infty} \mathrm{min}\left\{\frac{mn}{k}G(t), 1 \right\} dt.
\end{align}
Fix some value for $k$ and let $q = \left\lceil{\frac{k}{m}}\right\rceil$.  Note $k \le qm$.  Assume without loss of generality that $v_1 \ge v_2 \ge ... \ge v_n$.  Then
\begin{align}
\frac{1}{q} \sum_{j=1}^q v_j &= \frac{1}{mq} \sum_{i=1}^m\sum_{j=1}^q u_{ij}\\
&\le \frac{1}{mq} \sum_{i=1}^{mq} u_{(i)}\\
&\le u^{(k)}
\end{align}
and therefore there exists a $j \le q$ such that $v_j \le u^{(k)}$.  This implies that 
\begin{equation}
\#\{j : v_j > u^{(k)} \} \le q-1
\end{equation}
which further implies 
\begin{equation}
\frac{\#\left\{j : v_j > \int_0^{\infty} \mathrm{min}\left\{\frac{mn}{k}G(t), 1 \right\} dt \right\}}{n} \le \frac{q-1}{n}. \label{almostthere}
\end{equation}
But 
\begin{equation}
\frac{q-1}{n}  = \frac{\left\lceil{\frac{k}{m}}\right\rceil-1}{n} \le \frac{1}{n} \left\lfloor{\frac{k}{m}}\right\rfloor
\end{equation}
so plugging in to (\ref{almostthere}) gives
\begin{equation}
\frac{\#\left\{j : v_j > \int_0^{\infty} \mathrm{min}\left\{\frac{mn}{k}G(t), 1 \right\} dt \right\}}{n} \le \frac{1}{n} \left\lfloor{\frac{k}{m}}\right\rfloor
\end{equation}
from which it follows that
\begin{equation}
\frac{\#\left\{j : v_j > \int_0^{\infty} \mathrm{min}\left\{G(t)/\epsilon, 1 \right\} dt \right\}}{n} \le \epsilon
\end{equation}
for any $0 < \epsilon < 1$.
\end{proof}

\begin{proposition} \label{siboundsm}  
Let $\xi$ be some real number such that $0 < \xi < 1$.  Then 
\begin{equation}
\frac{\#\{k : S^{(k)} > 0.5 + \xi\}}{n!} < \frac{1}{\xi}\left(\frac{1+\sqrt{\pi}}{2\sqrt{2}}\right) \sqrt{\frac{1}{\kappa}}.
\end{equation}
\end{proposition}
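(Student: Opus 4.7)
The plan is to apply Lemma \ref{partitionbound} with its rows indexed by the pairs $(i,j) \in \{1,\ldots,l!\}\times\{1,\ldots,m!\}$, its columns indexed by $k \in \{1,\ldots,n!\}$, and entries $u_{(i,j),k} = A^{(ijk)}$. Because $A^{(ijk)} \in [0,1]$ the nonnegativity hypothesis of the lemma holds, and the column average is exactly $v_k = \frac{1}{l!m!}\sum_{i,j}A^{(ijk)} = S^{(k)}$, so bounding the fraction of columns with $v_k > 0.5+\xi$ is precisely what the proposition asks. Proposition \ref{aijkbound} supplies the envelope $G(t) = 1$ for $t \le 0.5$ and $G(t) = \exp(-2\kappa(t-0.5)^2)$ for $t > 0.5$ (the first is the trivial bound; the second takes $\epsilon = t - 0.5$ in Proposition \ref{aijkbound}).

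Next, I would calibrate the lemma's free parameter by taking $\epsilon = \frac{1+\sqrt{\pi}}{2\sqrt{2}\,\xi\sqrt{\kappa}}$, which is exactly the proposition's right-hand side; if this exceeds $1$ the conclusion is trivial since the LHS is at most $1$, so we may assume $\epsilon < 1$. Writing $u^\ast \equiv \sqrt{\ln(1/\epsilon)/(2\kappa)}$, so that $\exp(-2\kappa(u^\ast)^2) = \epsilon$, one has $\min\{G(t)/\epsilon,1\} = 1$ on $[0, 0.5+u^\ast]$ and $\min\{G(t)/\epsilon,1\} = \exp(-2\kappa(t-0.5)^2)/\epsilon$ beyond. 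The substitution $w = t - 0.5$ then yields
\begin{equation*}
\int_0^\infty \min\{G(t)/\epsilon, 1\}\,dt = 0.5 + u^\ast + \frac{1}{\epsilon}\int_{u^\ast}^\infty \exp(-2\kappa w^2)\,dw.
\end{equation*}
Bounding the Gaussian tail crudely by the full half-line integral, $\int_{u^\ast}^\infty \exp(-2\kappa w^2)\,dw \le \int_0^\infty \exp(-2\kappa w^2)\,dw = \frac{\sqrt{\pi}}{2\sqrt{2\kappa}}$, and inserting the chosen $\epsilon$, the third term collapses to exactly $\frac{\xi\sqrt{\pi}}{1+\sqrt{\pi}}$.

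All that remains is to verify $u^\ast \le \xi/(1+\sqrt{\pi})$, which is equivalent to $(1+\sqrt{\pi})^2 \ln(1/\epsilon) \le 2\kappa\xi^2$. Writing $y = 1/\epsilon = 2\sqrt{2}\,\xi\sqrt{\kappa}/(1+\sqrt{\pi})$, one checks $2\kappa\xi^2/(1+\sqrt{\pi})^2 = y^2/4$, so the inequality reduces to the elementary fact $\ln y \le y^2/4$, which holds for every $y > 0$ (the minimum of $y^2/4 - \ln y$ equals $(1-\ln 2)/2 > 0$, attained at $y = \sqrt{2}$). Since the threshold in Lemma \ref{partitionbound} is then at most $0.5 + \xi$, the lemma immediately delivers the claimed bound. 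The main non-routine step is really just spotting the correct calibration of $\epsilon$; the constant $(1+\sqrt{\pi})/(2\sqrt{2})$ is not mysterious once one notices it arises exactly from balancing the $u^\ast$ term against the tail contribution $\frac{\xi\sqrt{\pi}}{1+\sqrt{\pi}}$, after which the verification reduces to a universal one-variable inequality.
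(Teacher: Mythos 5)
Your proposal is correct and follows essentially the same route as the paper's proof: the same application of Lemma \ref{partitionbound} to the $l!m!\times n!$ array of $A^{(ijk)}$ with the envelope $G$ from Proposition \ref{aijkbound}, the same cutoff $u^\ast=\sqrt{\ln(1/\epsilon)/(2\kappa)}$ (the paper's $d$), the same crude Gaussian-tail bound, and the same elementary inequality $\ln y\le y^2/4$ (which the paper invokes as $\sqrt{-\log\epsilon}<1/(2\epsilon)$). The only cosmetic difference is that you fix $\epsilon$ at the target value up front and verify the threshold is at most $0.5+\xi$, whereas the paper bounds the integral for general $\epsilon$ first and then calibrates.
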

\begin{proof}
Recall that
\begin{equation}
S^{(k)} = \frac{1}{l!m!}\sum_{i,j}A^{(ijk)}
\end{equation}
and recall from Proposition \ref{aijkbound} that for $\epsilon > 0$,
\begin{equation}
\frac{\#\{(i,j,k) : A^{(ijk)} \ge 0.5 + \epsilon\}}{l!m!n!} \le \mathrm{exp}\left(-2\kappa\epsilon^2\right).
\end{equation}
If we construct a $l!m!\times n!$ matrix $U$ such that the $k\thth$ column of $U$ contains the values of all $A^{(ijk)}$ (i.e. for all $(i,j)$, with $k$ held fixed), and set 
\begin{equation}
G(t) =  \begin{cases} 1 & \mbox{if } t < 0.5 \\ \mathrm{exp}\left[-2\kappa(t-0.5)^2\right] & \mbox{if } t \ge 0.5 \end{cases} 
\end{equation}
 then we may apply the results of Lemma \ref{partitionbound} to get
\begin{equation}
\frac{\#\left\{k : S^{(k)} > \int_0^{\infty}\mathrm{min}\left\{G(t)/\epsilon,1 \right\} dt\right\}}{n!} \le \epsilon.
\end{equation}
Assume $\epsilon < 1$.  Now let 
\begin{equation}
d \equiv \sqrt{\frac{-\mathrm{log}(\epsilon)}{2\kappa}}
\end{equation}
and note that
\begin{align}
\int_0^{\infty} \mathrm{min}\left\{G(t)/\epsilon, 1 \right\} dt &= 
\int_{0}^{0.5+d} \mathrm{min}\left\{G(t)/\epsilon, 1 \right\} dt +
\int_{0.5+d}^{\infty} \mathrm{min}\left\{G(t)/\epsilon, 1 \right\} dt \\
&= \int_{0}^{0.5+d} dt +
\int_{0.5+d}^{\infty} \frac{\mathrm{exp}\left[-2\kappa(t-0.5)^2\right]}{\epsilon} dt \\
&= 0.5 + d + \frac{1}{\epsilon}\int_{0.5+d}^{\infty} \mathrm{exp}\left[-2\kappa(t-0.5)^2\right] dt \\
&< 0.5 + d + \frac{1}{\epsilon}\int_{0.5}^{\infty} \mathrm{exp}\left[-2\kappa(t-0.5)^2\right] dt \\
&= 0.5 + d + \frac{1}{2 \epsilon} \sqrt{\frac{\pi}{2\kappa}}\\
&= 0.5 + \left(\sqrt{-\mathrm{log}(\epsilon)} + \frac{\sqrt{\pi}}{2 \epsilon}\right) \sqrt{\frac{1}{2\kappa}} \\ 
&< 0.5 + \left(\frac{1}{2\epsilon} + \frac{\sqrt{\pi}}{2 \epsilon}\right) \sqrt{\frac{1}{2\kappa}} \\ 
&= 0.5 + \frac{1}{\epsilon}\left(\frac{1+\sqrt{\pi}}{2\sqrt{2}}\right) \sqrt{\frac{1}{\kappa}} 
\end{align}
and thus
\begin{equation}
\frac{\#\left\{k : S^{(k)} >  0.5 +  \frac{1}{\epsilon}\left(\frac{1+\sqrt{\pi}}{2\sqrt{2}}\right) \sqrt{\frac{1}{\kappa}}
\right\}}{n!} \le \epsilon. \label{epsilonlessthanone}
\end{equation}
Note that (\ref{epsilonlessthanone}) also trivially holds if $\epsilon \ge 1$ and is therefore true for any $\epsilon > 0$.
If we now set $\epsilon$ to be
\begin{equation}
\epsilon = \frac{1}{\xi}\left(\frac{1+\sqrt{\pi}}{2\sqrt{2}}\right) \sqrt{\frac{1}{\kappa}} 
\end{equation}
we find
\begin{equation}
\frac{\#\left\{k : S^{(k)} >  0.5 + \xi \right\}}{n!} \le \frac{1}{\xi}\left(\frac{1+\sqrt{\pi}}{2\sqrt{2}}\right) \sqrt{\frac{1}{\kappa}}.
\end{equation}
\end{proof}

\end{document}